\newcommand{\ZZ}{\mathbb{Z}}
\newcommand{\NN}{\mathbb{N}}
\newcommand{\F}{\mathcal{F}}
\newcommand{\card}[1]{|#1|}
\newcommand{\vect}[1]{\mathbf{#1}}
\renewcommand{\vec}[1]{\vect{#1}}
\newcommand{\cne}{{\NE}}
\newcommand{\cnp}{{\NP}}
\newcommand{\cp}{{\P}}
\newcommand{\csharp}{{\#\P}}
\newcommand{\unaire}[1]{un(#1)}
\newcommand{\nspace}[1]{{\NSPACE(#1)}}
\newcommand{\ntime}[1]{{\NTIME(#1)}}
\newcommand{\ntimeun}[1]{{\NTIME_1(#1)}}
\newcommand{\ntimebin}[1]{{\NTIME_2(#1)}}
\newcommand{\strongper}[1]{{\mathfrak{P}_{#1}}}
\newcommand{\oneper}[1]{{\mathfrak{P}^1_{#1}}}
\newcommand{\horper}[1]{{\mathfrak{P}^h_{#1}}}
\newcommand{\nbper}[2]{{\mathfrak{N}_{#1}\ifthenelse{\equal{#2}{}}{}{(#2)}}}
\newcommand{\recper}[1]{{\mathfrak{P}^r_{#1}}}
\newtheorem{theorem}{Theorem}[section]
\newtheorem{lemma}{Lemma}[section]
\newcommand{\includepicture}[1]{\includegraphics{figures/images/#1.pdf}}
\newcommand{\tuilesMTbgcolor}{blue!20}
\newcommand{\etat}{$s$}
\newcommand{\ruban}{$a$}
\newcommand{\etatN}{$s'$}
\newcommand{\rubanN}{$a'$}
\tikzstyle{statestyle} = [draw,circle,inner sep=1pt]
\tikzstyle{rubanstyle} = [draw=black!30,thick,rectangle,fill=white]
\tikzstyle{transitionstyle} = [-latex]
\newcommand{\mtTransitionG}[1][false]{
\filldraw[fill=\tuilesMTbgcolor] (0,0) rectangle (4,4);
\node[statestyle] (etat) at (1,2) {\etat};
\draw[transitionstyle] (etat) -- (1,3) -- (0,3);
\draw[transitionstyle] (1,0) -- (etat);
%\draw (1,3) -- (etat);
\node[rubanstyle] (ruban) at (3,2) {\ruban};

\ifthenelse{\equal{#1}{false}}{}{
\node[left] at (0,3) (hautetat) {\etatN};
\node[above] at (2,4) (hautruban) {\rubanN};
\node[below] at (2,0) (basruban) {\ruban};
\node[below] at (1,0) (basetat) {\etat} ;
%\node[left] at (0,2) (gaucheetat) {\etat};
}
}
\newcommand{\mtTransitionR}[1][false]{
\filldraw[fill=\tuilesMTbgcolor] (0,0) rectangle (4,4);
\node[statestyle] (etat) at (1,2) {\etat};
\draw[transitionstyle] (1,0) -- (etat);
\draw[transitionstyle] (etat) -- (1,4);
%\draw (1,3) -- (etat);
\node[rubanstyle] (ruban) at (3,2) {\ruban};

\ifthenelse{\equal{#1}{false}}{}{
\node[above] at (1,4) (hautetat) {\etatN};
\node[above] at (2,4) (hautruban) {\rubanN};
\node[below] at (2,0) (basruban) {\ruban};
\node[below] at (1,0) (basetat) {\etat} ;
%\node[left] at (0,2) (gaucheetat) {\etat};
}
}
\newcommand{\mtTransitionD}[1][false]{
\filldraw[fill=\tuilesMTbgcolor] (0,0) rectangle (4,4);
\node[statestyle] (etat) at (1,2) {\etat};
\draw[transitionstyle] (1,0) -- (etat);
\draw[transitionstyle] (etat) -- (1,3) -- (4,3);
%\draw (1,3) -- (etat);
\node[rubanstyle] (ruban) at (3,2) {\ruban};

\ifthenelse{\equal{#1}{false}}{}{
\node[right] at (4,3) (hautetat) {\etatN};
\node[above] at (2,4) (hautruban) {\rubanN};
\node[below] at (2,0) (basruban) {\ruban};
\node[below] at (1,0) (basetat) {\etat} ;
%\node[left] at (0,2) (gaucheetat) {\etat};
}
}
\newcommand{\mtRubanTransitionG}[1][false]{
\filldraw[fill=\tuilesMTbgcolor] (0,0) rectangle (4,4);
\draw[transitionstyle]  (4,3) -- (1,3) -- (1,4);
\node[rubanstyle] (ruban) at (3,2) {\ruban};

\ifthenelse{\equal{#1}{false}}{}{
\node[above] at (1,4) (hautetat) {\etat};
\node[above] at (2,4) (hautruban) {\ruban};
\node[below] at (2,0) (basruban) {\ruban};
\node[right] at (4,3)  (basetat) {\etat};
}
}
\newcommand{\mtRubanTransitionD}[1][false]{
\fill[color=\tuilesMTbgcolor] (0,0) rectangle (4,4);
\draw (0,0) rectangle (4,4);
\draw[transitionstyle] (0,3) -- (1,3) -- (1,4);
\node[rubanstyle] (ruban) at (3,2) {\ruban};

\ifthenelse{\equal{#1}{false}}{}{
\node[above] at (1,4) (hautetat) {\etat};
\node[left] at (0,3) (basetat) {\etat};
\node[above] at (2,4) (hautruban) {\ruban};
}
}
\newcommand{\mtRuban}[1][false]{
\filldraw[fill=\tuilesMTbgcolor] (0,0) rectangle (4,4);
%\draw (1,3) -- (etat);
\node[rubanstyle] (ruban) at (3,2) {\ruban};

\ifthenelse{\equal{#1}{false}}{}{
\node[above] at (2,4) (hautruban) {\ruban};
\node[below] at (2,0) (basruban) {\ruban};
}
}
\newcommand{\mtTransitionInitD}[1][false]{
\fill[color=\tuilesMTbgcolor] (0,1) rectangle (4,4);
\draw (0,0) rectangle (4,4);
\node[statestyle] (etat) at (1,2) {\etat};
\draw[transitionstyle] (etat) -- (1,3) -- (4,3);
\draw[transitionstyle] (0,2) -- (etat);
\node[rubanstyle] (ruban) at (3,2) {\ruban};

\ifthenelse{\equal{#1}{false}}{}{
\node[left] at (0,2) (basetat) {\etat};
\node[right] at (4,3) (hautetat) {\etatN};
\node[above] at (2,4) (hautruban) {\rubanN};
}
}
\newcommand{\mtTransitionInitR}[1][false]{
\fill[color=\tuilesMTbgcolor] (0,1) rectangle (4,4);
\draw (0,0) rectangle (4,4);
\node[statestyle] (etat) at (1,2) {\etat};
\draw[transitionstyle] (etat) -- (1,4);
\draw[transitionstyle] (0,2) -- (etat);
\node[rubanstyle] (ruban) at (3,2) {\ruban};

\ifthenelse{\equal{#1}{false}}{}{
\node[above] at (1,4) (hautetat) {\etatN};
\node[above] at (2,4) (hautruban) {\rubanN};
\node[left] at (0,2) (basetat) {\etat};
}
}
\newcommand{\mtBordB}[1][false]{
\fill[color=\tuilesMTbgcolor] (0,1) rectangle (4,4);
\draw (0,0) rectangle (4,4);
\node[rubanstyle] (ruban) at (3,2) {\ruban};

\ifthenelse{\equal{#1}{false}}{}{
\node[above] at (2,4) (hautruban) {\ruban};
}
}
\newcommand{\mtBordBtrans}[1][false]{
\fill[color=\tuilesMTbgcolor] (0,1) rectangle (4,4);
\draw (0,0) rectangle (4,4);
\node[rubanstyle] (ruban) at (3,2) {\ruban};
\draw[transitionstyle] (0,3) -- (1,3) -- (1,4);

\ifthenelse{\equal{#1}{false}}{}{
\node[above] at (2,4) (hautruban) {\ruban};
\node[left] at (0,3) (gaucheetat) {\etat};
\node[above] at (1,4) (hautetat) {\etat};
}
}
\newcommand{\mtBordG}[1][false]{
\fill[color=\tuilesMTbgcolor] (4,0) rectangle (1,4);
\draw (0,0) rectangle (4,4);
}
\newcommand{\mtBordD}[1][false]{
\fill[color=\tuilesMTbgcolor] (0,0) rectangle (3,4);
\draw (0,0) rectangle (4,4);
}
\newcommand{\mtBordH}[1][false]{
\fill[color=\tuilesMTbgcolor] (0,0) rectangle (4,3);
\draw (0,0) rectangle (4,4);

\ifthenelse{\equal{#1}{false}}{}{
\node[below] at (2,0) (basruban) {\ruban};
}
}
\newcommand{\mtBordHh}[1][false]{
\fill[color=\tuilesMTbgcolor] (0,0) rectangle (4,3);
\node[statestyle] (etat) at (1,2) {\etat};
\draw (1,0) -- (etat);
\draw (0,0) rectangle (4,4);

\ifthenelse{\equal{#1}{false}}{}{
\node[below] at (1,0) (basetat) {\etat};
\node[below] at (2,0) (basruban) {\ruban};
}
}
\newcommand{\mtBordHG}[1][false]{
\fill[color=\tuilesMTbgcolor] (1,0) rectangle (4,3);
\draw (0,0) rectangle (4,4);
}
\newcommand{\mtBordHD}[1][false]{
\fill[color=\tuilesMTbgcolor] (0,0) rectangle (3,3);
\draw (0,0) rectangle (4,4);
}
\newcommand{\mtBordBD}[1][false]{
\fill[color=\tuilesMTbgcolor] (0,1) rectangle (3,4);
\draw (0,0) rectangle (4,4);
}
\newcommand{\mtBordBG}[1][false]{
\fill[color=\tuilesMTbgcolor] (4,1) rectangle (1,4);
\draw (0,0) rectangle (4,4);
\draw (2,2) -- (4,2);

\ifthenelse{\equal{#1}{false}}{}{
 \node[right] at (4,2) (basetat) {\etat};
}
}
\newcommand{\maindoubles}{red}
\newcommand{\underdoublessides}{red}
\newcommand{\underdoublescenter}{red!70}
\newcommand{\highlight}[1]{#1}
\newcommand{\cross}{
   \highlight{
    \fill[color=\maindoubles] (-0.5,2) rectangle (0,-0.5) rectangle (2,0);
   }
   \draw (-2,-2) rectangle +(4,4);
   \draw[latex-latex] (-2,0) -- +(4,0);
   \draw[latex-latex] (0,-2) -- +(0,4);
   \draw[latex-latex] (-2,-2) +(1.5,4) -- +(1.5,1.5) -- +(4,1.5);
}
\newcommand{\armoo}{
   \draw (-2,-2) rectangle +(4,4);
   \draw[-latex] (0,2) -- +(0,-4);
   \draw[-latex] (-2,0) -- +(1,0);
   \draw[-latex] (2,0) -- +(-1,0);
}
\newcommand{\armdro}{
   \draw (-2,-2) rectangle +(4,4);
   \highlight{
     \fill[color=\maindoubles] (0,2) rectangle (-0.5,-2);
   }
   \draw[-latex] (-2,-2) +(2,4) -- +(2,0);
   \draw[-latex] (-2,-2) +(1.5,4) -- +(1.5,0);
   \draw[-latex] (-2,0) -- +(1,0);
   \draw[-latex] (2,0) -- +(-1,0);
}
\newcommand{\armdrd}{
   \draw (-2,-2) rectangle +(4,4);
   \highlight{
   \fill[color=\underdoublescenter] (1,0) rectangle +(-2,-0.5);
   \fill[color=\maindoubles] (0,2) rectangle (-0.5,-2);
   \fill[color=\underdoublessides] (-2,0) rectangle +(1,-0.5);
   \fill[color=\underdoublessides] (2,0) rectangle +(-1,-0.5);
   }
   \draw[-latex] (-2,-2) +(2,4) -- +(2,0);
   \draw[-latex] (-2,-2) +(1.5,4) -- +(1.5,0);
   \draw[-latex] (-2,0) -- +(1,0);
   \draw[-latex] (2,0) -- +(-1,0);
   \draw[-latex] (-2,-0.5) -- +(1,0);
   \draw[-latex] (2,-0.5) -- +(-1,0);
}
\newcommand{\armdlo}{
   \draw (-2,-2) rectangle +(4,4);
   \highlight{
   \fill[color=\maindoubles] (0,2) rectangle (0.5,-2);
   }
   \draw[-latex] (-2,-2) +(2,4) -- +(2,0);
   \draw[-latex] (-2,-2) +(2.5,4) -- +(2.5,0);
   \draw[-latex] (-2,0) -- +(1,0);
   \draw[-latex] (2,0) -- +(-1,0);
}
\newcommand{\armdld}{
   \draw (-2,-2) rectangle +(4,4);
   \highlight{
   \fill[color=\underdoublescenter] (1,0) rectangle +(-2,-0.5);
   \fill[color=\maindoubles] (0,2) rectangle (0.5,-2);
   \fill[color=\underdoublessides] (-2,0) rectangle +(1,-0.5);
   \fill[color=\underdoublessides] (2,0) rectangle +(-1,-0.5);
   }
   \draw[-latex] (-2,-2) +(2,4) -- +(2,0);
   \draw[-latex] (-2,-2) +(2.5,4) -- +(2.5,0);
   \draw[-latex] (-2,0) -- +(1,0);
   \draw[-latex] (2,0) -- +(-1,0);
   \draw[-latex] (-2,-0.5) -- +(1,0);
   \draw[-latex] (2,-0.5) -- +(-1,0);
}
\newcommand{\armod}{
   \draw (-2,-2) rectangle +(4,4);
   \highlight{
   \fill[color=\underdoublescenter] (1,0) rectangle +(-2,-0.5);
   \fill[color=\underdoublessides] (-2,0) rectangle +(1,-0.5);
   \fill[color=\underdoublessides] (2,0) rectangle +(-1,-0.5);
   }
   \draw[-latex] (-2,-2) +(2,4) -- +(2,0);
   \draw[-latex] (-2,0) -- +(1,0);
   \draw[-latex] (2,0) -- +(-1,0);
   \draw[-latex] (-2,-0.5) -- +(1,0);
   \draw[-latex] (2,-0.5) -- +(-1,0);
}
\newcommand{\sbreaker}{\fill[color=black] (0.375,0) rectangle (0.625,1);}
\newcommand{\breaker}{\tikz[scale=0.25]{\sbreaker}\xspace}
\newcommand{\shorline}{\fill[color=black] (-0.5,0) rectangle (0.5,0.25);}
\newcommand{\horline}{\tikz[scale=0.25]{\shorline}\xspace}
\newcommand{\shempty}{\draw[color=black] (-0.5,0) -- (0.5,0);}
\newcommand{\hempty}{\tikz[scale=0.25]{\shempty}\xspace}
\newcommand{\scorner}{
  \fill[color=black] (0.375,0) rectangle (0.625,1);
  \fill[color=black] (0,0.375) rectangle (1,0.625);
}
\newcommand{\corner}{\tikz[scale=0.25]{\scorner}\xspace}
\colorlet{leftcolor}{yellow}
\colorlet{rightcolor}{red}
\colorlet{outsidecolor}{blue}
\newcommand{\sdiagdiag}{
\fill[color=leftcolor](0,0) -- (1,1) -- (0,1) -- cycle;
\fill[color=rightcolor](0,0) -- (1,1) -- (1,0) -- cycle;
}
\newcommand{\sdiagleft}{
\fill[color=leftcolor](0,0) rectangle  (1,1);
}
\newcommand{\sdiagright}{
\fill[color=rightcolor](0,0) rectangle (1,1);
}
\newcommand{\sdiaghor}{
\fill[color=leftcolor](0,0) rectangle (1,0.5);
\fill[color=rightcolor](0,0.5) rectangle (1,1);
}
\newcommand{\sdiagver}{
\fill[color=leftcolor](0.5,0) rectangle (1,1);
\fill[color=rightcolor](0,0) rectangle (0.5,1);
}
\newcommand{\sdiagcorner}{
\fill[color=rightcolor](0,0) rectangle (1,1);
\fill[color=leftcolor](0.5,0) rectangle (1,0.5)  (0.5,0.5) -- (1,1) -- (0.5,1) -- cycle (0,0) --
(0.5,0.5) -- (0,0.5) -- cycle;
}
\newcommand{\diagcorner}{\tikz[scale=0.25]{\sdiagcorner}\xspace}
\newcommand{\diagver}{\tikz[scale=0.25]{\sdiagver}\xspace}
\newcommand{\diaghor}{\tikz[scale=0.25]{\sdiaghor}\xspace}
\newcommand{\diagright}{\tikz[scale=0.25]{\sdiagright}\xspace}
\newcommand{\diagleft}{\tikz[scale=0.25]{\sdiagleft}\xspace}
\newcommand{\diagdiag}{\tikz[scale=0.25]{\sdiagdiag}\xspace}
\colorlet{gridcolor}{black!60}
\colorlet{myyellow}{yellow}
\colorlet{myblue}{blue!40}
\colorlet{mygray1}{black!15}
\colorlet{mygray2}{black!75}
\colorlet{colorleft}{red}
\colorlet{colorright}{yellow}
\newcommand{\tnoir}{\tikz[scale=0.25]{\stnoir}\xspace}
\newcommand{\rightmost}{\tikz[scale=0.25]{\srightmost}\xspace}
\newcommand{\leftmost}{\tikz[scale=0.25]{\sleftmost}\xspace}
\newcommand{\betweenrl}{\tikz[scale=0.25]{\sbetweenrl}\xspace}
\newcommand{\betweenlr}{\tikz[scale=0.25]{\sbetweenlr}\xspace}
\newcommand{\stnoir}{
  \draw[fill=black] (1,1) -- (0,0) -- (0,1) -- (1,0) --cycle;
\draw[color=gridcolor] (0,0) rectangle (1,1);
}
\newcommand{\srightmost}{
\filldraw[fill=black] (0,0) -- (0.5,0.5) -- (0,1) -- cycle;
\filldraw[fill=mygray2] (0,0) -- (0.5,0.5) -- (1,0) --cycle;
\filldraw[fill=mygray1] (0,1) -- (0.5,0.5) -- (1,1) -- cycle;
\draw[color=gridcolor] (1,1) rectangle (0,0);
}
\newcommand{\sleftmost}{
\filldraw[fill=black] (1,0) -- (0.5,0.5) -- (1,1) -- cycle;
\filldraw[fill=mygray2] (0,1) -- (0.5,0.5) -- (1,1) -- cycle;
\filldraw[fill=mygray1] (0,0) -- (0.5,0.5) -- (1,0) --cycle;
\draw[color=gridcolor] (1,1) rectangle (0,0);
}
\newcommand{\sbetweenrl}{
  \draw[color=black,fill=mygray2] (0,0) -- (1,1) -- (0,1) -- (1,0) -- cycle;
  \draw[color=gridcolor] (0,0) rectangle (1,1);
}
\newcommand{\sbetweenlr}{  
  \draw[color=black,fill=mygray1] (0,0) -- (1,1) -- (0,1) -- (1,0) -- cycle;
  \draw[color=gridcolor] (0,0) rectangle (1,1);
}
\newcommand{\trsdiagg}{\scalebox{0.25}{\tikz{\strsdiagg}}\xspace}
\newcommand{\strsdiagg}{
\filldraw[fill=gray] (0,0) rectangle (1,1);
\draw[line width=3pt,-latex] (0,0) -- (1,1);
}
\newcommand{\srectvertleft}{
 \fill[color=outsidecolor] (0,0) rectangle (0.5,1);
 \fill[color=leftcolor] (0.5,0) rectangle (1,1); 
 \draw[color=gridcolor] (0,0) rectangle (1,1);
}
\newcommand{\srectvertright}{
 \fill[color=outsidecolor] (0.5,0) rectangle (1,1);
 \fill[color=rightcolor] (0,0) rectangle (0.5,1); 
 \draw[color=gridcolor] (0,0) rectangle (1,1);
}
\newcommand{\srectcornerright}{
 \fill[color=outsidecolor] (0.5,0) rectangle (1,1);
 \fill[color=rightcolor] (0,0.5) rectangle (0.5,1); 
 \fill[color=rightcolor] (0,0) -- (0.5,0.5) -- (0.5,0) -- cycle; 
 \fill[color=leftcolor] (0,0) -- (0.5,0.5) -- (0,0.5) -- cycle; 
 \draw[color=gridcolor] (0,0) rectangle (1,1);
}
\newcommand{\srectcornerleft}{
 \fill[color=outsidecolor] (0,0) rectangle (0.5,1);
 \fill[color=leftcolor] (0.5,0) rectangle (1,0.5); 
 \fill[color=rightcolor] (0.5,0.5) -- (1,1) -- (1,0.5) -- cycle; 
 \fill[color=leftcolor] (0.5,0.5) -- (1,1) -- (0.5,1) -- cycle; 
 \draw[color=gridcolor] (0,0) rectangle (1,1);
}
\newcommand{\srecthoriz}{
 \fill[color=leftcolor] (0,0) rectangle +(1,0.5); 
 \fill[color=rightcolor] (0,0.5) rectangle +(1,0.5); 
 \draw[color=gridcolor] (0,0) rectangle (1,1);
}
\newcommand{\srectdiag}{
 \fill[color=leftcolor] (0,0) -- (1,1) -- (0,1) -- cycle; 
 \fill[color=rightcolor] (0,0) -- (1,1) -- (1,0) -- cycle; 
 \draw[color=gridcolor] (0,0) rectangle (1,1);
}
\newcommand{\srectright}{
 \fill[color=rightcolor] (0,0) rectangle +(1,1); 
 \draw[color=gridcolor] (0,0) rectangle (1,1); 
}
\newcommand{\srectleft}{
 \fill[color=leftcolor] (0,0) rectangle +(1,1); 
 \draw[color=gridcolor] (0,0) rectangle (1,1); 
}
\newcommand{\srectoutside}{
 \draw[color=gridcolor,fill=outsidecolor] (0,0) rectangle (1,1); 
}
\newcommand{\rectvertright}{\scalebox{0.25}{\tikz{\srectvertright}}\xspace}
\newcommand{\rectvertleft}{\scalebox{0.25}{\tikz{\srectvertleft}}\xspace}
\newcommand{\recthoriz}{\scalebox{0.25}{\tikz{\srecthoriz}}\xspace}
\newcommand{\rectcornerright}{\scalebox{0.25}{\tikz{\srectcornerright}}\xspace}
\newcommand{\rectcornerleft}{\scalebox{0.25}{\tikz{\srectcornerleft}}\xspace}
\newcommand{\rectdiag}{\scalebox{0.25}{\tikz{\srectdiag}}\xspace}
\newcommand{\rectright}{\scalebox{0.25}{\tikz{\srectright}}\xspace}
\newcommand{\rectleft}{\scalebox{0.25}{\tikz{\srectleft}}\xspace}
\newcommand{\rectoutside}{\scalebox{0.25}{\tikz{\srectoutside}}\xspace}
\begin{document}

%\ETDS{0}{0}{0}{2010}
%\runningheads{E. Jeandel and P. Vanier}{Periodic orbits of SFTs}
\title{Characterizations of periods of multidimensional shifts}
\author{Emmanuel Jeandel and Pascal Vanier}
\address{LIF - Marseille}
\email{emmanuel.jeandel@lif.univ-mrs.fr}
\email{pascal.vanier@lif.univ-mrs.fr}

%\recd{\today}
\begin{abstract}
We show that the sets of periods of multidimensional shifts of finite type 
(SFTs) are exactly the sets of integers of the complexity class $\NE$. We also 
show that the functions counting their number are the functions of $\#\E$. We 
also give characterizations of some other notions of periodicity. We finish the 
paper by giving some characterizations for sofic and effective subshifts.
\end{abstract}

\maketitle
\section{Introduction}

A multidimensional shift of finite type (SFT) is a set of colorings of
$\ZZ^2$ given by local rules. SFTs are one of the most fundamental
objects in symbolic dynamics \cite{LindMarcus}. 
One important question is to determine whether two SFTs are \emph{conjugate}. One approach to solve the
problem is by the study of invariants, i.e. quantities related to
subshifts that stay invariant under conjugacy.
The most significant invariants for SFTs are the entropy, that
measures the growth of the number of valid patterns, and the set of
periodic points.

These two invariants are relatively well known in the one-dimensional
case: The entropy of an SFT is the logarithm of the spectral radius of
some matrix related to the SFT, and the set of periodic points relates
to the cycles of the multi-graph represented by the matrix. As a
consequence, the set of integers $n$ so that there exists a periodic
point of period $n$ is a semi-linear set.

The situation becomes more complex for multi-dimensional SFTs.
This is linked to the fact that the emptiness problem for SFTs
becomes undecidable \cite{Berger2,Robinson}, 
in part due to the existence of nonempty SFTs
with no periodic points. While the theory of one-dimensional SFTs indeed
relates to the theory of finite automata of computer science, many
properties of multidimensional SFTs are to be understood using
\emph{computability theory}.

The case of the entropy was recently solved by Hochman and
Meyerovitch\cite{HochMey}: A real $\lambda \geq 0$ is the entropy of a
multidimensional SFT if and only if it is right approximable, that is
if we can compute a sequence of rational numbers converging to $\lambda$ from
above. This was one of the first articles in a series linking
dynamical properties of multidimensional SFTs and computability theory
\cite{MEdv,AubrunS09}.

In this article, we give a characterization of the sets of periods of
multidimensional SFTs using \emph{complexity theory}.
For a given point $x$ in an SFT, let $\Gamma_x = \left\{ \vect v\in \ZZ^2 \mid 
\forall \vect z \in
  \ZZ^2, x(\vect z+\vect v) = x(\vect z)\right\}$ be the lattice of periods of 
$x$.

We will study different notions of periodic points:
\begin{itemize}
	\item $c$ is \emph{strongly periodic} of period $n> 0$ if
	  $\Gamma_c = n\ZZ^2$
	\item $c$ is \emph{1-periodic} of period $\vect v \in \ZZ\times
	  \mathbb{N}  \setminus \{(0,0)\}$
	  if $\Gamma_c = \vect v\ZZ$
	\item $c$ is \emph{horizontally periodic} of period $n > 0$ if
	  $n$ is the least positive integer so that $n\ZZ \times
	  \{0\} \subseteq \Gamma_c$
\end{itemize}	
All these notions can readily be generalized for any dimension $d > 2$.
For a given subshift $X$, let $\strongper{X}$ (resp. $\oneper{X}, \horper{X}$)
denote the set of strong periods (resp. 1-periods, horizontal periods)
of $X$. The third notion seems a bit more peculiar. It is introduced
in this paper as a first, somewhat easier, result on which all other results
will be built.

To give a characterization of these sets in terms of complexity
classes, we will have to see these sets as languages.
If $L \subseteq \NN$, denote by $un(L) = \{ a^n | n \in L\}$.
If $L \subseteq \ZZ\times\NN$, denote
$\unaire{L} = \{ a^p b^q | (p,q) \in L\} \cup \{ a^p c^q| (-p,q) \in L\}$.

We will prove:
\begin{theorem}
	\label{thm:strong}
	For any $L \subseteq \NN$, there exists an SFT $X$ such that $L=\strongper{X}$ if and only if $\unaire{L} \in \cnp$.
\end{theorem}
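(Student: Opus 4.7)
The theorem is an ``if and only if'' characterization, and the two directions require very different techniques.

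For the forward direction ($\Rightarrow$), suppose $L = \strongper{X}$ for some SFT $X$ with forbidden patterns of radius $r$ over a finite alphabet $\Sigma$. Given $a^n$ as input, I would nondeterministically guess an $n \times n$ pattern $p \in \Sigma^{\{0,\ldots,n-1\}^2}$, using $O(n^2 \log |\Sigma|)$ nondeterministic bits, and then verify in deterministic polynomial time that: (i) every $r$-neighborhood of $p$ read with indices modulo $n$ in both coordinates avoids the forbidden patterns, so that the periodic extension $c$ of $p$ lies in $X$; and (ii) for each nonzero $(i,j) \in \{0,\ldots,n-1\}^2$ there is at least one $(x,y)$ with $p(x,y) \neq p((x+i)\bmod n,(y+j)\bmod n)$, which guarantees $\Gamma_c = n\ZZ^2$ exactly, rather than a strictly larger lattice. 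Both checks run in polynomial time in $n$, placing $\unaire{L}$ in $\cnp$.

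For the reverse direction ($\Leftarrow$), fix a nondeterministic Turing machine $M$ of running time $n^c$ deciding $\unaire{L}$. I would construct a multi-layer SFT $X$ whose alphabet is a product of sublayers: a \emph{grid sublayer} that, on any strongly periodic configuration, marks a partition of $\ZZ^2$ into $n \times n$ blocks matching the fundamental domain; an \emph{input sublayer} writing $a^n$ along a distinguished edge of each block, its length being read off from the grid; and a \emph{computation sublayer} laying out, inside the fundamental domain, a nondeterministic witness $w$ together with the tableau of an accepting run of $M$ on $(a^n,w)$. Local rules enforce mutual consistency of the sublayers and correctness of the TM transitions. A strongly periodic configuration of period exactly $n\ZZ^2$ then exists iff $M$ has an accepting run on $a^n$, i.e.\ iff $n \in L$.

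The main obstacle is twofold. First, the tableau of an NP computation on $a^n$ has $\Theta(n^{2c})$ cells, which does not naively fit in the $n^2$ cells of the fundamental domain when $c > 1$; overcoming this requires reducing $\unaire{L}$ to a more geometric NP-complete problem whose witness and verification can be laid out in an $O(n) \times O(n)$ array (a 2D SAT-like or local-tiling formulation), or spreading the simulation over multiple ``meta-rows'' of the fundamental domain in a manner still consistent with strict $n$-periodicity. Second, one must force the period to be \emph{exactly} $n\ZZ^2$: for any $d \mid n$ with $d < n$, a $d$-periodic pattern would automatically be $n$-periodic, so such sub-periods must be locally excluded. This is achieved by embedding in the grid sublayer a marker that can only be placed consistently at the true scale $n$ and is incompatible with any proper sub-period. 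The preliminary construction for horizontal periods alluded to in the introduction should provide the template for both the one-dimensional marker and the one-directional simulation; the strong-period case then amounts to lifting that control to both coordinate directions simultaneously.
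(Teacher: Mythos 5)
Your forward direction matches the paper's (Lemma~\ref{lem:strong:direct}): guess one fundamental domain, check local admissibility on the torus, and check that no nonzero vector of $\{0,\dots,n-1\}^d$ is a period. This works verbatim for an SFT of any fixed dimension $d$, since a $p^d$ cube is still polynomial-size in $p$.

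The reverse direction, however, has a genuine gap, and it is exactly the obstacle you name in your last paragraph: it is fatal if you stay in dimension $2$, and neither of your proposed workarounds removes it. A strongly $n$-periodic point of a two-dimensional SFT carries only $n^2$ cells of information, so by your own forward argument the set of strong periods of \emph{any} $2$-dimensional SFT already lies in nondeterministic time roughly $n^2$ on unary input (the paper notes the sharp version: linear space with $O(n)$ reversals). Hence a language in $\NTIME_1(n^{100})\setminus\NTIME_1(n^2)$ cannot be realized as the strong periods of a $2$-dimensional SFT, and no ``geometric \NP-complete reformulation'' helps: an $O(n)\times O(n)$ array of local constraints only encodes $O(n^2)$ witness bits, and ``meta-rows'' inside the fundamental domain do not create more than $n^2$ cells per period. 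This is precisely why the theorem is stated without fixing the dimension. The paper's resolution is to let the dimension grow with $L$: for $\unaire{L}\in\NTIME_1(n^d)$ it builds a $2d$-dimensional SFT whose fundamental domain is a $2d$-cube of side $n$, i.e.\ $n^{2d}$ cells, into which the $n^d\times n^d$ space-time diagram of the machine is folded using reflected $n$-ary Gray codes so that consecutive cells of the diagram remain adjacent and transitions stay locally checkable (Lemma~\ref{lem:strong:reciproque}). Two further points you gloss over also require real work in the paper: forcing the block structure to exist and to have side \emph{exactly} the period (done with a deterministic aperiodic SFT broken by marker symbols and resynchronized across blocks), and synchronizing the nondeterministic choices of the machine across all blocks --- without this last step a configuration whose blocks carry different accepting runs is periodic of period a proper multiple of $n$, which would put integers outside $L$ into $\strongper{X}$.
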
 
\begin{theorem}
	\label{thm:one}	
	For any $L \subseteq \ZZ \times \NN$, there exists a two dimensional SFT $X$ such that $L=\oneper{X}$
	if and only if $\unaire{L} \in \nspace{n}$.
	\end{theorem}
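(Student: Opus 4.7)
The plan is to prove the two implications separately. The ``easy'' direction gives a linear-space decision procedure, while the ``hard'' direction adapts the self-similar SFT construction used for Theorem~\ref{thm:strong}.

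\textbf{Necessity.} First I would show $\unaire{\oneper{X}} \in \nspace{n}$ for every SFT $X$. Given input $a^p b^q$ (the case $a^p c^q$ is symmetric), set $\vect v = (p,q)$ and $d = \gcd(p,q)$. Every configuration $c$ with $\vect v \in \Gamma_c$ factors through the quotient $\ZZ^2/\vect v\ZZ \simeq \ZZ \oplus \ZZ/d\ZZ$, which converts $X$ restricted to such configurations into an auxiliary one-dimensional SFT $X_{\vect v}$ whose states are cross-sections in $\Sigma^d$. Each state has description length $O(n)$, so nonemptiness of $X_{\vect v}$ is witnessed by a nondeterministic cycle search in its transition graph, using space $O(n)$. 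To enforce the exact equality $\Gamma_c = \vect v\ZZ$, I would additionally reject configurations having a strictly shorter period in the direction of $\vect v$ (only finitely many proper divisors of $d$ need to be ruled out) and configurations having any period off the line $\RR\vect v$; each exclusion is an $\nspace{n}$ predicate, and so is their conjunction, via Immerman--Szelepcs\'enyi.

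\textbf{Sufficiency.} Given a nondeterministic Turing machine $M$ deciding $\unaire{L}$ in space $cn$, I would build a 2D SFT $X$ whose 1-periodic configurations of period $\vect v = (p,q)$ are in bijection with accepting runs of $M$ on $a^p b^q$. The backbone would be a Robinson-type self-similar layer as in the proof of Theorem~\ref{thm:strong}, producing a hierarchy of computation zones into which the transitions of $M$ are embedded; the geometry is arranged so that a zone in a $\vect v$-periodic configuration has diameter $\Theta(p+q)$. A ``measurement'' layer would transport markers along $\vect v$ to reconstruct the unary tape $a^p b^q$ inside each zone, and the computation layer would then simulate $M$ on this tape.

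Three difficulties must be addressed. First, $(p,q)$ is not known \emph{a priori} and must be recovered from the tiling itself, requiring careful marker design. Second, $X$ must forbid configurations whose period lattice strictly contains $\vect v\ZZ$ (both doubly-periodic configurations and shorter periods of the form $\vect v/k$); this is handled by embedding an aperiodic Robinson-like background together with primitivity-enforcing markers. Third, and hardest, the three sub-layers---hierarchical skeleton, measurement, and rigidity---must cohabit without over-constraining $X$, so that every $(p,q) \in L$ is realized as a 1-period while no other period appears. Maintaining this balance is the technical heart of the proof, and is where I expect the bulk of the work to lie.
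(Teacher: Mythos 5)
Your necessity argument reduces the question to nonemptiness of a one\--dimensional SFT $X_{\vect v}$ and searches for a cycle in its transition graph. But every cycle of that graph yields a configuration of $X$ that is periodic along the transverse direction as well, i.e.\ a \emph{doubly} periodic configuration; so every witness your cycle search can produce has a period off the line $\RR\vect v$, and your subsequent plan to ``reject configurations having any period off the line'' would reject all of them. The missing idea is the finite certificate for the existence of a genuinely $1$-periodic configuration (Lemma~\ref{lem:onepergraph}): a path that closes one cycle and then escapes to a different continuation ending in a second cycle. Repeating the first cycle infinitely to the left and the second infinitely to the right gives a non-periodic bi-infinite path, hence a configuration $c$ with $\Gamma_c=\vect v\ZZ$ once the finitely many proper divisors of $\vect v$ are also excluded along the path. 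Without such a certificate, ``no period off the line'' is a condition on an infinite object quantified existentially (you need \emph{some} $c$ with no extra period, not that \emph{all} $\vect v$-periodic $c$ lack one), ranging over infinitely many candidate vectors, and Immerman--Szelepcs\'enyi does not repair that quantifier structure.

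The sufficiency sketch has two concrete problems. First, a nondeterministic machine in space $cn$ may run for $2^{\Theta(n)}$ steps, so a computation zone of diameter $\Theta(p+q)$ cannot contain its run; the paper's rectangles have width $m$ but height $k^{m-1}$, produced by a base-$k$ counter layer, precisely to make room for the exponential time (and neither this proof nor that of Theorem~\ref{thm:strong} uses a Robinson-type self-similar hierarchy: both use a deterministic aperiodic SFT broken by marker lines, with determinism exploited to synchronize the backgrounds of adjacent blocks). Second, ``aperiodic background plus primitivity markers'' does not by itself yield $1$-periodic points: to force the period to be exactly $\vect v$ and not a proper multiple, the aperiodic background must be made \emph{identical} in all rectangles, after which every configuration admitting a periodicity vector is fully periodic and $\oneper{X}$ would be empty. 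The paper restores $1$-periodicity with a final layer, a two-colouring of the rectangles transported only along $\vect v$, so that a single transverse line of rectangles may be coloured differently from the rest; some such device is indispensable and is absent from your plan.
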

\begin{theorem}  
	\label{thm:hori}
	For any $L \subseteq \NN$, there exists a two dimensional SFT $X$ such that $L=\horper{X}$
	if and only if $\unaire{L} \in \nspace{n}$.
\end{theorem}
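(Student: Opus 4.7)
The plan is to prove the two implications separately; the reverse direction contains the bulk of the difficulty.

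For $(\Rightarrow)$, I would first pass to a higher block code so as to assume, without loss of generality, that all forbidden patterns of $X$ are of shape $2\times 2$. Then a configuration $c\in X$ with $n\ZZ\times\{0\}\subseteq\Gamma_c$ is equivalent to a bi-infinite sequence of ``rows'' of width $n$, viewed as cyclic words over $\Sigma$, that pairwise satisfy the vertical local rules of $X$. This defines a finite transition graph $G_n$ on at most $|\Sigma|^n$ vertices, each encodable in $O(n)$ bits. Such a bi-infinite sequence exists iff $G_n$ contains a cycle, which corresponds to a configuration of horizontal period dividing $n$; the horizontal period equals $n$ exactly iff, in addition, for each prime $p\mid n$ the cycle contains some row whose own cyclic period does not divide $n/p$, since the horizontal period of $c$ is the least common multiple of the cyclic periods of its rows. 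An $\nspace{n}$ algorithm then guesses an anchor row $r^\star$, loops over the $O(\log n)$ primes $p\mid n$, and for each one guesses a witness row $r_p$ and checks by nondeterministic row-by-row path guessing that $r^\star$ and $r_p$ lie in the same strongly connected component of $G_n$. Reachability on a graph of size $2^{O(n)}$ fits in $\nspace{n}$, so the full algorithm uses linear space.

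For $(\Leftarrow)$, I would start from a nondeterministic linear bounded automaton $M$ deciding $\unaire{L}$ and build $X$ as a space-time-diagram SFT. A ``wall'' layer with a distinguished symbol forced to form vertical columns delimits ``slices''; between two consecutive walls sits a ``computation'' layer simulating $M$ row by row. Local rules enforce a well-formed accepting run inside every slice: some row encodes the initial tape $a^w$, where $w$ is the wall spacing; subsequent rows encode valid nondeterministic $M$-transitions; and an accepting state is reached in finitely many steps, with stable padding rows above and below so that the configuration is bi-infinite. For every $n\in L$, an accepting run of $M$ on $a^n$ embeds into such a configuration of wall spacing $n$, and a uniqueness marker placed inside each slice (preventing the slice from being internally periodic) makes the resulting configuration have horizontal period exactly $n$, which gives $L\subseteq\horper{X}$.

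The main obstacle is the converse inclusion $\horper{X}\subseteq L$. A horizontally periodic configuration of period $n$ may a priori contain several wall columns per period, with slice widths $w_1,\dots,w_k$ summing to $n$ and each individual $w_i\in L$; a naive construction would then realize the entire numerical semigroup generated by $L$ rather than $L$ itself. To rule this out I would add a synchronization mechanism in the form of diagonal signals propagating tape contents across walls and locally enforcing equality, so that any two horizontally adjacent slices in a valid configuration must carry the same $M$-computation. Combined with the uniqueness marker inside a slice, this forces the horizontal period of every valid configuration to equal both the common wall spacing and the input size of the embedded $M$-run, yielding $\horper{X}=L$. Designing this synchronization so that it cohabits cleanly with the $M$-simulation and so that padding rows cannot be exploited to sneak in spurious extra periods is the most delicate step of the construction.
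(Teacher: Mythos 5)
Your forward direction follows essentially the paper's route (Lemma~\ref{lem:hori:direct}): reduce to a finite graph whose vertices are $n$-periodic rows, guess a walk row by row in linear space, and certify that the least horizontal period is exactly $n$ using the fact that it is the least common multiple of the least periods of the individual rows. One caveat: requiring the anchor and all witness rows to lie in a single strongly connected component is only a \emph{sufficient} condition, since a configuration of least horizontal period $n$ may have its witnesses spread over several components joined by a one-way path with a cycle at each end; you should instead guess a lasso--path--lasso walk through all witnesses (the paper's own algorithm, which only searches for vertically periodic witnesses, needs the same care).

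The reverse direction has a genuine gap: local SFT rules cannot force a symbol to appear ``somewhere'' at an unbounded distance, yet your construction relies on this twice. Nothing obliges a horizontally periodic configuration to contain any wall column, and nothing obliges a slice to contain an initial or accepting computation row; hence wall-free configurations and all-padding slices are valid and horizontally periodic, so $\horper{X}$ would contain spurious periods (at the very least $1$, and typically every $n$ via walls with pure padding between them), destroying $\horper{X}=L$. This is precisely the obstacle the paper's construction is organized around: it superimposes an \emph{aperiodic} East-deterministic SFT on the non-wall symbols, so that any horizontally periodic point must contain breaker columns (otherwise it would restrict to a periodic point of the aperiodic layer), and it runs a base-$k$ counter between consecutive walls, forcing horizontal delimiting lines at distance exactly $k^{p-1}$ --- which simultaneously eliminates unbounded padding and forces all slices to have the same width $p$, so the numerical-semigroup problem you identify disappears for free. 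Your diagonal-signal synchronization and uniqueness marker address only the later, easier step (making the period equal to the slice width rather than a proper multiple of it, which the paper handles by synchronizing the aperiodic background and the nondeterministic transitions across slices); without an aperiodic layer the construction does not get off the ground.
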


Here $\cnp$ denotes as usual the class of languages computable by a
nondeterministic Turing Machine in polynomial time. $\nspace{n}$
denotes the class of languages computable by a nondeterministic
Turing Machine in linear space.

Please note the slight difference between theorem~\ref{thm:strong} and the
others:
the two other theorems are valid for a fixed dimension $d = 2$.
Theorem~\ref{thm:strong} needs however to be formulated for all dimensions at
once: given
a language $L \in \cnp$ the dimension of the SFT for which $L$ is a set
of strong periods depends on $L$. It is in fact hard to provide a statement
valid exactly in dimension $d=2$. Intuitively, the reason is that SFTs can be
seen as a model of computation. For most models of
computation, the space complexity of a problem $L$ is generally the
same. However the time complexity of a problem depends on the exact
definition of the model: the problem to decide if a word is a palindrome is provably
quadratic in a model of a Turing Machine with one tape, but becomes
linear if the Turing Machine has two tapes. The two-dimensional SFTs
being assimilated as a new model of computation, there is no reason
for it to behave like a specific, already known, model of Turing
Machine. That's why we have to use a more robust class, $\cnp$, which
coincides for all reasonable models of computation (See e.g. the \emph{Invariance
Thesis}~\cite{BOAS}). The problem does not appear for space classes, as they are
already robust.

Rather than the set of periodic points, another interesting quantity
is the \emph{number} of periodic points. This quantity makes only
sense for strongly periodic points.
If $X$ is an SFT, denote by $\nbper{X}{}$ the map from $\{a\}^\star$ to $\NN$
that maps $a^n$ to the number of points of strong period $n$. Then we will prove:
\begin{theorem}\label{thm:strong:counting}
	Let $f:\{a\}^\star\rightarrow \NN$, there exists an SFT $X$ such that $=\nbper{X}{}$ 
	if and only if $f \in \csharp$
\end{theorem}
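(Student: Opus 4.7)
The theorem is proved along the same lines as Theorem~\ref{thm:strong}, replacing the existence question by a counting question.

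For $\nbper{X}{}\in\csharp$, I would describe, for a fixed SFT $X\subseteq\Sigma^{\ZZ^d}$, a polynomial-time nondeterministic machine whose accepting paths on unary input $a^n$ are in bijection with the strongly $n$-periodic points of $X$. The machine guesses a coloring of the fundamental domain $\{0,\dots,n-1\}^d$ (polynomially many nondeterministic bits in $n$, since $d$ is fixed), extends it $n\ZZ^d$-periodically to a configuration $c$, and verifies in polynomial time both that $c$ avoids every forbidden pattern of $X$ and that $\Gamma_c=n\ZZ^d$ exactly, the latter by enumerating the polynomially many lattices $\Lambda$ with $n\ZZ^d\subsetneq\Lambda\subseteq\ZZ^d$ and checking that $c$ is not $\Lambda$-invariant. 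The number of accepting paths is by construction $\nbper{X}{n}$, so $\nbper{X}{}\in\csharp$.

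For the converse, given $f\in\csharp$ computed by a polynomial-time NTM $M$ on unary input, the plan is to reuse the SFT construction underlying Theorem~\ref{thm:strong}: the local rules already force each fundamental domain of a strongly $n$-periodic configuration to encode a single computation tableau of the machine being simulated, and the hierarchical Robinson-style rigidity guarantees that two distinct accepting runs give two distinct admissible fundamental-domain colorings. One therefore obtains a natural correspondence between strongly $n$-periodic configurations of $X$ and accepting branches of $M$ on $a^n$.

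The \emph{main obstacle} is to calibrate this correspondence so that the count is exactly $f(a^n)$ rather than a polynomial multiple of it. Every configuration with $\Gamma_c=n\ZZ^d$ has an orbit of exactly $n^d$ distinct translates, each a separate point of $X$ sharing the same period lattice, so a direct encoding overcounts by a factor of $n^d$. To absorb this factor I would pin a distinguished origin marker inside each fundamental domain and let its phase $(i_1,\dots,i_d)\in(\ZZ/n\ZZ)^d$ be read off by the simulated machine, so that the $n^d$ translates of a given tableau correspond bijectively to $n^d$ distinct accepting branches of a modified machine whose total number of accepting paths on $a^n$ is still exactly $f(a^n)$. Verifying that this phase bookkeeping can be enforced by local rules while preserving the rigidity guarantees used in the proof of Theorem~\ref{thm:strong} is the technical heart of the argument.
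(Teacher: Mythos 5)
There is a genuine gap, and it is precisely the point that the paper's proof is about. Your converse direction assumes that the construction behind Theorem~\ref{thm:strong} already puts accepting runs of $M$ in bijection with strongly $n$-periodic configurations (up to translation). It does not: each hypercube of that construction carries an aperiodic background layer $W$, and for a fixed accepting run there may be many admissible ways to fill the cube with $W$-tiles compatible with the synchronization layers. Your remark that two distinct accepting runs give two distinct colorings only establishes injectivity; the uncontrolled multiplicity comes from the background, not from the tableau. The paper's entire proof of this theorem consists in killing that multiplicity: it replaces $W$ by Kari's NW-deterministic tileset and pins the top and left borders of each square to a canonical admissible pattern, so that NW-determinism forces a unique background per cube size. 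Without some such canonicalization your correspondence counts (number of accepting runs) $\times$ (number of backgrounds), and the theorem fails.

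The $n^d$ issue is also mishandled. Since every strongly $n$-periodic point has exactly $n^d$ distinct translates, $n^d$ always divides the raw count $p_n(X)$, so one cannot realize an arbitrary $f\in\csharp$ as the unnormalized count (take $f\equiv 1$); the theorem is only true for the normalized function $\nbper{X}{a^n}=p_n(X)/n^d$, which is how the paper states and proves it. Your phase-marker device cannot absorb the factor in the direction you want: adding a distinguished origin does not shrink the orbit of a configuration, and a machine whose branches are indexed by phases has $n^d f(a^n)$ accepting paths, not $f(a^n)$. The correct bookkeeping is the trivial one — orbits correspond to accepting runs, so $p_n(X)=n^d f(a^n)$ automatically — while in the membership direction ($\nbper{X}{}\in\csharp$) your machine as described counts $p_n(X)$ rather than $p_n(X)/n^d$; the paper divides by $n^d$ by accepting only those fillings of the fundamental domain that are lexicographically minimal among their $n^d$ translates. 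Your first direction is otherwise fine, and the period-exactness check can indeed be done in polynomial time for fixed $d$.
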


Here $\csharp$ denotes as usual the class of functions corresponding to the
number of accepting paths of a nondeterministic Turing Machine working
in polynomial time. This theorem gives a first insight on the
behavior of the Zeta function of a multidimensional SFT \cite{Lind}.

These characterizations in terms of complexity classes lead to some
closure properties on the set of periods. $\cnp$ and $\nspace{n}$ are
closed under intersection and union, so the sets of periods are also
closed under intersection and union. The closure by union could of
course be proven directly, by taking the disjoint union of the two SFTs.
Note that the closure by intersection is not trivial, as the classical construction
by Cartesian product does not work as usefully as it would seem: It is not true
that the set of strong periods of $X \times Y$ is the intersection of the sets of strong periods
of $X$ and $Y$: If $\{2\}$ (resp. $\{3\}$) 
is the set of strong periods of $X$ (resp. $Y$), then
the set of strong periods of $X\times Y$ will be $\{6\}$ and not
$\emptyset$ as intended. Note in particular that due to the peculiar
form of theorem \ref{thm:strong}, the SFT that realizes the intersection
of the strong periods of $X$ and $Y$ can be of higher dimension
than $X$ and $Y$. Finally, since nondeterministic space is closed under
complementation \cite{Immerman,BDG2}, 
1-periods are closed under complementation.
The question whether the sets of strong periods are closed under
complementation is of course related to the $\cp$ vs $\cnp$ problem
(More accurately, it is related to the question
$\cne=\textbf{co}\cne$, and to Asser's Problem \cite{JonesSelman,years}).

This paper is organized as follows: The first section gives the
necessary background both in multidimensional symbolic dynamics and
complexity theory.
We then proceed to prove all four theorems. We first prove 
Theorem~\ref{thm:hori} on horizontal periods. The techniques used for
this proof are the core of this article. The other three theorems
then build on this first proof, adding more and more complex structures
in the various constructions. We end this paper with a discussion on
similar results for multidimensional sofic and effective shifts rather than
SFTs.

Some of the results of this paper were announced at the DLT conference in the 
extended abstract~\cite{JV2010}.
\section{Preliminaries}
\subsection{Symbolic Dynamics}
We give here a primer on Multidimensional Symbolic Dynamics. See 
\cite{LindMarcus,Lind2} for more information.
\subsubsection{Subshifts}
Let $\Sigma$ be a finite alphabet and $d> 0$ an integer.
A \emph{configuration} over $\Sigma$ is a coloring of $\ZZ^d$ by $\Sigma$, 
that is a map : $\ZZ^d \rightarrow \Sigma$.
We denote by $\Sigma^{\ZZ^d}$ the set of all configurations over $\Sigma$,
$\Sigma^{\ZZ^d}$ is also called the \emph{full shift} on $\Sigma$.
A \emph{pattern} $P$ is a coloring of a subset $D\subset \ZZ^d$. $D$ is
the \emph{support} of the pattern. A pattern is finite if $D$ is finite.

A pattern $P$ of support $D$ \emph{appears} in a pattern $P'$ of
support $D'$
if there exists a position $\vect v \in D'$ so that $\vect v + D \subseteq D'$ and
$P(\vect v+\vect z) = P'(\vect z)$  for all $\vect z \in D$. 
We will write $P \in P'$ to say that $P$ appears in $P'$.

Let $\mathcal F$ be a set of \emph{finite} patterns.
A pattern is \emph{admissible} for $\mathcal F$ if it contains no patterns of $\mathcal F$.
The \emph{subshift} $X_{\mathcal F}$ defined by $\mathcal F$ is the set of all configurations where no pattern
of $\mathcal F$ appears:
\[ X_{\mathcal F} = \{ c \in \Sigma^{\ZZ^d} | \forall P \in {\mathcal F},  P\not\in c\} \]
A set $X$ is a subshift if there exists a set ${\mathcal F}$ so that $X = X_{\mathcal F}$.
Subshifts can also be characterized by a topological property, but we
will not need it here. 

A \emph{subshift of finite type} (or shortly SFT) is a subshift
$X_{\mathcal F}$ where $\mathcal F$ is finite. In this case, we can assume
that all patterns of $\mathcal F$ are over the same finite support $D$. In this
setting, a configuration $c$ is a \emph{valid} if all patterns of support $D$
appearing in $c$ are not in $\mathcal F$, such a configuration is called a point of $X$. 
The \emph{radius} of $D$ is the smallest $r$ so that $D \subseteq [-r,r]^d$. 
The radius of $X_{\mathcal F}$ is the radius of $D$. 

An \emph{effective subshift} is a subshift $X_{\mathcal F}$ where $\mathcal F$ 
is recursively enumerable.

Let $X$ and $Y$ be two $d$-dimensional subshifts, a \emph{block code} is a map 
$F:X\to Y$ such that there exists a map $f:{\Sigma_X^V}\to\Sigma_Y$, with 
$V=\{\vect{v_1},\dots,\vect{v_k}\}$ a finite subset of $\ZZ^2$ such that for 
any $\vect z\in\ZZ$:
\[
 F(x)_{\vect{z}}=f(x_{\vect z+\vect{v_1}},\dots,x_{\vect{z}+\vect{v_k}})
\]
A map $F:X\to Y$ is a factor map if it is surjective block code, $Y$ is then 
called a \emph{factor} of $X$. A subshift is called \emph{sofic} if it is a 
factor of some SFT.

\subsubsection{\label{Ss:perpoints}Periodic points}

Let $c\in \Sigma^{\ZZ^d}$ be a configuration. A vector $\vect v \in \ZZ^d$ is a
\emph{vector of periodicity} for $c$ if $c(\vect z) = c(\vect z+\vect v)$ for all $\vect z\in \ZZ^d$.
We write $\Gamma_c = \{ \vect v \in \ZZ^d | \forall \vect z \in \ZZ^d, c(\vect z) = c(\vect z +\vect v)\}$
for the set of vectors of periodicity of $c$. $\Gamma_c$ is of course a
lattice (i.e. a (discrete) subgroup of $\ZZ^d$).
There are three cases for $\Gamma_c$:
\begin{itemize}
	\item $\Gamma_c = \{0\}$: $c$ has no vector of periodicity.
	\item $\Gamma_c$ has rank $d$, then $c$ is \emph{periodic}. This corresponds to the notion of periodicity
	in dimension $1$: a finite orbit. In particular, in this case, one can prove that there exists $n$ such 
	that $n \ZZ^d \subseteq \Gamma_c$. If $n \ZZ^d= \Gamma_c$ we will say that $c$ is \emph{strongly periodic} of
	  strong period $n$. 
	\item $\Gamma_c$ has an intermediate rank $1 \leq k < d$. We will say
	  that $c$ is \emph{$k$-periodic}. 
\end{itemize}	
If $c$ is $1$-periodic, then $\gamma_c = v\mathbb{Z}$ for a unique vector $v$ (upto opposite direction)
which is called the \emph{$1$-period} of $c$.

\begin{figure}[htbp]
	\begin{center}
	\begin{tikzpicture}[scale=0.3]
	\clip (0,3) rectangle (20,20);
	\draw[gray] (0,0) grid (20,20);
	\draw[->,very thick] (10,0) -- (10,20);
	\draw[->,very thick] (0,10) -- (20,10);
	\draw (0,6) -- (20,14) ;
	\draw (0,10) -- (20,18);
	\fill[color=green!40, pattern=crosshatch dots gray,opacity=0.3] (0,6) -- (20,14) -- (20,18) -- (0,10);
	\draw node at (6,11) {$S_{m,n}$};
	\draw (0,14) -- (20,22) ;
	\fill[color=green!40, pattern=dots,opacity=0.3] (0,10) -- (20,18) -- (20,22) -- (0,14);
	\draw node at (3.5,13.3) {$S_{m,n} + (0,4r)$};
	\draw (0,2) -- (20,10) ;
	\fill[color=green!40, pattern=dots,opacity=0.3] (0,2) -- (20,10) --
	(20,14) -- (0,6);
	\draw node at (3.5,5.3) {$S_{m,n} - (0,4r)$};	
	\end{tikzpicture}
	\end{center}
	\caption{Representation of $S_{m,n}$ for $m=5, n=2$ and $r=1$.
Note that for every point $x$ in $S_{m,n}$, the square of size $2r$ and of
bottom right corner $x$ is entirely contained in $S_{m,n} \cup (S_{m,n} + (0,4r)) = D_{m,n}$.}
	\label{fig:snm}
\end{figure}
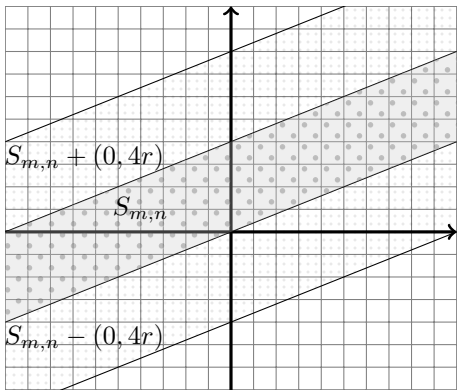

A configuration which is $1$-periodic in a subshift of dimension $d$ can be
seen as a configuration in a subshift of dimension $d-1$. To make this
statement exact, we need some definitions.
Let $X$ be an SFT of radius $r$.
Let $(m,n)$ be integers with $m \geq n \geq 0$.
Consider $S_{m,n} = \{ (x,y) \in \mathbb{Z}^2\ | 0 \leq -nx+my < 4mr \}$.
and $D_{m,n} = \{ (x,y) \in \mathbb{Z}^2\ | 0 \leq -nx+my < 8mr \} =
S_{m,n} \cup (S_{m,n} + (0,4r))$ represented in figure~\ref{fig:snm}.

Now the number of patterns of support $S_{m,n}$ that are $(m,n)$-periodic is
finite, as any such pattern is entirely determined by $S \cap \left( (0,m-1) \times \ZZ\right)$ which
contains at most $4rm$ points. So there are at most $|\Sigma|^{4rm}$
such patterns.

Now consider the following directed graph $G_{m,n}(X)$:
\begin{itemize}
	\item The vertices of $G_{m,n}(X)$ are all patterns of support $S_{m,n}$ that
	  are $(m,n)$ periodic.
	\item There is a edge from $P$ to $P'$ if
	the pattern $P \ominus P'$ of support
	$D_{m,n}$ defined by $P \ominus P' (z) = P(z)$ if $z \in S_{m,n}$ and $P \ominus P'(z) =
	P'(z-(0,4r))$ otherwise is valid for $X$.
\end{itemize}	
Now it is clear that there exists a bijection between configurations of $X$
with $(m,n)$ as a period and  bi-infinite paths in $G_{m,n}(X)$.
It is due to the fact that $\ZZ^2 = \uplus_i (S_{m,n} + (0,4r)i)$ and that
any square of size $2r$ whose bottom right corner is in $S_{m,n}$ is in
$D_{m,n}$ (this is where the hypothesis $m \geq n$ is used) so that any square of size $2r$ in $\ZZ^2$ is contained
in $D_{m,n} +(0,4r)i$ for some $i$.

Thus the entire information is contained in this graph. It is then easy to
obtain the following consequences:
\begin{lemma}\label{lem:horpergraph}
	Let $X$ be a two-dimensional SFT of radius $r$.
\begin{itemize}

	\item 
	If $X$ contains a configuration which is periodic of period $(m,n)$,
	then it contains a configuration which is fully periodic.
	(If the finite graph $G_{m,n}(X)$ contains an infinite path, it contains a cycle)
	\item
	More precisely, if $X$ contains a configuration with
	$(m,n)$ as a period, ${m \geq n >  0}$, then it contains a fully periodic
	configuration with $(m,n)$ and $(0,p)$ as periods, for some $0 < p \leq |\Sigma|^{4rm}$.
\end{itemize}
\end{lemma}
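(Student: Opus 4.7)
The plan is to leverage the graph $G_{m,n}(X)$ constructed immediately above the lemma. Because it is in bijection with $(m,n)$-periodic configurations of $X$, both parts of the lemma reduce to elementary facts about walks in a finite directed graph. No new combinatorial gadget is needed: the work has essentially been done by setting up the correct notion of strip and the correct notion of edge.

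For the first part, the hypothesis gives a bi-infinite path in $G_{m,n}(X)$. Since the vertex set is finite, some vertex on the forward half of that path must be visited at least twice, and extracting the portion between two successive occurrences yields a (simple) directed cycle. Concatenating this cycle with itself infinitely often in both directions produces a new bi-infinite path in $G_{m,n}(X)$, and through the bijection this corresponds to a valid configuration $c \in X$. By construction $c$ has $(m,n)$ as a period; moreover, if the cycle has length $k$, then shifting the path by $k$ steps sends each strip to the next occurrence of itself, which in physical coordinates is a translation by $(0, 4rk)$. Hence $(0, 4rk)$ is also a period of $c$. Because $m > 0$, the vectors $(m,n)$ and $(0, 4rk)$ are linearly independent, so $\Gamma_c$ has full rank and $c$ is fully periodic. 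This proves the first bullet.

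For the quantitative second bullet, we simply keep track of the cycle length: a simple cycle in a finite graph has length bounded by the number of vertices, which as observed before the lemma is at most $|\Sigma|^{4rm}$. The resulting configuration has period $(0, p)$ with $p = 4rk$, giving the bound claimed in the statement (up to the factor $4r$ that is implicit in the lemma's normalization of the $D_{m,n}$-step).

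The only point one needs to verify carefully is that repeating a cycle really does produce a valid global configuration, not merely a periodic pattern whose seam might violate some forbidden pattern of $X$. This reduces to the observation made just before the lemma: every $2r \times 2r$ window of $\ZZ^2$ lies inside some translate $D_{m,n} + (0,4r)i$, and the validity of each such $D_{m,n}$-window is exactly what an edge of $G_{m,n}(X)$ certifies. Since every consecutive pair of strips along our periodically repeated cycle is an edge, including the wrap-around pair, every window is valid, and $c \in X$. This is the main (mild) technical point; once it is checked, the lemma follows from the pigeonhole principle.
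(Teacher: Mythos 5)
Your proof is correct and follows exactly the argument the paper intends: the lemma is presented as an immediate consequence of the bijection between $(m,n)$-periodic configurations and bi-infinite paths in the finite graph $G_{m,n}(X)$, with the cycle extracted by pigeonhole and the validity of the wrap-around seam guaranteed by the definition of the edges via $D_{m,n}$. The one discrepancy you flag --- your vertical period is $4rk \leq 4r|\Sigma|^{4rm}$ rather than the stated $|\Sigma|^{4rm}$ --- is inherited from the paper's own (implicit) argument and is immaterial for how the bound is used later.
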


We could also obtain a characterization of $1$-periods with the graph:
\begin{lemma}\label{lem:onepergraph}
 $X$ admits $(m,n)$ as a $1$-period if and only if the graph   $G_{m,n}(X)$
 contains a path $u_0 \dots u_k$ so that
	  \begin{itemize}
		  \item $u_i=u_0$ for some $i <k$.
		  \item $u_{i+1} \not= u_1$
		  \item $u_j = u_k$ for some $i \leq j < k$
		  \item For each $(m',n')$ so that $\exists d\in\NN,(m,n) = d\cdot(m',n')$, 
                  there exists some pattern $u_l$ which is not $(m',n')$ periodic.
	  \end{itemize}
\end{lemma}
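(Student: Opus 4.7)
The plan is to establish the equivalence in two directions, exploiting the correspondence between bi-infinite paths in $G_{m,n}(X)$ and $(m,n)$-periodic configurations used for Lemma~\ref{lem:horpergraph}.

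For sufficiency, given a finite path $u_0,\dots,u_k$ satisfying (i)--(iv), I would form a bi-infinite path $(P_t)_{t\in\ZZ}$ in $G_{m,n}(X)$ by cycling the prefix loop $u_0\to u_1\to\cdots\to u_{i-1}\to u_i=u_0$ infinitely to the left, following $u_0,\dots,u_k$ in the middle, and cycling the suffix loop $u_j\to u_{j+1}\to\cdots\to u_{k-1}\to u_k=u_j$ infinitely to the right. The corresponding configuration $c$ has $(m,n)\in\Gamma_c$, and I need to show $\Gamma_c=(m,n)\ZZ$ exactly. To rule out full periodicity, I would use that an independent period of $c$ yields a vertical period $(0,q)$ with $q$ a multiple of $4r$, hence the bi-infinite path is periodic with some integer period $p>0$. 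Since the left extension is cyclic of length $i$ and $P_{-p}=P_0=u_0$, comparing with the cyclic formula forces $i\mid p$; writing $p=si$, periodicity gives $P_{i+1}=P_{i+1-p}$, whose right-hand side reduces via the backward cyclic extension to $u_1$, while $P_{i+1}=u_{i+1}$ in the middle, contradicting (ii). To rule out a finer $1$-period $(m',n')=(m,n)/d$, I would invoke (iv): the witness pattern $u_l$ that fails $(m',n')$-periodicity appears as a restriction of $c$, so $c$ itself fails $(m',n')$-periodicity.

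For necessity, let $c\in X$ have $\Gamma_c=(m,n)\ZZ$, and let $(P_t)$ be its bi-infinite path. Since $c$ is not fully periodic, $(P_t)$ is not purely periodic. This yields a ``branching'' pair of indices $t_1<t_2$ with $P_{t_1}=P_{t_2}$ but $P_{t_1+1}\neq P_{t_2+1}$: for if no such pair existed, the successor on the set of visited vertices would be a well-defined function, and, in a finite graph, this would force $(P_t)$ to be purely periodic. Set $u_l=P_{t_1+l}$ and $i=t_2-t_1$, giving (i) and (ii). For each of the finitely many proper divisors $(m',n')$, non-$(m',n')$-periodicity of $c$ forces some strip's pattern $P_s$ to fail $(m',n')$-periodicity on $S_{m,n}$, so I extend $k$ large enough to include all such witness indices, securing (iv). A final pigeonhole extension produces indices $i\le j<k$ with $u_j=u_k$, giving (iii).

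The central geometric fact, which makes condition (iv) match exactly the non-existence of a finer $1$-period, is that any $(m',n')$ with $(m,n)=d(m',n')$ is parallel to $(m,n)$, so translation by $(m',n')$ preserves each strip $S_{m,n}+(0,4rt)$ (because $mn'-nm'=0$). Consequently $c$ is $(m',n')$-periodic if and only if every $P_t$ is $(m',n')$-periodic on $S_{m,n}$, which is precisely what (iv) negates. The main obstacle in the argument is the sufficiency calculation in which (ii) is used to prevent pure periodicity of $(P_t)$: establishing $i\mid p$ from the cyclic left extension, and then reducing $P_{i+1}=P_{i+1-p}$ to $u_1$ via that same cyclic formula, is what makes the condition $u_{i+1}\neq u_1$ the exact combinatorial obstruction required.
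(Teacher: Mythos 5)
Your overall strategy---realizing a bi-infinite walk from the given path (left loop, middle segment, right loop) and reading off a configuration via the strip decomposition, and conversely extracting a ``branching pair'' from the non-periodic bi-infinite walk of a $1$-periodic configuration---is the intended one (the paper only asserts which condition serves which purpose), and your observation that translation by $(m',n')$ preserves each strip, so that condition (iv) exactly negates a finer $1$-period, is the right key fact. Two steps, however, do not hold up as written.

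First, in the sufficiency direction, the inference ``$P_{-p}=P_0=u_0$ forces $i\mid p$'' is invalid: the left loop $u_0,\dots,u_{i-1}$ is not assumed to consist of distinct vertices, so $u_{(-p)\bmod i}=u_0$ does not imply $p\equiv 0\pmod i$. The conclusion you need is still reachable by a slightly different route: the left tail satisfies $P_t=P_{t-i}$ for all $t\le 0$, and combining this with the global period $p$ (push any index below $0$ by a large multiple of $p$, apply the tail periodicity there, and push back) yields $P_{t+i}=P_t$ for \emph{all} $t$; then $P_{i+1}=P_1=u_1$ contradicts $u_{i+1}\neq u_1$ with no divisibility claim.

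Second, and more substantively, your treatment of condition (iv) in the necessity direction has a gap. You fix one branching pair $t_1<t_2$ and then ``extend $k$ large enough to include all witness indices''---but enlarging $k$ only extends the window to the right of $t_1$. Nothing guarantees that for every proper divisor $(m',n')$ there is a non-$(m',n')$-periodic strip $P_s$ with $s\ge t_1$: all witnesses for some divisor could sit strictly to the left of the branching pair you chose, and your argument then fails to produce a path satisfying (iv). To close this you need the stronger claim that branching pairs exist with $t_1$ arbitrarily negative (run your ``well-defined successor forces pure periodicity'' argument on the set of vertices visited infinitely often as $t\to-\infty$), and then choose $t_1$ below the finitely many witnesses before extending to the right; re-establishing (iii) afterwards by pigeonhole is fine as you describe.
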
	
We can of course choose $k \leq 3m|\Sigma|^{4rm}$.
The first three conditions ensure that there exists a configuration $c$ which is
$1$-periodic and admits $(m,n)$ as a period. The last condition ensures that
$(m,n)$ is indeed the least period of $c$.

\subsection{Aperiodic SFTs and determinism}\label{Ss:aper}

Let $X$ be a $\ZZ^d$ SFT, $X$ is \emph{aperiodic} when no point of $X$ admits a
periodicity vector. There are several well known such SFTs for $\ZZ^2$, most of them
come from the study of tilings, see e.g. Berger~\cite{Berger2,BergerPhd}, Robinson~\cite{Robinson},
Kari~\cite{Kari14,KariNil}. We will need two dimensional aperiodic SFTs with a particular property in this paper : 
determinism. A two dimensional SFT is \emph{north-west deterministic} if for
any two symbols $a,b$ at positions $(i,j)$ and $(i+1,j+1)$ there is at most
one symbol $c$ allowed at position $(i+1,j)$. Such an SFT was constructed by
Kari~\cite{KariNil}, his particular SFT will be used in 
section~\ref{S:counting}. \emph{East-determinism} can be determined in the same 
way: for any two symbols $a,b$ at positions $(i,j)$ and $(i,j+1)$ there is at 
most one symbol $c$ allowed at position $(i+1,j)$.

\subsection{Computational Complexity}

In this section we provide some background on computational complexity and its 
links with subshifts of finite type. More information about computational 
complexity and computability can be found in \cite{BDG,AroraBarak,Rogers}. 

Usually to model computation, Turing machines are used. Despite its power, this 
model is quite simple to describe : shortly, a Turing machine is a device with 
finite memory but that can read/write on an infinite tape at the position of 
its ``head''. Formally, it is a tuple $(Q,\Gamma,B,\Sigma,\delta,q_0,H)$ where:
\begin{itemize}
  \item $Q$ is a finite set of states,
  \item $\Gamma$ is the tape alphabet, the finite set of symbols that can appear on the tape,
  \item $B\in\Gamma$ is the ``Blank'' symbol,
  \item $\Sigma\subseteq\Gamma\setminus\{B\}$ is the input alphabet,
  \item $q_0$ is the initial state,
  \item $H$ is the set of halting states,
  \item $\delta:Q\setminus F\times\Gamma\rightarrow Q\times\Gamma\times\left\{\leftarrow,\cdot,\rightarrow\right\}$ 
    is the transition function. The symbols $\rightarrow$,$\leftarrow$ and $\cdot$ stand 
     for moving the head to the right, left and to let it where it is respectively.
\end{itemize}

If a problem can be answered by a Turing machine, then it is called 
\emph{decidable} and otherwise \emph{undecidable}. The most famous undecidable  
problem is the Halting Problem: deciding whether a given Turing machine halts 
with itself as an input. Another well known undecidable problem is the Domino 
Problem: given a set of Wang tiles, does it tile the plane?

A problem is called \emph{recursively enumerable} (r.e.) if there exists a 
Turing machine enumerating its elements and \emph{co-recursively enumerable} 
(co-r.e.) when its complement is r.e.

A complexity class is a class of problems decidable by a Turing machine such 
that some resource is bounded. The usual restrictions of the resources are on 
time or space : 
\begin{itemize}
 \item  the classes $\TIME(f(n))$ are the classes of problems decidable in time 
$f(n)$, where $n$ is the size of the input and  $f$ a fonction from $\NN$ to 
$\NN$.
 \item  the classes $\SPACE(f(n))$ are the classes of problems decidable in 
space $f(n)$.
\end{itemize}

Turing machines may be nondeterministic, which means that the transition 
function is multivalued. In this case, the input  is accepted if there exists a 
sequence of transitions leading to an accepting state. Time and space complexity 
 classes are also defined in the case of nondeterministic Turing machines :
\begin{itemize}
 \item  the classes $\NTIME(f(n))$ are the classes of problems 
nondeterministically decidable in time $f(n)$, where $n$ is the size of the 
input and  $f$ a fonction from $\NN$ to $\NN$.
 \item  the classes $\NSPACE(f(n))$ are the classes of problems 
nondeterministically decidable in space $f(n)$ 
\end{itemize}

\begin{figure}
\begin{center}
 \includepicture{tuilesMT}
\end{center}
\captionsetup{listof=false,singlelinecheck=off}
 \caption{A tiling system, given by Wang tiles, simulating a Turing machine. 
 The meaning of the labels are the following: \\
 %\parbox{.75\linewidth}{%
 \begin{itemize}
     \renewcommand{\labelitemi}{--}%
\item label $s_0$ represents the initial state of the Turing machine.
\item The top-left tile corresponds to the case where the Turing machine, given the state $s$ and the letter $a$ on the tape, writes $a'$,
moves the head to the left and to change from state $s$ to $s'$. The two
other tiles are similar.
\item $h$ represents a halting state. Note that the only states that can
  appear in the last step of a computation (before a border appears)
are halting states.
\end{itemize}}
%}
\label{fig:mt_tuiles}
\end{figure}

As said earlier, tilings and recursivity are intimately linked.
In fact, it is quite easy to encode Turing machines in tilings. Such encodings can be found e.g. in \cite{KariRevCA,Chaitin08}. 
Given a Turing machine $M$, we can build a tiling system $\tau_M$
in figure~\ref{fig:mt_tuiles}. The tiling system is given by \emph{Wang tiles}, i.e., we can
only glue two tiles together if they coincide on their common edge.
This tiling system $\tau_M$ has the following property: there is an accepting path
for the word $u$ in time (less than) $t$ using space (less than) $w$ if and only if we can tile a
rectangle of size $(w+2) \times t$  with white borders, the  first row containing the input.
Note that this method works for both deterministic and nondeterministic machines.

%%%% Là commence la vérité
\section{Horizontal periodicity in SFTs of dimension 2 and space complexity}

In this section, we give a proof for theorem~\ref{thm:hori}. We first prove that the unary language corresponding 
to horizontal periods can be recognized in 
linear space by a nondeterministic Turing machine (lemma~\ref{lem:hori:direct}) and then 
the reciprocal (lemma~\ref{lem:hori:reciproque}).

\begin{lemma}\label{lem:hori:direct}
Let $L\subseteq\NN^*$ be the set of horizontal periods of a two-dimensional SFT $X$,
then $\unaire{L}\in\nspace{n}$.
\end{lemma}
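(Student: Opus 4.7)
The approach is to use the graph $G_{n,0}(X)$ introduced above, whose bi-infinite paths are in bijection with configurations of $X$ admitting $(n,0)$ as a period. Its vertices are the $(n,0)$-periodic patterns on the horizontal strip $S_{n,0}$ of height $4r$, where $r$ is the radius of $X$; each is specified by one fundamental domain of size $n\times 4r$ and hence storable in $O(n)$ bits, and edges can be checked locally from the finite list of forbidden patterns of $X$.

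The characterisation of $L$ needed is the following. For any configuration $c$ the set $H_c=\{k>0:(k,0)\in\Gamma_c\}$ is the positive part of a subgroup $m\ZZ$ of $\ZZ$, so $n=\min H_c$ iff $n/p\notin H_c$ for every prime $p\mid n$: indeed if some $m<n$ with $m\mid n$ lay in $H_c$, then picking any prime $p\mid n/m$ gives $m\mid n/p$ and so $n/p\in H_c$. Moreover an $(n,0)$-periodic configuration fails to be $(n/p,0)$-periodic precisely when at least one vertex of its bi-infinite path in $G_{n,0}(X)$ is itself not $(n/p,0)$-periodic, and this property can be decided by direct inspection of the fundamental domain representing the vertex. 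Hence $n\in L$ iff $G_{n,0}(X)$ admits a bi-infinite path which, for every prime $p\mid n$, visits some vertex that is not $(n/p,0)$-periodic.

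The linear-space nondeterministic algorithm on input $a^n$ then works as follows. First factor $n$ and list its distinct prime divisors $p_1,\dots,p_k$ in linear space. Then nondeterministically produce witnesses $w_1,\dots,w_k$: each $w_i$ is a vertex of $G_{n,0}(X)$ together with a still-unused prime $p_{\sigma(i)}$ (we keep a bit vector of $k$ bits marking the primes already handled) and is verified in place to be non-$(n/p_{\sigma(i)},0)$-periodic. Between consecutive $w_{i-1}$ and $w_i$ we verify $w_{i-1}\to^{*}w_i$ in $G_{n,0}(X)$ by a step-by-step nondeterministic walk, storing only the current vertex, the two endpoints, and a counter of $O(n)$ bits (the number of vertices is at most $|\Sigma|^{4rn}$). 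Finally we check bi-infinite extension by nondeterministically producing a vertex $u$ with $u\to^{+}u$ and $u\to^{*}w_1$ and, symmetrically, a vertex $u'$ with $w_k\to^{*}u'$ and $u'\to^{+}u'$. At any moment only a constant number of labels of size $O(n)$ and a counter of $O(n)$ bits are in memory; the degenerate case $k=0$ (i.e.\ $n=1$) reduces to checking the existence of any cycle in $G_{1,0}(X)$.

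The only real obstacle is the combinatorial characterisation of ``$n$ is the minimal horizontal period'' in terms of the finitely many maximal proper divisors $n/p$ and of individual vertices of $G_{n,0}(X)$; once this is set up, the space bound is an immediate consequence of the fact that nondeterministic reachability on a graph whose vertex labels have length $O(n)$, with edges checkable in $O(n)$ space, lies in $\nspace{n}$.
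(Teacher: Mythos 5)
Your proof is correct, and it is close in spirit to the paper's (both are nondeterministic machines that guess the configuration strip by strip, keeping only a bounded number of rows of $O(n)$ symbols in memory), but it differs on two points worth noting. First, the minimality test: the paper maintains a length-$n$ array $P$ recording, for every $k<n$, whether some row seen so far fails to be $k$-periodic, whereas you observe that the horizontal period group is $m\ZZ$ and check only the maximal proper divisors $n/p$ for primes $p\mid n$; both fit in linear space, so this is a matter of taste. Second, and more substantively, the shape of the infinite witness: the paper guesses a single vertical period $p\le\card{\Sigma}^{rn}$ and closes the guessed rows into a vertically periodic configuration, implicitly relying on the claim that every horizontal period of $X$ is realized by a vertically periodic point; you instead certify an arbitrary bi-infinite path of $G_{n,0}(X)$ by a finite transient segment carrying the non-periodicity witnesses, flanked by a reachable cycle on each side. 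Your formulation is the more robust one: a configuration of minimal horizontal period $n$ need not coexist with a vertically periodic configuration of the same minimal horizontal period (for instance, rows of exact period $2$ and rows of exact period $3$ can be forced to meet only across a one-way vertical transition, so that every cycle of $G_{6,0}(X)$ misses one family of witnesses while a transient bi-infinite path realizes period $6$), so the paper's reduction to the fully periodic case needs an extra argument that your two-cycle version simply sidesteps. The space accounting in your final step (vertex labels, edge tests, and a step counter all of size $O(n)$) matches the paper's and is fine.
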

\begin{proof}
	Let $X = X_\F$ be a $2$-dimensional SFT on the alphabet $\Sigma$.
We will construct a nondeterministic Turing machine accepting $1^n$ if and only if $n+1$ is a
horizontal period of $X$. The machine has to work in space $\mathcal O(n)$, the input being 
given in unary.

Let $r$ be the radius of $X$, a point is in $X$ if and only if all its $r \times r$ 
blocks have no sub-pattern contained in $\F$.

Furthermore, we can prove that if there exists a point of horizontal period $n$,
then there also exists such a point, with vertical period at most $\card{\Sigma}^{rn}$.
Here is an algorithm, starting from $n$ as an input that checks whether $n$ is a horizontal
period of some point of $X$:
\begin{itemize}
	\item Initialize an array $P$ of size $n$ so that $P[i] = 1$ for all $i$.
	\item First choose nondeterministically $p \leq \card{\Sigma}^{rn}$
	\item Choose 
	  $r$ bi-infinite rows $(c_i)_{0 \leq i \leq r-1}$ of period $n$ (that is, choose $r\times n$ symbols).
	\item For each $r+1 \leq i \leq p$, choose  a
	  bi-infinite row $c_i$ of period $n$ (that is, choose $n$ symbols),
	  and verify that all $r \times r$ blocks in
	  the rows $c_i \dots c_{i-r+1}$ do not contain a forbidden pattern. At each time, keep
	  only the last $r$ rows in memory (we never forget the $r$ first rows 
	  though).
	\item (Verification of the \emph{least} period) 
	If at any of the previous steps, the row $c_i$ is not periodic of period
        $k < n$, then $P[k] \leftarrow 0$
	\item For $i \leq r$, verify that all $r \times r$ blocks in 
	  the rows
	  $c_{p-i} \dots c_p c_0 \dots c_{i-r-1}$ do not contain a forbidden pattern.
	\item If there is some $k$ such that $P[k] = 1$, reject. Otherwise accept. 
\end{itemize}
This algorithm needs to keep in memory only $2r$ rows and the array $P$ at each time, 
hence is in space $\mathcal O(n)$.
%\image{mt_nspacen.tex}{The tape of the machine when treating the $k^\textrm{th}$ line: we keep the first line
%in memory to verify the validity of the last one, and we guess the next line nondeterministically.}{mt_evol}
\end{proof}
\begin{lemma}\label{lem:hori:reciproque}
Let $L\subseteq \NN$ be a language such that $\unaire{L}\in\nspace{n}$, then
there exists a two-dimensional SFT $X$ such that $n\in L$ if and only if there
exists a
point $c\in X$ with horizontal period $n$.
\end{lemma}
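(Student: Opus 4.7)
The plan is to reduce to simulating a nondeterministic linear-space Turing machine and to build an SFT whose horizontally $n$-periodic configurations are in bijection with iterated accepting runs of that machine on input $1^n$. Fix a nondeterministic Turing machine $M$ accepting $\unaire{L}$ in space at most $cn$ on input $1^n$, and assume without loss of generality that every computation branch of $M$ halts, by equipping $M$ with a binary step counter of size $O(n)$. Modify $M$ into $M'$ by replacing every outgoing transition from the accepting state $q_a$ by a single ``reset'' transition that overwrites the tape with the canonical initial configuration on input $1^n$. With this change, $M'$ admits a bi-infinite sequence of legal transitions on a length-$cn$ tape if and only if $M$ accepts $1^n$, i.e.\ if and only if $n \in L$.

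The SFT $X$ will have alphabet $\Gamma^c \cup (\Gamma^c \times Q \times \{0,\dots,c-1\})$, packing $c$ cells of $M'$'s tape per SFT cell, so that an $M'$-tape of length $cn$ occupies $n$ SFT cells. Each row of a configuration of $X$ encodes one $M'$-tape configuration with at most one head per horizontal period, and vertical local rules of finite radius force the row immediately above to be an $M'$-successor of the row immediately below (the reset move from $q_a$ included); rejecting rows, and more generally rows with no legal $M'$-successor, are simply forbidden by providing no legal vertical transition. Given $n \in L$, I take an accepting computation of $M$ on $1^n$ augmented by the reset, stack it vertically, and replicate horizontally: the unique position of the head in each length-$n$ window forces the horizontal period of the resulting configuration to be exactly $n$, not a proper divisor. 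Conversely, for a configuration $c \in X$ of horizontal period $n$, every row is an $M'$-tape configuration of length $cn$ and the rows form a bi-infinite sequence of legal $M'$-transitions; haltingness of $M'$ together with finiteness of its configuration space force the forward iteration from any row to reach $q_a$ and trigger a reset, and the symmetric downward analysis shows that the whole sequence is periodic and visits both the initial configuration on $1^n$ and $q_a$, certifying $n \in L$.

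The main obstacle I expect is exactly ruling out ``spurious'' horizontally $n$-periodic configurations of $X$ that do not correspond to honest accepting $M$-computations, for instance those arising from $M$-computations which run forever without accepting or rejecting. The step counter built into $M'$ removes infinite non-halting trajectories, and the prohibition of dead-end rows prevents the sequence from ending at a rejecting configuration; together, the two conditions force every bi-infinite sequence of rows realized in $X$ to traverse the accept--reset cycle at least once, which is available only when $M$ accepts $1^n$ and therefore certifies $n \in L$.
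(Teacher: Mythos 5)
There is a genuine gap, and it is exactly at the point your last paragraph identifies as ``the main obstacle'': your construction has no mechanism that forces a horizontally periodic configuration to contain a computation at all. In your SFT the alphabet contains pure tape symbols ($\Gamma^c$ with no head), and the vertical rule for a cell not adjacent to a head must be ``copy the tape content upward'' (otherwise the simulation breaks). Consequently any headless row is its own legal successor and is not a dead end, so for every word $w$ of length $n$ over $\Gamma^c$ the configuration obtained by repeating $w$ horizontally and copying it on every row is a valid point of $X$ with least horizontal period equal to the primitive period of $w$. Hence $\horper{X}$ contains every $n\geq 1$ regardless of $L$. This cannot be repaired by adding local rules: no finite-radius constraint can force a bi-infinite row to contain at least one head (by compactness, the headless limit of configurations with increasingly sparse heads is always admitted). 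Your two safeguards --- the step counter and the prohibition of dead-end rows --- only constrain configurations that already carry a head; they say nothing about the degenerate head-free points, which are the real source of spurious periods. A secondary, related gap: with $c$ tape cells packed per SFT cell and no delimiter symbols, there is nothing marking the ends of the ``length-$cn$ tape'', so the identification of an $n$-periodic row with a tape of length $cn$ (rather than a cyclic tape, or a tape of length $cn\cdot k$ shared by several heads) is not forced by local rules either.

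The paper's proof is organized precisely around closing this gap. It first builds (Lemma~\ref{lem:hori:structure}) a skeleton SFT $Y_k$ whose base layer is an \emph{aperiodic} East-deterministic SFT broken by an extra delimiter symbol: because the background is aperiodic, any horizontally periodic point is \emph{forced} to contain infinitely many delimiter columns, and additional counter and synchronization layers force these columns to be equally spaced at exactly the period and to bound rectangles of height $k^{p-1}$ with identical aperiodic content. Only then is the Turing machine encoded inside the rectangles (with its nondeterministic choices synchronized along rows). The degenerate configurations still exist in the paper's SFT, but they are aperiodic, so they contribute nothing to $\horper{X}$. Your reset-loop idea for turning an accepting run into a vertically bi-infinite orbit is a reasonable alternative to the paper's ``computation fits in a $p\times c^p$ rectangle'' device, but without an aperiodicity-based structure lemma underneath it, the equivalence ``$n\in L$ iff $n$ is a horizontal period'' fails in the direction you need.
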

\begin{proof}
Take a nondeterministic Turing machine $M$ accepting $\unaire{L}$ in linear space. 
Using traditional tricks from complexity theory, we can suppose that 
on input $1^n$ the Turing machine uses exactly $n+1$ cells of the tape
(i.e. the input, with one additional cell on the right) and works in time
exactly $c^n$ for some constant $c$ (depending only on the Turing machine $M$).

We are going to construct an SFT $X'$ such that $1^{n} \in L$
if and only if $n+4$ is a horizontal period of some point of $X'$.
The modification to obtain $n+1$ rather than $n+4$, and thus prove the
lemma, is left to the reader (basically ``fatten'' the vertical lines of \breaker
presented in lemma~\ref{lem:hori:structure} below so that they absorb 3 adjacent tiles), and serves no interest other than technical.

The proof may basically be split into two parts: 
\begin{itemize}
 \item First produce an SFT $Y_c$ such that
every point of horizontal period $n$ looks like a grid of rectangles
of size $n$ by $c^n$ delimited by horizontal and vertical markers (see fig.~\ref{fig:hori:shape}b) and whose horizontal periods are $\NN\setminus\{0,1\}$. 
Lemma~\ref{lem:hori:structure} shows how to construct such an SFT.

\item The Turing machine $M$ is then encoded inside these rectangles on a layer $C$: the nondeterministic transitions are \emph{synchronized} on every line to ensure the computations inside the rectangles are the same.
\end{itemize}

The main difficulty lies in the first part, while the second part is straightforward and does not need further explanation.

Now we prove that $1^{n} \in L$ if and only if $n+4$ is a horizontal period of the the SFT $X'$.
\begin{itemize} 
\item For the  component $C$ to be valid, the input $1^{p-4}$
(4 = 1 (\breaker symbol of $Y_c$) + 1 (left border for the TM) + 1 (right border for the TM) + 1 (blank marker on the end of the tape)) must be accepted by the Turing machine, 
hence $1^{p-4} \in L$
\item  Finally, due to the synchronization of the nondeterministic transitions, the
$C$ component is also $p$-periodic. As a consequence, our tiling is
$p$-periodic, hence $n = p-4$. Therefore $1^{n+4} \in L$
\end{itemize}

Conversely, suppose $1^{n} \in L$.
Consider the coloring of period $n+4$ obtained as follows (only a period is
described):
\begin{itemize}
	\item The component $A$ consists of $n+3$ correctly tiled columns of our
	  aperiodic East-deterministic SFT, with an additional column of \breaker.
	  Note that aperiodic points exist.
	\item The component $C$ corresponds to a successful computation path of
	  the Turing machine on the input $1^n$, that exists by hypothesis.
	  As the computation lasts less than $c^n$ steps, the computation
	  fits exactly inside the $n \times c^n$
	  rectangle.
	\item We then add all other layers according to the rules to obtain a
	  valid configuration, thus obtaining a point of $X'$ of period exactly $n+4$.
  \end{itemize}
\end{proof}

\begin{lemma}\label{lem:hori:structure}
There exists an SFT $Y_k$ such that for any point $y\in Y_k$ of horizontal period $p$, $y$ is composed of rectangles of size $p\times k^{p-1}$ with marked boundaries. Furthermore, all integers $n\geq 2$ are horizontal periods.
\end{lemma}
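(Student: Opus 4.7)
The plan is to realize $Y_k$ as a superposition of three layers. A vertical-breaker layer carries a distinguished symbol $\star$ constrained to form entire vertical columns, with two consecutive $\star$-columns never horizontally adjacent. A counter layer places, between two consecutive $\star$-columns on a non-marker row, the digits in $\{0,\dots,k-1\}$ of an integer in base $k$ equipped with a horizontally propagated carry bit: the cell just right of a $\star$ is the least significant digit, and the cell just left of the next $\star$ is the most significant digit. A horizontal-marker layer carries a symbol $\#$ which occupies an entire row and is triggered precisely when the counter overflows; the local rules enforce that the integer encoded by row $y+1$ is the one encoded by row $y$ plus $1$, and that the counter is reset to $0$ just after a $\#$-row.

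I would first check that any $y\in Y_k$ containing at least one $\star$-column and at least one $\#$-row is automatically a grid of rectangles: the $\star$-columns and $\#$-rows partition the plane, each rectangle has width equal to the spacing $w$ between two consecutive $\star$-columns, and since a base-$k$ counter on $w-1$ digits cycles through all $k^{w-1}$ values before an overflow, each rectangle has height $k^{w-1}$. Then if $y$ has horizontal period $p$, the set of $\star$-columns is $p$-periodic, and because a $\star$-cell is distinguished from every counter cell no proper divisor of the $\star$-spacing $w$ can leave a row invariant; hence $w=p$ and the rectangles have the claimed size $p\times k^{p-1}$.

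For the converse, given $n\geq 2$ I would exhibit the explicit configuration whose $\star$-columns lie at $n\ZZ$, whose counter runs freely on $n-1$ digits, and whose $\#$-rows occur every $k^{n-1}$ steps; by the previous analysis this configuration lies in $Y_k$ and has horizontal period exactly $n$.

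The main obstacle is to rule out degenerate configurations with no $\star$-column at all: a bi-infinite counter with all carries equal to zero is a static configuration whose horizontal period may be any $n\geq 2$ while exhibiting no rectangle structure. I would handle this by enriching the counter alphabet with a role label (\emph{least significant digit}, \emph{middle digit}, \emph{most significant digit}) whose local consistency rules require a $\star$-cell on each side of the counter, so that any all-middle bi-infinite row becomes inconsistent with horizontal periodicity of period at least two; the precise form of this label is routine once the three-layer skeleton above is in place.
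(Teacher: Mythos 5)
Your three-layer skeleton (vertical breaker columns, a base-$k$ counter between them, a full-width marker row fired on overflow) is the same architecture as the paper's, and your arguments that all columns must have equal width and that the period cannot be a proper divisor of the $\star$-spacing are fine. The genuine gap is exactly the degenerate case you flag at the end, and the patch you propose for it cannot work. The set of configurations containing at least one $\star$-column is not closed, hence is not definable by forbidden patterns: every finite window of an ``all-middle'' row occurs inside a legitimate long counter segment, so by compactness the configuration with no $\star$ anywhere belongs to $Y_k$ no matter what role labels you attach. Such a configuration can be horizontally periodic of period $p\geq 2$ while having no rectangle structure at all --- for instance a vertically constant row of digits $0101\cdots$ with all carries $0$ (with no $\star$ there is nothing to inject the $+1$, so each row just copies the one below). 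No finite alphabet enrichment with local rules of the kind you describe can make this row ``inconsistent with horizontal periodicity,'' because a constant labelling satisfies any translation-invariant local constraint and is itself periodic.

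The paper's way out, and the idea your proposal is missing, is to make the content \emph{between} the breaker columns a copy of an \emph{aperiodic} East-deterministic SFT $W$ rather than a bare counter. Then a point of $Y_k$ with no breaker column is just a point of $W$ (plus decorations), hence admits no periodicity vector; so every horizontally periodic point is automatically forced to contain a vertical breaker line, and the degenerate configurations, while still present in the subshift, are harmless because the lemma only speaks about periodic points. This choice has a cost that your deterministic-counter version does not have: adjacent rectangles may then carry different aperiodic fillings, so the horizontal period could a priori be a proper multiple of the $\star$-spacing. The paper repairs this with a third component $T$ that propagates the column of $W$-symbols just right of each breaker line along rows, and uses East-determinism of $W$ to conclude that all rectangles carry the same filling, whence the period equals the spacing. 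You would need both of these ingredients; without the aperiodic background the statement of the lemma is simply false for the $Y_k$ you construct.
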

\begin{proof}

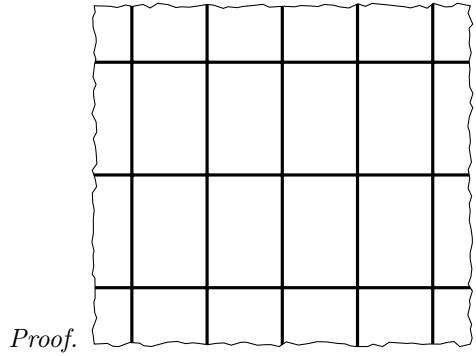
\begin{figure}[htbp]
\begin{center}
\begin{tikzpicture}[yscale=1.5]
   \clip[draw,decorate,decoration={random steps,segment length=3pt,amplitude=1pt}] 
  (0.5,0.5) rectangle (5.5,3.5);
   \draw[very thick] (0,0) grid (6,4);
   \foreach \i in {1,2,3,4,5,6}
    \foreach \j in {1,2,3,4}{
%% Decommenter les lignes suivantes pour rajouter la fleche symbolisant le calcul.
%     \draw[decorate,decoration=snake] (\i-0.5,\j-0.8) -- (\i-0.5,\j-0.2);
%     \draw[-latex] (\i-0.5,\j-0.2) -- (\i-0.5,\j);
%     \draw (\i-0.5,\j-1) -- (\i-0.5,\j-0.8);
    }
\end{tikzpicture}
\end{center}
\caption{The shape of the base SFT $Y_k$: whenever a point of $Y_k$ is periodic, it must have the 
above shape where the width of the rectangles is exactly the period $p$ and their height $k^{p-1}$.}
\label{fig:hori:shape}
\end{figure}

We will construct the SFT by superimposing several components (or layers) each of them
addressing a specific issue $Y_k=A\times C_k\times T$:
\begin{itemize}
 \item $A$ will allow us to force periodic tilings to have columns separated by vertical lines,
 \item $C_k$ will make the horizontal lines and and at the same time force the regularity of the
vertical ones,
\item $T$ will force that within a horizontal period, only one vertical line can appear.
\end{itemize}

The components and their rules are as follows:
\begin{figure}[t]
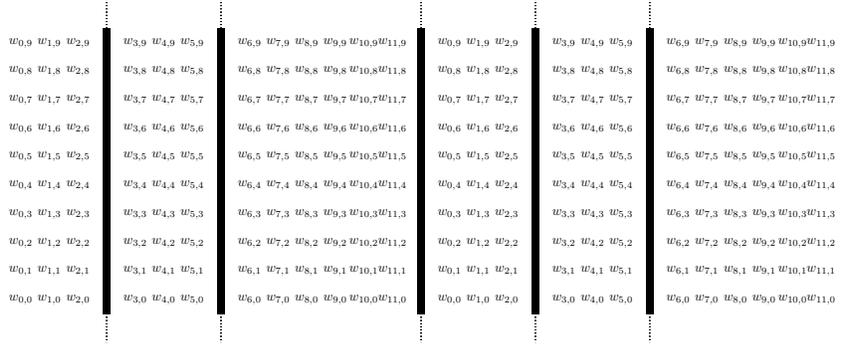

 \begin{center}
 \scalebox{0.5}{
 \includepicture{compoA}
 }
 \caption{A periodic point of $A$. Here $w_{i,j}$ are symbols of the alphabet of $W$.}\label{fig:compoA}
 \end{center}
\end{figure}

\begin{itemize}
 \item The first component $A$ is partly composed of an aperiodic East-deterministic SFT $W$,
 whose symbols will be called \emph{white} symbols. We can take the one from 
 J. Kari~\cite{KariNil}\footnote{Note that in the paper the SFT is described by Wang tilings and 
 that it is NW-deterministic. However, it is straightforward to modify the rules in order to get an 
 East-deterministic aperiodic SFT. This exact SFT will be studied later on in section~\ref{S:counting}}.
 
 To obtain $A$ from $W$, we add to the alphabet a new symbol \breaker.
 With the additional forbidden patterns:
 \begin{itemize}
  \item no white symbol shall be above or below a \breaker,
  \item two \breaker cannot appear next to each other horizontally.
 \end{itemize}
 
 With this construction, a periodic point of $A$ of period $p$ must have columns of white symbols
 separated by vertical lines of \breaker. This is due to the fact that $W$ forms an aperiodic SFT.
 
 For the moment nothing forbids more than one gray column to appear inside a period. Figure~\ref{fig:compoA} shows a possible form of a periodic point at this stage.

 \item The second layer of symbols $C_k=P_k\times\{\horline,\hempty\}$ will 
   produce horizontal lines so that points of period $n p$ will consist of rectangles of size
   $p \times k^{p}$, delimited by the symbols
   $\horline$ and $\breaker$.

 The idea is as follows: suppose each horizontal segment between two vertical lines is a
 word over the alphabet $P'_k=\left\{\{0,\dots k-1\}\times\{0,1\}\right\}$, that is, represents a number $a$ between  $0$ and $k^{p-1}-1$ written in base $k$. It is then easy with local constraints to ensure that the word on the \emph{next} line is $a+1 \mod k^{p-1}$. The $\{0,1\}$ component represents the carry. The alphabet $P_k$ is composed of $P'_k$ and of a carry ${}^1$ that will be superimposed to the symbol \breaker only.
See figure~\ref{fig:transdus} for a transducer in the case $k=2$ and its realization as an 
SFT.
 
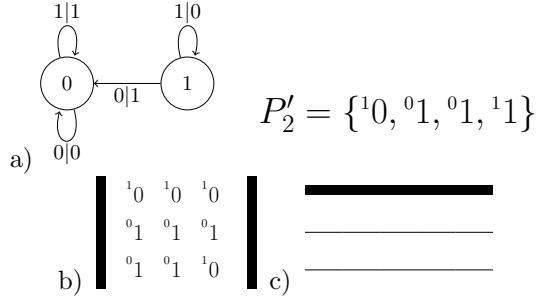
\begin{figure}[t]
 \begin{center}
  a)\scalebox{0.8}{
  \begin{tikzpicture}[auto,inner sep=1]
 \node[state] 	(q_0)		{$0$};
 \node[state]	(q_1)   at (2,0)	{$1$};
 \path[->]	(q_0)	edge [loop above]	node 	{$1|1$}	(q_0)
 		(q_0)	edge [loop below]	node	{$0|0$}	(q_0)
		(q_1)	edge 	node	{$0|1$}	(q_0)
		(q_1)	edge [loop above]	node	{$1|0$}	(q_1);
\begin{scope}[shift={(5,-1)},scale=0.25]
 \node (valeur) at (2,2) {\huge $P'_2=\{{}^{\textrm{\small 1}}0,{}^{\textrm{\small 0}}1,{}^{\textrm{\small 0}}1,{}^{\textrm{\small 1}}1\}$};
\end{scope}
\end{tikzpicture}
}
b)~
\scalebox{0.5}{
  \begin{tikzpicture}[scale=0.25]
\begin{scope}
 \node (valeur) at (2,2) {\huge 1};
 \node (retenue) at (1,3) {\small 0};
\begin{scope}[shift={(4,0)}]
 \node (valeur) at (2,2) {\huge 1};
 \node (retenue) at (1,3) {\small 0};
\end{scope}
\begin{scope}[shift={(8,0)}]
 \node (valeur) at (2,2) {\huge 0};
 \node (retenue) at (1,3) {\small 1};
\end{scope}
\begin{scope}[shift={(0,4)}]
 \node (valeur) at (2,2) {\huge 1};
 \node (retenue) at (1,3) {\small 0};
\end{scope}
\begin{scope}[shift={(4,4)}]
 \node (valeur) at (2,2) {\huge 1};
 \node (retenue) at (1,3) {\small 0};
\end{scope}
\begin{scope}[shift={(8,4)}]
 \node (valeur) at (2,2) {\huge 1};
 \node (retenue) at (1,3) {\small 0};
\end{scope}
\begin{scope}[shift={(0,8)}]
 \node (valeur) at (2,2) {\huge 0};
 \node (retenue) at (1,3) {\small 1};
\end{scope}
\begin{scope}[shift={(4,8)}]
 \node (valeur) at (2,2) {\huge 0};
 \node (retenue) at (1,3) {\small 1};
\end{scope}
\begin{scope}[shift={(8,8)}]
 \node (valeur) at (2,2) {\huge 0};
 \node (retenue) at (1,3) {\small 1};
\end{scope}
\begin{scope}[shift={(-4,8)},scale=4]
\sbreaker
\end{scope}
\begin{scope}[shift={(-4,4)},scale=4]
\sbreaker
\end{scope}
\begin{scope}[shift={(-4,0)},scale=4]
\sbreaker
\end{scope}
\begin{scope}[shift={(12,0)},scale=4]
\sbreaker
\end{scope}
\begin{scope}[shift={(12,4)},scale=4]
\sbreaker
\end{scope}
\begin{scope}[shift={(12,8)},scale=4]
\sbreaker
\end{scope}
\end{scope}
\end{tikzpicture}
}~~~c)~
\scalebox{0.5}{
\begin{tikzpicture}[scale=0.25]
\fill[color=white] (0,0) rectangle (1,1);
\begin{scope}[shift={(-2,2)}]
\begin{scope}[scale=4]
 \foreach \i in {0,...,4}{
  \begin{scope}[shift={(\i,2)}]
   \shorline
  \end{scope}
  \foreach \j in {0,...,1}{
   \begin{scope}[shift={(\i,\j)}]
    \shempty
   \end{scope}
 }
}
\end{scope}
\end{scope}
\end{tikzpicture}
}
 \end{center}
 \caption{a) The transducer corresponding to $k=2$ and the symbols of the corresponding SFT. A valid pattern for $C_2$ is given in b) and c): the breaker symbol in b) is from the $A$ layer and corresponds always to the ${}^1$ symbol of $C_2$.}
 \label{fig:transdus}
\end{figure}
 
% \image{transducteur.tex}{Transducer corresponding to the operation of adding a bit and its 
% corresponding tiles. The states of the transducer correspond to the carry.}{transdu}
% \image{exemple_transdu.tex}{Example of a valid tiling with the transducer tiles.}{transdu_pave}

With the $\{\horline,\hempty\}$ subcomponent, we mark the lines corresponding to the number $0$, so that one line out of $k^{p-1}$ is marked. This line is the only one where a ${}^01$ is transformed into a ${}^10$ on the right of a vertical line of \breaker.

A \hempty is forbidden to appear on the right or left of a \horline: this forces each column to have counters that are resetted at the exact same moment, and thus to have the exact same size. Figure~\ref{fig:compoDT}a shows some typical tiling at this stage:
 the period of a tiling is not necessarily the same as the distance between
 the rectangles, it may be larger. Indeed, the white symbols in two consecutive rectangles
 may be different.

 \item Component $T$ is formed of the same alphabet as $W$ of component $A$, recall that $W$ is an East-deterministic SFT. The forbidden patterns are that two different symbols cannot be horizontal neighbors. In addition to that, we forbid of elements of component $T$ on the right of a \breaker to be different to the ones of component $W$. That means that the symbols of the first column to the right of a vertical line of \breaker are exactly the same for each vertical line of \breaker.
 
 The SFT $W$ being East-deterministic, this means that the symbols of the
columns between vertical lines of \breaker are exactly the same for each column
, these are the boundaries of the rectangles.
 
 At this stage, a periodic point necessarily has regular rectangles on all the plane, whose width correspond to the period, as shown in figure~\ref{fig:compoDT}b. 
\begin{figure}[tbp]
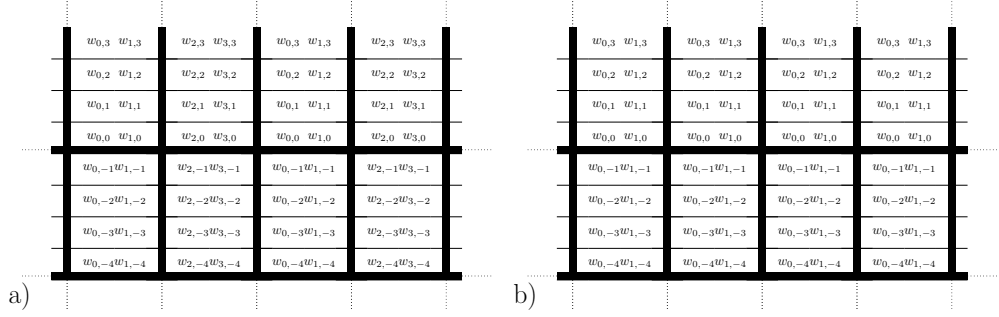

 \begin{center}
\scalebox{0.5}{
\includepicture{rectangles}
 }
 \caption{ a) An example of a valid periodic point with $C_2$, the distance between two consecutive vertical lines  is now constant. However the period is not necessarily the width of the rectangles. b) Once we add component $T$, the width of the rectangles is now exactly the period.
 Note that we only show here the component $A$ and the subcomponent
 $\{\protect\horline,\protect\hempty\}$ of $C_2$.
 }\label{fig:compoDT}
 \end{center}
\end{figure}
\end{itemize}

Now we prove that if $y\in Y_k$ is periodic of period $n$, then it necessarily is formed of 
vertical lines of $\breaker$ at distance $n$ of each other and of horizontal lines of $\horline$
at distance $k^{n-1}$ of each other.

Consider a point of $Y_k$ of period $n$:
\begin{itemize}
\item Due to component $A$, a vertical line of \breaker must appear. The period is a
succession of vertical lines of \breaker and white columns.
\item  Due to component $C_k$, the vertical lines of \breaker are spaced by a distance of $p$, where $p \mid n$. Furthermore, there are horizontal lines of \horline at distance $k^{p-1}$ of each other.
\item  Due to component $T$, the tiling we obtain is horizontally periodic of period $p$, thus
$p=n$.
\end{itemize}

\end{proof}

\newpage
\section{Strong periodicity in SFTs and time complexity}

In this section we prove theorem~\ref{thm:strong} on strong periods.

In a square $n\times n$ one can only embed computations ending in time inferior
to $n$. However, given a $n\times n$ square of symbols, one
cannot check that it has no forbidden patterns in less than $n^2$ time
steps with a Turing machine. 

The analogue for higher dimensions holds: any $2d$-dimensional cube of $n^{2d}$ symbols
can embed computations in time $n^d$, and checking such a cube needs $n^{2d}$
time steps. Thus the class $\ntimeun{n^d}$ for unary inputs\footnote{
Note that the class $\ntimeun{n^d}$ for unary inputs corresponds to the class
$\ntimebin{2^{dn}}$ for binary inputs and that complexity classes are usually
defined on binary inputs. So $\NP_1$  is not the famous $\NP$ class, although $\NP=\P$ 
would imply $\NP=\P$.} 

can be captured by $2d$-dimensional cubes and only $d$-dimensional cubes are
checkable in time $n^d$

The gap here is not surprising: while space complexity classes are usually
model independent, this is not the case for time complexity, where the exact
definition of the computational model matters. An exact characterization of
strongly periodic SFTs for $d=2$ would in fact be possible, but messy: it would
involve Turing machines working in space $O(n)$ with $O(n)$ reversals, see e.g
\cite{CaraMey}.

Here the solution comes from the fact that periodic SFTs of all dimensions would
be captured by the time complexity class $\NP_1=\bigcup_{d\in\NN}\ntimeun{n^{d}}$,
the gap being filled by the infinite union. This is theorem~\ref{thm:strong},
whose proof below will be, as before, divided in three parts :
\begin{itemize}
 \item We first show that one can check whether a $d$ dimensional SFT is strongly 
 periodic of period $p$ in time $p^d$ and thus that the problem is in 
 $\ntimeun{n^{d}}$ (lemma~\ref{lem:strong:direct}).
 \item For the converse,  we first construct a base SFT with marked cubes 
 (lemma~\ref{lem:strong:struct}) in a similar way as for horizontal periods,
 \item All that is left to prove then is how the Turing machines are encoded 
 inside these cubes  (lemma~\ref{lem:strong:reciproque}). 
\end{itemize}

It is interesting to note that the complexity class $\NP_1$ also characterizes spectra 
of first order formula,  see \cite{JonesSelman}.

\begin{lemma}\label{lem:strong:direct}
 For any $d$ dimensional SFT $X$, $\unaire{\strongper{X}}\in\NP$:
 there exists a Turing machine $M\in\NP_1$ that given $p$ as an input, 
 determines whether $p$ is a strong period of $X$. 
\end{lemma}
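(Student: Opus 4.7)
The key observation is that a point $c$ of strong period $p$ in a $d$-dimensional SFT $X$ is completely determined by its restriction to the fundamental cube $[0,p-1]^d$, which contains only $p^d$ cells. So on input $1^p$ the nondeterministic Turing machine will begin by guessing a coloring $C:[0,p-1]^d \to \Sigma$ cell by cell, writing it on the work tape in time $O(p^d)$. The rest of the machine will be deterministic and will simply verify that the $\ZZ^d$-periodic configuration obtained from $C$ lies in $X$ and has lattice of periods \emph{exactly} $p\ZZ^d$.

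For the first check, let $r$ be the radius of $X$. Because the configuration is $p$-periodic in each coordinate, a pattern of $\F$ can appear in it iff it appears among the $p^d$ distinct $r$-windows anchored at positions $z \in [0,p-1]^d$ (with indices read modulo $p$). Each such window has bounded size $O(r^d) = O(1)$ and can be tested against the finite list $\F$ in constant time, so the whole check runs in time $O(p^d)$.

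For the second check, note that $p\ZZ^d \subseteq \Gamma_c$ holds automatically from the construction. If $\Gamma_c$ were strictly larger, there would exist $v \in \Gamma_c \setminus p\ZZ^d$, and then $v \bmod p \in [0,p-1]^d \setminus \{0\}$ would also lie in $\Gamma_c$. It therefore suffices to verify that for \emph{every} $v \in [0,p-1]^d \setminus\{0\}$ there exists some $z \in [0,p-1]^d$ with $C(z+v \bmod p) \neq C(z)$; if any single $v$ fails this, the machine rejects. There are fewer than $p^d$ vectors to test and each test inspects $p^d$ cells, so this step runs in time $O(p^{2d})$.

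Altogether the machine runs in time $O(p^{2d})$, placing $\unaire{\strongper{X}}$ in $\ntimeun{n^{2d}} \subseteq \NP_1$. There is no real obstacle here; the only point deserving care is the equivalence ``$\Gamma_c = p\ZZ^d$ iff no nonzero $v \in [0,p-1]^d$ acts trivially on $C$ with wrap-around'', which reduces the $a\ priori$ infinite minimality condition to a finite test of size polynomial in $p^d$.
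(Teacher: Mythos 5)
Your proposal is correct and follows essentially the same approach as the paper's (very terse) proof: nondeterministically guess the coloring of the $p^d$ fundamental cube, check for forbidden patterns with wrap-around, and verify that the period lattice is exactly $p\ZZ^d$. You actually spell out the minimality check more carefully than the paper does (testing all nonzero residues $v \in [0,p-1]^d$ rather than just integers $k<p$), which is the right way to ensure $\Gamma_c = p\ZZ^d$ rather than merely $p\ZZ^d \subseteq \Gamma_c \subsetneq$ some larger lattice.
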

\begin{proof}
 It suffices to take a Turing machine that nondeterministically guesses a $p^d$ cube 
 and then checks whether it contains any forbidden patterns and whether $p$ is the strong 
 period of $X$ : for this last part, it has to check that for all $k<n$, $k$ is not a period.
\end{proof}

\begin{lemma}\label{lem:strong:struct}
There exists a $d$-dimensional SFT $Y_d$ such that :
\begin{itemize}
 \item Any periodic point is strongly periodic.
 \item Any strongly periodic point $y\in Y_d$ of period $p$ 
 is constituted of adjacent $d$-cubes $p^d$ with marked borders.
 \item Every integer $p\geq 2$ is a strong period of $Y_d$.
\end{itemize}
\end{lemma}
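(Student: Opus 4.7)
The plan is to extend the construction of lemma~\ref{lem:hori:structure} to arbitrary dimension $d$, while adding extra machinery to force \emph{cubic} (rather than merely rectangular) structure. For each axis direction $i\in\{1,\dots,d\}$ I would superimpose a rotated copy of the layers $A\times C_k\times T$ from lemma~\ref{lem:hori:structure}, taken now in the $2$-plane spanned by axis $i$ and one reference axis and extended trivially in the remaining coordinates. The aperiodic, deterministic base SFT needed by these layers is provided either by a $d$-dimensional version of Kari's SFT or, more economically, by embedding the $2$-dimensional SFT in each coordinate plane. This first step already guarantees that any periodic point of $Y_d$ is decorated, in each axis direction $i$, by a family of equally spaced breaker hyperplanes perpendicular to $i$, at some distance $p_i$.

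The heart of the argument is then to show that, for a strongly periodic point of period $p$, all the $p_i$ are equal to $p$. The rotated copies of $T$ already force that within each axis direction the period \emph{along} that direction coincides with the corresponding spacing $p_i$, so a strongly periodic point of period $p$ satisfies $p_i=p$ for every $i$, provided each $p_i$ is forced to divide $p$ in the first place. To ensure this, and to ensure that the breaker hyperplanes in different directions fit together into a regular $d$-dimensional grid of cubes (rather than an arbitrary lattice of rectangular boxes), I would add a synchronization layer at the $(d-1)$-dimensional intersections of breaker hyperplanes: wherever a breaker hyperplane perpendicular to axis $i$ meets one perpendicular to axis $j$, a corner marker is required, and the counter components $C_k^{(i)}$ and $C_k^{(j)}$ are forced to reset simultaneously along their common $(d-2)$-face. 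By the determinism of the aperiodic base SFT, this propagates equality of spacings to the whole configuration.

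For the third property, it suffices to exhibit, for each $p\geq 2$, a strongly $p$-periodic point of $Y_d$. One partitions $\ZZ^d$ into $d$-cubes of side $p$, decorates the $(d-1)$-faces between adjacent cubes with breakers, fills each cube interior with a valid finite pattern of the aperiodic SFT $W$ (which is possible for any finite box because $W$ is only aperiodic for infinite configurations), chooses the contents periodically in each direction, and sets all counters accordingly; this yields a point of $Y_d$ with $\Gamma_c=p\ZZ^d$.

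The main obstacle is the synchronization step of the second paragraph: a naive $d$-fold superposition of the horizontal construction only yields a lattice of rectangular boxes whose side lengths are divisors of $p$, with no reason for the different $p_i$ to coincide. Designing the corner and propagation rules carefully enough to force $p_1=\cdots=p_d=p$ by purely local constraints, without accidentally ruling out some integers $p\geq 2$ as admissible strong periods, is the technically delicate point; the determinism of Kari's SFT in a chosen direction is what makes the propagation local and consistent.
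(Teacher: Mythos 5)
There is a genuine gap, and it sits exactly where you flag the difficulty. Importing the layers $A\times C_k\times T$ of Lemma~\ref{lem:hori:structure} brings along the $k$-ary counter $C_k$, whose whole purpose is to place the transverse marker hyperplanes at distance $k^{p_i-1}$ from one another, exponentially far compared to the spacing $p_i$ of the breaker hyperplanes. So each rotated copy produces boxes with one exponentially long side; forcing the counters of copies $i$ and $j$ to reset simultaneously only yields $k^{p_i-1}=k^{p_j-1}$, i.e.\ $p_i=p_j$ among the non-reference directions, while the spacing along the reference axis remains $k^{p-1}\neq p$. This is incompatible with the second bullet (cubes of side $p$) and also breaks the third bullet, since the minimal period of such a point cannot be $p$ in every direction. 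The missing idea is a mechanism that equates the spacing \emph{along} the reference axis with the spacing along each other axis; the paper does this without any counter, by adding for each $i$ a signal layer $S_i$ living in the plane spanned by $e_1$ and $e_i$ whose diagonal symbol travels at slope $1$ from one corner to the next, so that two consecutive marker hyperplanes in direction $e_1$ and two consecutive ones in direction $e_i$ are forced to bound a square. Your ``corner markers at intersections plus simultaneous resets'' does not substitute for this diagonal signal.

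Two further divergences from the paper are worth noting, one of which is a simplification you could adopt. First, the paper uses a single $2$-dimensional NW-deterministic aperiodic SFT, placed only in the $(e_1,e_2)$-plane and extended \emph{constantly} in all remaining coordinates; aperiodicity in the other directions is never needed because the $S_i$ layers already pin down the cube structure there. You therefore need neither a $d$-dimensional aperiodic SFT nor $d$ interacting aperiodic copies, which would create nontrivial consistency issues between overlapping aperiodic backgrounds. Second, to make the spacing equal to the strong period (rather than a proper divisor of it) you must synchronize the aperiodic contents of \emph{all} cubes, not just adjacent ones in one direction: since the base SFT is deterministic from its top row and leftmost column, the paper transmits exactly those two lines to the horizontally, vertically and diagonally adjacent squares via four sublayers of $T$. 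Your existence argument for the third bullet is fine once the contents of all cubes are taken identical, but as stated the construction it relies on does not produce cubes.
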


\begin{proof}
As before, the construction will be based on some aperiodic SFT, with some added symbols 
to break the aperiodicity and force a regular structure. The SFT $Y_d$ is made of three layers $A\times S\times T$:
\begin{itemize}
 \item layer $A$ will force strongly periodic points to have marked lines,
 \item layer $S$ will force strongly periodic points to have marked $d$-cubes,
 \item and finally, layer $T$ will force the strongly periodic points to be composed of $d$-cubes 
 of side $p$, the strong period.
\end{itemize}

We will now detail each component and the strongly periodic points thus obtained.
\begin{itemize}
\item Let $W$ be a 2-dimensional NW-deterministic aperiodic SFT\footnote{For this lemma we can take any such SFT,
however we will use Kari's~\cite{KariNil} tileset later to further the construction.
}, 
we define a 2-dimensional SFT $A'$:
\begin{itemize}
 \item the alphabet of $A'$ is composed of the alphabet of $W$ with the additional symbols 
 $\breaker,\corner,\horline$,
 \item the forbidden patterns of $W$ are kept and the following are added:
\begin{itemize}
 \item above and bellow a \breaker may only appear a \breaker or a \corner,
 \item on the left and right of a \horline may only appear a \horline or a \corner,
 \item on the left and right of a \corner there may only be a \horline,
 \item above and below a \corner, there may only be a \breaker.
\end{itemize}
\end{itemize}
The (strongly) periodic points of $A'$ have necessarily one of the new symbols 
$\breaker,\corner,\horline$. That is to say they are necessarily formed of either an infinity of lines 
of \breaker, either an infinity of lines of \horline, or by an infinity of
squares with sides marked by \horline and \breaker and corners by \corner.

The $d$-dimensional SFT $A$ is obtained from the 2-dimensional SFT $A'$ by keeping the alphabet,
keeping the rules for the first two dimensions, and then force the symbols next to each 
other along all other directions to be equal. In the sequel, we will call $A'$ the plane with the
rules of $A'$.

\item The second layer $S$ will force the $A'$ plane of periodic points to be formed of 
squares and will mark the frontiers of the $d$-cubes. 
For $2\leq i\leq d$, we define $d-1$ SFTs $S_i$ with the same alphabet formed by the symbols 
$\diagdiag,\diagver,\diaghor,\diagcorner,\diagleft,\diagright$:
\begin{itemize}
 \item The adjacency rules for $S_i$ 
on the plane defined by $e_1, e_i$ are that two symbols can be 
next to each other iff their borders match: left/right matchings correspond to $\pm e_i$ 
and above/below to $\pm e_1$. 
 \item The rules applying to the other dimensions are
that if there is a symbol $a$ at $x\in\ZZ^d$, then there must also 
be a symbol $a$ at $x\pm e_k$, for $k\neq 1,i$. 
\end{itemize}

We superimpose the $S_i$'s in order to obtain $S$: at each position, the symbols on all $S_i$ components must all be 
taken from only one of the sets $\{\diagcorner,\diaghor\}$ and $\{\diagver,\diagleft,\diagright,\diagdiag\}$. 

See figure~\ref{fig:strong:cubes3d} for an example of how $S_i$ and $S_j$ can be
superimposed.

\begin{figure}
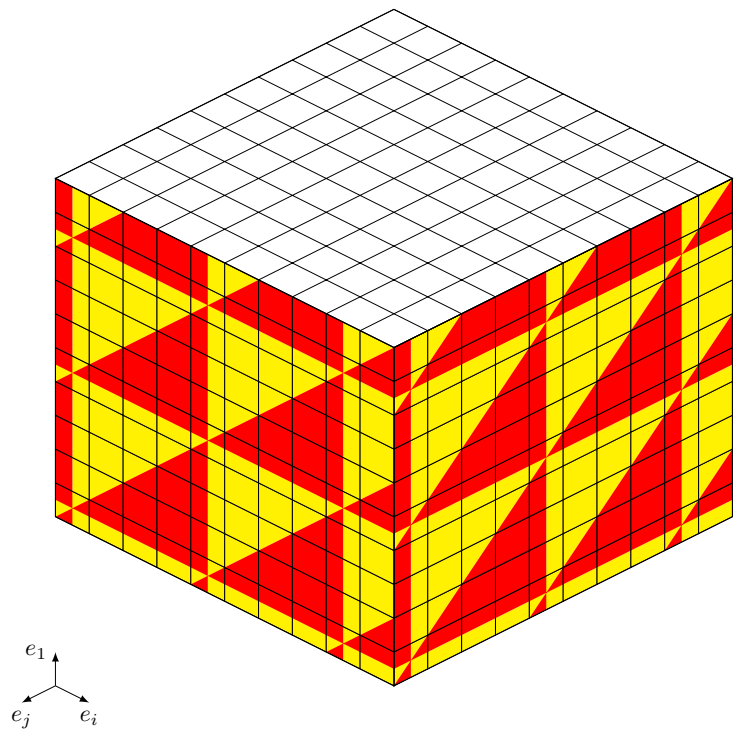

\centering
\scalebox{0.9}{\includepicture{cubee1eiej}}
 \caption{\label{fig:strong:cubes3d} How $S_i$ and $S_j$ are superimposed.}
 
\end{figure}

Then to obtain $A\times S$ we add superimposition rules only with the 
$S_2$ subcomponent of $S$,  which has its rules on the $A'$ plane and forms squares on it. 
The rules are that $\diagcorner$ can only be superimposed to 
$\corner$, \diaghor (resp. \diagver) can only be superimposed to \horline (resp. \breaker) and the
other symbols can only be superimposed to white symbols. As a consequence, the symbols \breaker,\horline,\corner on
the $A'$ plane must form squares on the strongly periodic points. Figure~\ref{fig:strong:superimposition} 
shows how the $A'$ planes of component $S_2$ and component $A$ are superimposed.

\begin{figure}
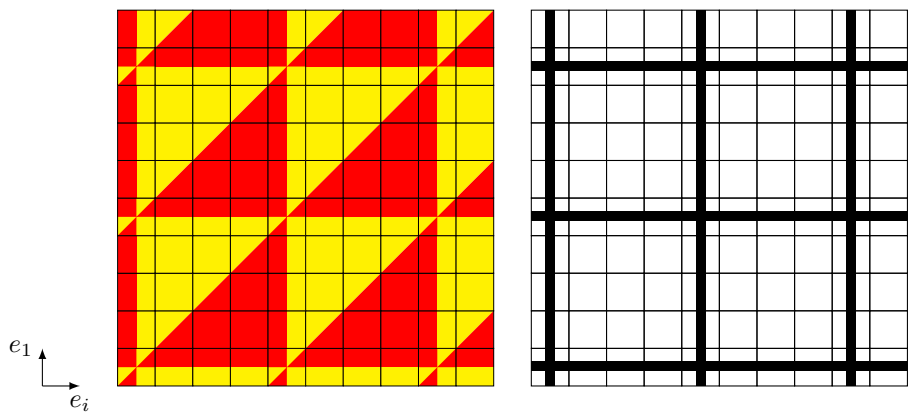

 \centering
 \begin{center}
  \includepicture{planAprimS2}
 \end{center}
 \caption{\label{fig:strong:superimposition}The $A'$ planes of components $S_2$ (left) and $A$ (right) and how they are superimposed: 
 layer $S_2$ forces the symbols \protect\breaker,\protect\horline and \protect\corner of component $A$ to form squares.}
\end{figure}

The strongly periodic points of the resulting SFT are points that have 
$d$-cubes whose corners are marked by $(\diagcorner, \dots, \diagcorner)$. The boundaries of the 
$d$-cubes are marked by the  \diaghor and 
\diagver symbols: if the side of the $d$-cubes is $n$, and there is a corner
at $\vect p=(p_1,\dots,p_d)\in\ZZ^d$, then for any point $\vect q=(q_1,\dots,q_d)\in\ZZ^d$, a
\diagver or \diagcorner on component $S_k$ is equivalent to $p_k\equiv q_k \mod
n$ and a \diaghor or \diagcorner is equivalent to $p_1\equiv q_1 \mod n$. 

However, the squares formed on the $A'$ plane of $A$ may not have the same aperiodic background, and 
thus there could be more than one square in a period. Thus the period is a multiple of the size of the $d$-cubes. 
We want now to prevent this from happening and force the strong period to be exactly the size of the $d$-cubes.

\item Component $T=L_{right}\times L_{diag} \times U_{up}\times U_{diag}$ is
here to address this last problem: by synchronizing the aperiodic 
background between squares on $A'$ it will force the least distance between two 
$(\diagcorner,\dots,\diagcorner)$ to be the strong period. To do this, since $W$ is NW-deterministic, we only need to
transmit to the neighboring squares the upper line of symbols and the leftmost one, see 
figure~\ref{fig:strong:transmiNW}. 
Each subcomponent's alphabet is a copy of the alphabet of $W$ and the rules
are as follows:
\begin{itemize}
 \item On $L_{right}$, the symbols at $\vect z$ are the same as the symbols at
$\vect z\pm \vect{e_2}$. The only superimposition rule is that a symbol on
the right of a \breaker on $A'$ must be the same as on $L_{right}$. This component synchronises 
the leftmost column of symbols of all horizontally aligned squares.
 \item On $L_{diag}$, the symbols at $\vect{z}$ are the same as the symbols at
$\vect z\pm (\vect{e_1}+\vect{e_2})$. The only superimposition rule is that a
symbol on the right of a \breaker on $A'$ must be the same as on $L_{diag}$.
This component synchronises the leftmost column of symbols of all diagonally aligned squares.
 \item On $U_{up}$, the symbols at $\vect z$ are the same as the symbols at
$\vect z\pm \vect{e_1}$. The only superimposition rule is that a symbol above a  \horline
on $A'$ must be the same as on $U_{up}$. 
 \item On $U_{diag}$, the symbols at $\vect z$ are the same as the symbols at
$\vect z\pm (\vect{e_1}+\vect{e_2})$. The only superimposition rule is that a symbol a \horline on
$A'$ must be the same as on $U_{diag}$. 
\end{itemize}

\begin{figure}
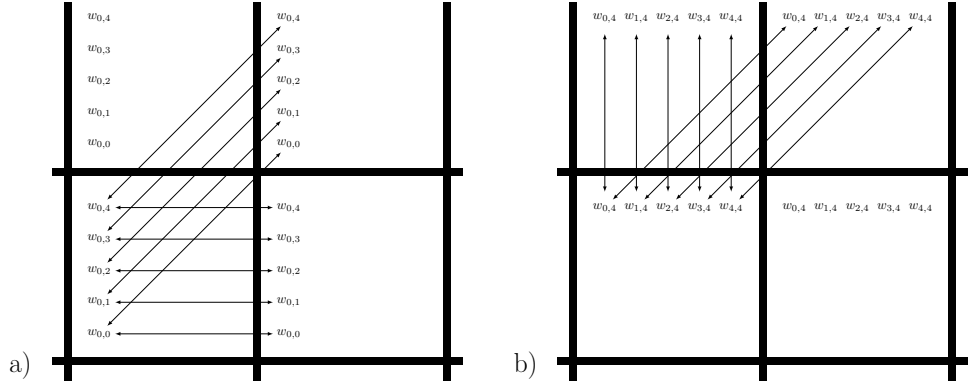

 \centering
 \scalebox{0.5}{\includepicture{transmiNW}}
  \caption{\label{fig:strong:transmiNW}
  In a), the way $L_{right}$ and $L_{diag}$ synchronise the first column of the
aperiodic background for all squares. In b), the way $U_{up}$ and $U_{diag}$
synchronise the top line of the aperiodic background for all squares.}
\end{figure}
\end{itemize}

The construction is now finished. Now take a strongly periodic point
of $Y_d$ with strong period $p$. By construction, it necessarily is
constituted of identical adjacent hypercubes of side $p$. The boundaries of the
hypercubes being marked by symbols $(\diaghor,\dots,\diaghor)$ and
$(\diagver,\dots,\diagver)$ and the corners by
$(\diagcorner,\dots,\diagcorner)$. It is straightforward to see that any $p\geq 2$ is a strong 
period of the constructed subshift.
\end{proof}

\begin{lemma}\label{lem:strong:reciproque}
 Let $L\subseteq \NN$ be a language such that $\unaire{L}\in\ntime{n^d}$,
there exists a $2d$-dimensional SFT $X$ such that $L=\strongper{X}$ and such that all 
periodic points are strongly periodic.
\end{lemma}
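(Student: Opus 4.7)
My plan is to superimpose on the base SFT $Y_{2d}$ from Lemma~\ref{lem:strong:struct} an additional ``computation'' layer $C$ that simulates $M$ inside every hypercube. Recall $Y_{2d}$ already forces any strongly periodic point of period $p$ to be tiled by identical copies of a $p^{2d}$-hypercube with marked boundaries $(\diaghor,\dots,\diaghor)$, $(\diagver,\dots,\diagver)$ and marked corners $(\diagcorner,\dots,\diagcorner)$; it also makes every $p\geq 2$ a strong period, and forces every periodic point to be strongly periodic. So those last two properties are inherited automatically, and it only remains to extend $Y_{2d}$ so that $p$ is a strong period of the extended SFT $X$ if and only if $p\in L$.

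Fix a nondeterministic Turing machine $M$ accepting $\unaire{L}$ in time $n^d$ (and hence space $\leq n^d$). Its spacetime diagram on input $1^p$ has at most $p^d \times p^d = p^{2d}$ cells, matching exactly one hypercube. I would identify the coordinates of a hypercube with a pair $(\vec t,\vec s)\in\{0,\dots,p-1\}^d\times\{0,\dots,p-1\}^d$, reading $\vec t$ and $\vec s$ as $d$-digit base-$p$ numbers to get a macro-time $\tau(\vec t)\in\{0,\dots,p^d-1\}$ and a macro-tape-position $\sigma(\vec s)\in\{0,\dots,p^d-1\}$. The layer $C$ stores at $(\vec t,\vec s)$ the content of tape cell $\sigma(\vec s)$ at time $\tau(\vec t)$, together with a flag indicating whether the head sits there and, if so, the current state.

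The local rules on $C$ then enforce: (i) the initial slice $\vec t=\vec 0$ carries $1^{p}$ followed by blanks, with the head at the leftmost cell in state $q_0$; (ii) some cell carries an accepting state; (iii) the tape contents at macro-time $\tau(\vec t)+1$ are obtained from those at $\tau(\vec t)$ by a valid (possibly nondeterministic) transition of $M$. To make (iii) local I would add an auxiliary counter layer, modelled directly on the component $C_k$ of Lemma~\ref{lem:hori:structure}, realising the lex-successor map $\vec t\mapsto \vec t+1 \pmod{p^d}$ by base-$p$ carry-propagation along the first $d$ directions, so that cells corresponding to consecutive macro-times become locally identifiable; an analogous carry layer along the last $d$ directions lets the rules express a head move $\sigma(\vec s)\mapsto\sigma(\vec s)\pm 1$ across a base-$p$ boundary. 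Exactly as in the proof of Lemma~\ref{lem:hori:reciproque}, the nondeterministic transitions are \emph{synchronised} across all hypercubes of the tiling, so that in a strongly periodic point every hypercube hosts the same accepting computation of $M$ on $1^p$.

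One direction is then immediate: if $p$ is a strong period of $X$, reading off the computation in one hypercube gives an accepting run of $M$ on $1^p$, so $p\in L$. Conversely, given $p\in L$, I pick an accepting run of $M$ on $1^p$, lay it out inside a single hypercube according to the encoding above, and extend periodically, using that $p\geq 2$ is already a strong period of $Y_{2d}$. That periodic implies strongly periodic, and that $L=\strongper{X}$ are thus forced by the construction. The main technical hurdle I expect is precisely the locality of the lex-successor/carry relation in $d$ dimensions; however this is a direct $d$-dimensional generalisation of the counter already built for Lemma~\ref{lem:hori:structure}, so no genuinely new device is needed once that construction is in hand, and everything else is a routine superposition of Turing transition rules.
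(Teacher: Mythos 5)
Your overall architecture matches the paper's: take $Y_{2d}$ from Lemma~\ref{lem:strong:struct}, superimpose a computation layer identifying each cell of a $p^{2d}$-hypercube with a (macro-time, macro-position) pair in $\{0,\dots,p^d-1\}^2$, synchronise the nondeterministic choices across hypercubes, and conclude by reading a run off one cube in one direction and laying one down in the other. However, there is a genuine gap at the one place you yourself flag as the ``main technical hurdle'': you propose to read $\vect t$ and $\vect s$ as \emph{standard} $d$-digit base-$p$ numbers and to make the successor relation local by a carry-propagation layer modelled on the counter $C_k$ of Lemma~\ref{lem:hori:structure}. This does not work. Under the positional encoding, the cells realising macro-times $\tau$ and $\tau+1$ are not adjacent whenever a carry occurs: the successor of $(p-1,\dots,p-1,t_k,\dots)$ is $(0,\dots,0,t_k+1,\dots)$, which lies at distance $p-1$ in each rolled-over coordinate. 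An SFT rule of fixed radius $r$ cannot relate the tape contents stored at two cells whose distance grows with $p$, and the situation is even worse here than for a single counter digit, because at such a seam the constraint to be transported is an entire tape configuration (one full $d$-dimensional slice of the cube), not one symbol. The counter of Lemma~\ref{lem:hori:structure} solves a different problem: there the whole $p$-digit value occupies one row and consecutive values occupy \emph{adjacent} rows, so incrementation is local by construction; it gives you no way to make the map $\vect t\mapsto\vect t+1$ on cube coordinates a bounded-range relation.

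The missing idea, which is the core of the paper's proof of this lemma, is to replace the positional encoding by a \emph{reflected $p$-ary (Gray) code}: the digit $t_i$ contributes $t_i$ or $(p-1)-t_i$ to the value according to the parity of $\sum_{j>i}t_j$, so that consecutive macro-times (and consecutive tape positions) always correspond to cells differing in exactly one coordinate by exactly $\pm 1$. This is the ``folding'' of the space-time diagram (Borchert, Jones--Selman): it turns the time- and space-successor relations into nearest-neighbour relations, and the only auxiliary layers needed are parity layers $P^t,P^s$ recording $\sum_{j>i}t_j \bmod 2$ for each $i$, which are trivially local and anchored at the marked corner of the cube. Without this (or an equivalent Hamiltonian-path encoding of $\{0,\dots,p^d-1\}$ into the cube with adjacent consecutive cells), step (iii) of your construction cannot be expressed by finitely many forbidden patterns, so the proof as written does not go through. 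The remaining ingredients of your proposal (the synchronisation layers for nondeterminism, the two directions of the equivalence, inheriting ``periodic implies strongly periodic'' from $Y_{2d}$) are consistent with the paper's argument.
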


\begin{proof}
Let $M$ be a Turing machine recognizing $L$ in nondeterministic time $n^d$. We need to 
construct an SFT $X_M$ whose strong periods are exactly the accepted inputs of $M$. We
Using lemma~\ref{lem:strong:struct}, all that is left to prove is how to restrict the 
periods to the integers accepted by $M$. In order to do this,
we will encode computations of $M$ inside the $2d$-cubes of $Y_{2d}$: on a
unary input, $M$ takes at most $n^d$ time steps to accept or reject, so a $2d$-cube of side 
$n$ has exactly the right amount of space to encode such a computation. 
The idea is to \emph{fold} the space-time diagram of the Turing machine so that it fits into 
the cube while still preserving the local constraints. Such a folding has already been described by Borchert
\cite{Borchert} and can also be deduced from Jones/Selman \cite{JonesSelman}.
We then have to make sure that the nondeterministic transitions are identical in all $2d$-cubes of a point.
Let us now describe this in more details.

In a space-time diagram of $M$ with input $n$, tape cells have coordinates 
$(t,s)$ with $t \leq n^d, s\leq n^d$, where $t$ is the time step and $s$ the position in space.
We now have to transform each cell $(t,s)$ into a cell of the $2d$-cube of size $n$,
so that two consecutive (in time or space) cells of the space-time diagram remain
adjacent cells of the cube. So we transform $s$ and $t$ 
into elements of $\llbracket 0,n-1\rrbracket^d$ with a reflected $n$-ary code 
(also called reflected Gray-codes), see \cite{Flores,Knuth4}, this corresponds exactly 
to folding the time/space.

The vector $(t_0, \dots, t_{d-1}) \in \llbracket 0,n-1\rrbracket^d$
will represent the integer $t = \sum a_i n^i$ where 
\[
 a_i = \left\{
  \begin{array}{cl}
   t_i & \text{when }\sum_{j>i} t_j\text{ is even} \\
   (n-1)-t_i & \text{otherwise} \\
  \end{array}
  \right.
\]
see \cite[Formula (51)]{Knuth4}. The next positition is given by the
parity of the sum of the \emph{stronger weighed} digits. In order to tranform this in local
constraints, it will suffice to encode parities of positions in the cube with some layers 
$P^t$ and $P^s$ for time and space respectively.

Layer $P^t$ is made of several sublayers $P_i=\{0,1\}$, one for each direction $\vect{e_i}$, 
$2\leq i\leq d$. We now give the rules, recall that the boundaries of the cube are 
marked. Without loss of generality, we may suppose that there is a corner in position 
$\vect 0$. This corner has $0$ on all layers $P_i$. The rules are the following : if there 
the symbol $p$ at position $\vect z\in\llbracket 0,n-1\rrbracket^d$ on sublayer $P_i$, then 
there must be $p+1 \mod 2$ at position $\vect z + \vect{e_i}$ and $p$ at positions $\vect z
+ \vect{e_j}$, with $j\neq i$. These rules do not apply when the next position is at 
the boundary of the $2d$-cube. The layer $P^s$ is similar, except it is on dimensions $d+1$ to
$2d$. An example for a three dimensional folding can be
seen on figure~\ref{fig:strong:fold3d}.
\newcommand{\even}{0}
\newcommand{\odd}{1}

\begin{figure}
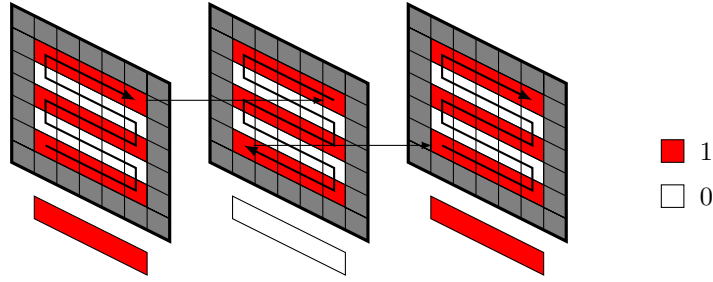

  \begin{center}
    \includepicture{folding}
  \end{center}
  \caption{Folding of a three dimensional cube, the red on the parity layer stands for $\even$ and
the white for $\odd$. The direction where to look for the next cell is given by the sum of the parities.}
  \label{fig:strong:fold3d}

\end{figure}

Now that we have encoded the Turing machines inside the $2d$-cubes, their size can only be
one of its accepted inputs. However, recall that the Turing machines encoded are 
nondeterministic, therefore we have to  synchronize the transitions between the different 
hypercubes, otherwise the periods may be multiples of accepted inputs. In order to do that, we
add a new component $N$, which is constituted of the following sublayers, whose alphabets are 
each a copy of the possible transitions of $M$:
\begin{itemize}
 \item The first sublayer, $\daleth$, will propagate the transition of a time-step to all 
 cells of the same time-step, that is to say the rest of the tape. A cell where a
 transition happens imposes the symbol on $\daleth$ to be the transition happening. The symbol
 propagates along space: if there is a symbol $l$ on $\daleth$ at position $\vect z\in\ZZ^d$, 
 then there must be exactly the same symbol at position $\vect z\pm \vect{e_i}$, 
 for $d+1\leq i\leq 2d$.
 \item We also have a set of sublayers $\beth_i$, one for each time dimension, $1\leq i\leq d$.
 Component $\beth_i$ has the following rules : the symbol on $\beth_i$ at position $\vect z$ is 
 identical to the one at position $\vect z + (\vect{e_i} + \vect{e_{d+1}})$. When the cell is on a 
 border of the $2d$-cube on dimension $1$, the symbols on $\beth_i$ and $\daleth$ have 
 to be identical. For the construction of lemma~\ref{lem:strong:struct}, this means that $S_i$
 contains a $\diagver$ or a $\diagcorner$. Figure~\ref{fig:strong:synchroNonDet} shows how this 
 synchronisation is done. 

\begin{figure}[h]
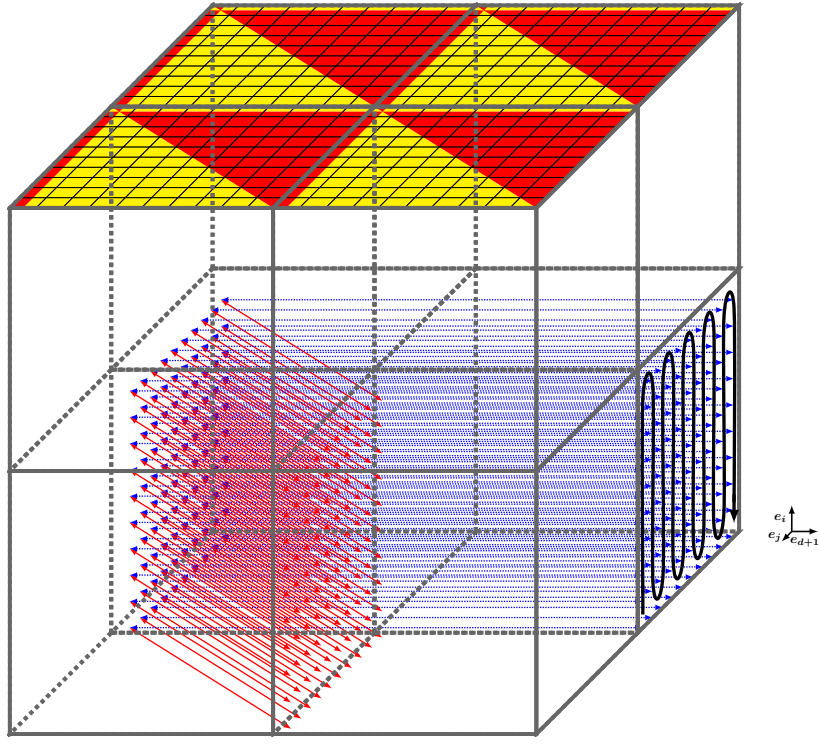

 \centering
 \scalebox{0.5}{\includepicture{synchroTransieiej}}
 \caption{\label{fig:strong:synchroNonDet} The Synchronization of nondeterministic transitions 
 between the $2d$-cubes : here is the projection in 3 dimensions  
 $\vec{e_{d+1}},\vec{e_i},\vec{e_j}$, with $1\leq i,j\leq d$. 
$\daleth$ is represented in blue and $\beth_j$ is represented 
 in red. On the top, we represented component $S_j$ of the construction of 
 lemma~\ref{lem:strong:struct}. Layers $\daleth$ et $\beth_j$ synchronise together 
 when we are on the side of a square on $S_j$.}
\end{figure}

\end{itemize}

As in lemma~\ref{lem:hori:reciproque}, $n+4$ is a (strong) period if and only if $n$ is 
accepted by $M$. Again, to obtain exactly $n$ it suffices to fatten the symbols on the borders.

\end{proof}

\section{Counting the number of periodic points in SFTs}\label{S:counting}

In theorem~\ref{thm:strong} we have seen that the sets of strong periods of SFTs
are exactly the sets of integers recognized nondeterministic polynomial time: we can go one 
step further and give a characterization of the sequences $p_n(X)_{n\in\NN^*}$ where $p_n(X)$ 
corresponds to the number of points in $X$ with period $n$. 

In the previous construction this number was related to the number of accepting paths of
the Turing machine : the number of possible aperiodic backgrounds possible for each square of $A'$
makes it hard to characterize. The following theorem is a consequence of forcing the aperiodic 
background of squares of the same size to be unique:

\begin{theorem}\label{thm:counting}
For any SFT $X$, let $\nbper{X}{}:{a}^*\to \NN$ be the function defined by $\nbper{X}{a^n}=p_n(X)/n^d$ where $d$ is
the dimension of $X$. We have then the following:
$$\left\{\nbper{X}{}\mid X\textrm{ an SFT}\right\}=\csharp$$
\end{theorem}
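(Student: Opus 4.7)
The equality splits into two inclusions and I would handle them in turn.

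For $\{\nbper{X}{}\mid X\textrm{ an SFT}\}\subseteq\csharp$, given an SFT $X$ of dimension $d$ on alphabet $\Sigma$, I would build a nondeterministic polynomial-time Turing machine $M_X$ whose number of accepting paths on input $a^n$ equals $p_n(X)/n^d$. A strongly periodic configuration of period $n$ is entirely determined by its restriction to the fundamental cube $\llbracket 0,n-1\rrbracket^d$, and its $\ZZ^d$-orbit has cardinality exactly $n^d$ since the stabilizer $n\ZZ^d$ has index $n^d$ in $\ZZ^d$; hence $\nbper{X}{a^n}$ is exactly the number of orbits of strongly periodic points of period $n$. The machine $M_X$ nondeterministically guesses an $n^d$-pattern $P$ (polynomially many symbols since $n$ is given in unary), then deterministically verifies in polynomial time: (i) $P$ contains no forbidden pattern of $X$, viewing $P$ as a configuration on the torus $\ZZ^d/n\ZZ^d$; (ii) $\Gamma_P=n\ZZ^d$ exactly, checked by ensuring that no non-zero vector $\vect v\in\llbracket 0,n-1\rrbracket^d$ is a period of $P$; (iii) $P$ is the lexicographically smallest of its $n^d$ translates, which is decided by explicit enumeration of translates. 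Check (iii) retains exactly one representative per orbit, so the accepting-path count matches $\nbper{X}{a^n}$.

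For the converse inclusion $\csharp\subseteq\{\nbper{X}{}\mid X\textrm{ an SFT}\}$, given $f\in\csharp$ computed by a polynomial-time NTM $M$ on unary inputs, I would adapt the constructions of Lemma~\ref{lem:strong:struct} and Lemma~\ref{lem:strong:reciproque} to produce a $2d$-dimensional SFT $X_M$ in which the number of orbits of strongly periodic points of period $n$ equals $f(a^n)$. The computation of $M$ is encoded inside the $2d$-cubes via the space-time folding of Lemma~\ref{lem:strong:reciproque}, so that each accepting path of $M$ on $a^n$ yields exactly one consistent filling of the computation layers of a cube. In order for the map from accepting paths to orbits to be bijective, every remaining degree of freedom in the construction must collapse: in particular, for each period $n$ the ``aperiodic background'' on the $A$ and $T$ layers must be uniquely determined.

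The main obstacle is precisely this background uniqueness, which is why section~\ref{S:counting} insists on the specific aperiodic SFT $W$ of Kari~\cite{KariNil}. Since $W$ is NW-deterministic, any finite rectangle of $W$ is uniquely determined by its top row and leftmost column. I would therefore enrich the $A$ layer with additional local rules forcing, at each corner $(\diagcorner,\dots,\diagcorner)$, a canonical choice of the first row and first column of the adjacent $n\times n$ face of the $2d$-cube; NW-determinism then propagates this uniquely to the entire face, and the synchronization layer $T$ transports the resulting background identically to every cube of the configuration. With this modification, for each $n$ there is exactly one valid background, the computation layer contributes $f(a^n)$ independent accepting fillings, and each of the resulting $f(a^n)$ orbits of strongly periodic points of period $n$ contains $n^{2d}$ configurations, giving $p_n(X_M)=n^{2d}f(a^n)$ and therefore $\nbper{X_M}{a^n}=f(a^n)$.
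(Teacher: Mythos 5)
Your proposal is correct and follows essentially the same route as the paper: the inclusion in $\csharp$ is obtained by guessing a toroidal $n^d$-pattern, checking validity and exact period, and keeping only the lexicographically least translate, while the converse uses Kari's NW-deterministic tileset with the top row and leftmost column of each cube face pinned down so that the aperiodic background is unique and the accepting paths of $M$ are in bijection with the orbits of strongly periodic points.
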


Note that the function $p_n(X)$ has been normalized by $n^d$, this is due to the fact that 
there are exactly $n^d$ shifted versions of a same strongly periodic point of period $n$.

\begin{proof}
To prove that $\csharp\subseteq\left\{\nbper{X}{}\mid X\text{ an SFT}\right\}$ we have to fix 
  the aperiodic background for the squares of our previous construction: the
  number of strongly periodic points of period $n$ with a corner at $\vect 0$ will 
  then be exactly the number of accepting paths of the Turing machine $M$ ran on $n$.

  In order to do that, instead of taking any NW-deterministic aperiodic SFT $W$, we 
  will take Kari's SFT \cite{KariNil} and show that it is easy to fix the top and leftmost
  borders of the squares: this will determine the rest of the square.

Kari's set of tiles is exactly the same as Robinson's \cite{Robinson}, see figure~\ref{fig:Robtiles},
except that it has one supplementary layer with diagonal arrows, see figure~\ref{fig:diagNW}. A 
valid tiling with this tileset can be seen on figure~\ref{fig:karitiling}, the top and left borders
determine the whole square. These borders are almost trivial and can be extended to any length,
still forcing an admissible pattern.

\begin{figure}[b]
 \begin{center}
  \begin{tikzpicture}[scale=0.4]
    \node at (-4,2) {$(a)$};
   \begin{scope}[shift={(-1,0)}]
    \cross
   \end{scope}
    \node at (2,2) {$(b)$};
  \begin{scope}[shift={(5,0)}]
   \armoo
  \end{scope}
  \begin{scope}[shift={(10,0)}]
    \armod
  \end{scope}
  \begin{scope}[shift={(15,0)}]
    \armdlo
  \end{scope}
  \begin{scope}[shift={(5,-5)}]
    \armdld
  \end{scope}
  \begin{scope}[shift={(10,-5)}]
    \armdro
  \end{scope}
  \begin{scope}[shift={(15,-5)}]
    \armdrd
  \end{scope}
\end{tikzpicture}
 \end{center}
 \caption{Robinson's aperiodic tileset : a) a cross and b) arms. The tileset also includes the
 rotates of these tiles. If the main arrow of an arm is horizontal (resp. vertical) we will call 
 it a  horizontal (resp. vertical) arm. Two tiles can be neighbors if and only if outgoing arrows
 match incoming ones.}
 \label{fig:Robtiles}
\end{figure}
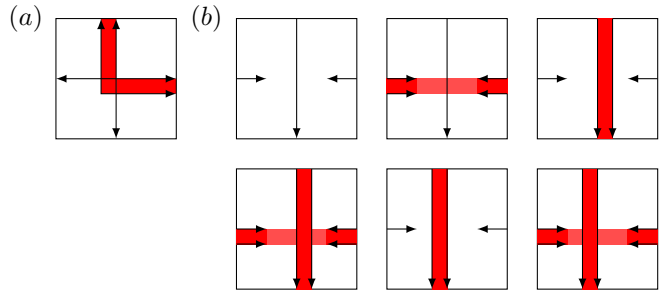

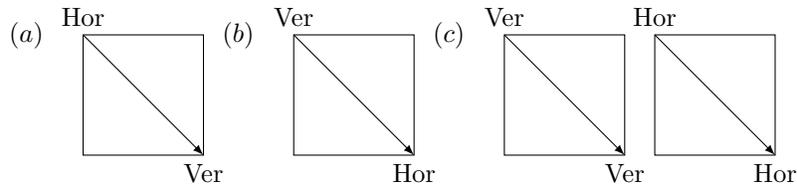
\begin{figure}[p]
 \begin{center}
  \begin{tikzpicture}[scale=0.4]
    \node[left] at (-2,4) {$(a)$};
  \begin{scope}[shift={(-1,0)}]
    \draw (0,0) rectangle +(4,4);
    \draw[-latex] (0,4) -- (4,0);
    \node[above] at (0,4) {Hor};
    \node[below] at (4,0) {Ver};
  \end{scope}
    \node[left] at (5,4) {$(b)$};
  \begin{scope}[shift={(6,0)}]
    \draw (0,0) rectangle +(4,4);
    \draw[-latex] (0,4) -- (4,0);
    \node[above] at (0,4) {Ver};
    \node[below] at (4,0) {Hor};
  \end{scope}
    \node[left] at (12,4) {$(c)$};
  \begin{scope}[shift={(13,0)}]
    \draw (0,0) rectangle +(4,4);
    \draw[-latex] (0,4) -- (4,0);
    \node[above] at (0,4) {Ver};
    \node[below] at (4,0) {Ver};
  \end{scope}
  \begin{scope}[shift={(18,0)}]
    \draw (0,0) rectangle +(4,4);
    \draw[-latex] (0,4) -- (4,0);
    \node[above] at (0,4) {Hor};
    \node[below] at (4,0) {Hor};
  \end{scope}
\end{tikzpicture}
 \end{center}
 \caption{Kari's addition to make it NW-deterministic : a new layer with diagonal arrows that
 have to match at their extremities. On horizontal arms only, we superimpose the tile (a) and on
 vertical ones only, the tile (b). The tiles (c) are superimposed on crosses only.}
 \label{fig:diagNW}
\end{figure}

\begin{figure}[b]
  \begin{center}
    \includepicture{kariNW}
  \end{center}
  \caption{A valid tiling by Kari's NW-deterministic tileset. The top and left borders determine
  the whole square. The diagonal arrows are not represented but can be easily deduced.
  }
  \label{fig:karitiling}
\end{figure}

Conversely $\left\{\nbper{X}{}\mid X\text{ an SFT}\right\}\subseteq\csharp$: checking whether 
$n\in\strongper{X}$ is $\NP_1$. The number of accepting paths of the Turing machine of the 
proof of lemma~\ref{lem:strong:direct} is exactly the number of periodic points of period $n$.
In order to normalize by $n^d$, we also force this machine to only keep the guessed ``fillings'' of 
the $n^d$ cube that are the smallest among their translates for the lexicographic order.
\end{proof}

\clearpage
\section{1-periodic points in SFTs of dimension 2}

We now go back to bidimensional SFTs, and focus on 1-periodicity. Recall that a point is $1$-periodic 
when it only has colinear vectors of periodicity. We prove theorem~\ref{thm:one}: the sets of 1-periods 
of SFTs are exactly the sets of vectors of $\NN\times\ZZ\setminus\{0\}$ that are in $\NSPACE_1(n)$.

\begin{lemma}
 Let $X$ be an SFT of dimension 2, then $\unaire{\oneper{X}}\in\NSPACE(n)$.
\end{lemma}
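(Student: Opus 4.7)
The plan is to apply Lemma~\ref{lem:onepergraph}, which characterizes 1-periodicity of $(m,n)$ in $X$ as the existence of a path in the finite graph $G_{m,n}(X)$ satisfying the four listed conditions. Since each vertex of this graph is an $(m,n)$-periodic pattern of support $S_{m,n}$, and is entirely determined by at most $4rm$ symbols (where $r$ is the radius of $X$), a single vertex can be stored in $O(m)$ space; moreover, the lemma guarantees that we may restrict attention to paths of length at most $3m|\Sigma|^{4rm}$, whose binary representation also fits in $O(m)$ bits.

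I build a nondeterministic machine that, on input $a^pb^q$ (the case $a^pc^q$ is symmetric, handled by reflecting $X$), sets $(m,n)=(p,q)$ and guesses the path edge by edge, retaining only a bounded number of vertices at any time. First guess $u_0$ and $u_1$, then verify that the composite pattern $u_0\ominus u_1$ of support $D_{m,n}$ is valid for $X$; this check inspects $O(m)$ cells and is done in $O(m)$ space. Then repeatedly guess successors $u_{\ell+1}$, verifying each transition (again an $O(m)$-space check) and incrementing an $O(m)$-bit step counter. At a nondeterministically chosen step, assert that the current vertex equals $u_0$ (fixing the index $i$), generate one more successor and check that it differs from $u_1$. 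Later, at another nondeterministically chosen step, copy the current vertex into a register $u_j$, keep generating successors, and at a still later step assert equality with $u_j$, yielding the pair $u_j=u_k$. Accept once all three milestones have been hit.

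The last condition of Lemma~\ref{lem:onepergraph} is enforced in parallel by maintaining a bit vector $B$ of length $\gcd(m,n)$, with one coordinate per divisor $d>1$ of $\gcd(m,n)$. At every vertex $u_\ell$ visited, and for each such $d$, test whether $u_\ell$ fails to be $(m/d,n/d)$-periodic, a local check on its $O(m)$ defining cells, and set $B[d]\leftarrow 1$ if so; accept only when every entry of $B$ is set. Total space: a constant number of patterns of size $O(m)$, an $O(m)$-bit step counter, and the bit vector $B$ of length $O(\gcd(m,n))=O(m+n)$, for an overall bound of $O(m+n)$, linear in the input length $|a^pb^q|=m+n$. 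The one technical subtlety is juggling the three cycle milestones while keeping only boundedly many vertices in memory, which is resolved by committing to each milestone on the fly via nondeterminism.
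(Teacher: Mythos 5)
Your proof is correct and follows essentially the same route as the paper's: both reduce the question to a nondeterministic linear-space walk in the finite graph $G_{m,n}(X)$ of Lemma~\ref{lem:onepergraph}, keeping only $O(1)$ vertices in memory (each determined by $O(rm)$ cells) together with an $O(m)$-bit step counter and a divisor bit-array for the least-period condition. The only differences are organizational — the paper guesses the two cycles in parallel with a difference flag and glue checks in both directions, whereas you guess a single path with nondeterministic milestones — plus the small point that, like the paper, you should state the reduction to $m \ge |n|$ (swapping axes if necessary) so that $G_{m,n}(X)$ is defined.
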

\begin{proof}
We have to construct a Turing machine $M$ which on input $\vect v=(m,n) \in\NN\times\ZZ\setminus\{0\}$ 
decides in space $\card{\vect{v}}$ if $\vect v$ is a 1-period.
 We have seen in subsection~\ref{Ss:perpoints} how to construct the graph 
$G_{\vect v}(X)$ and that $\vect v$ is a 1-period iff this graph contains two mutually
accessible cycles. This graph does not fit, however, in space $\card{\vect v}$. The algorithm that 
we will use is similar to the one introduced in lemma~\ref{lem:hori:direct}, it will just need to check
the existence of two different completions that can be ``glued'' together. We suppose without loss of 
generality that $m\geq|n|$ and that $r$ is the radius of the SFT $X$.

\begin{itemize}
 \item Initialize an array $P$ of size $\max(m,n)$ such that $P[i]=1$ for all $i$'s 
 and a boolean $D$ to false.
 \item Nondeterministically sizes $t_1,t_2\leq3m\card{\Sigma}^{4rm}$, the sizes of the two cycles.
 \item Parallely choose $2\times 2r$ horizontal lines $(l_i)_{0\leq i\leq r-1}$ and 
 $(l'_i)_{0\leq i \leq r-1}$ of length $m$, each sequence of lines forms a rectangle of $2r\times m$ 
 symbols of $\Sigma$.
 \item We now do these steps in parallel:
 \begin{itemize}
  \item For all $2r<i\leq t_1$, nondeterministically choose a line $l_1$ of length $m$ and check that 
  there is no  forbidden pattern in the lines $l_i,\dots,l_{i-2r-n}$. For this step, it suffices to 
  keep the $2r$ last lines $l_i,\dots,l_{i-2r}$ and $2r$ symbols on each of the preceeding $n$ lines.
  \item We also nondeterministically choose lines $l'_j$ for $2r<j<t_2$ at the same time and check if 
  they form an invalid pattern. It is important to choose line $l_i$ at the same time as line $l'_j$
  until $\min(t_1,t_2)$ is reached: whenever $l_i$ is different from $l'_i$, assign true to $D$.
  \item At each of these steps, check if there is a periodicity vector $(m',n')$ such that 
  $(m',n')\cdot k = (m,n)$, if it is not the case then $P[k]\leftarrow 0$.
  \item Once the last lines $l_{t_1}$ and $l'_{t_2}$ have been guessed, chck that there is no forbidden
  pattern on the patterns formed by the symbols remembered on lines:
  \begin{itemize}
   \item $l_{t_1-2r-n},\dots,l_{t_1},l_0,\dots,l_{2r}$
   \item $l'_{t_2-2r-n},\dots,l'_{t_2},l'_0,\dots,l'_{2r}$
   \item $l_{t_1-2r-n},\dots,l_{t_1},l'_0,\dots,l'_{2r}$
   \item $l'_{t_2-2r-n},\dots,l'_{t_2},l_0,\dots,l_{2r}$
  \end{itemize}
  This checks that the two cycles found are mutually accessible.
 \end{itemize}
 \item If any of the steps before failed or if there exists $k,m',n'<m$ such that 
  $k\cdot(m',n')=(m,n)$ and $P[k]=1$, or if $D$ is false then reject. Accept otherwise.
\end{itemize}
This algorithm only needs to keep $4rm+2rn$ symbols of $\Sigma$, as well as $P$ and $D$ in memory. 
\end{proof}

\begin{lemma}\label{lem:oneper:struct}
 For any constant $k\in\NN^*$, there exists a 2-dimensional SFT $Y_k$ such that any one periodic point of 
 period $(m,n)$ is formed of $m\times k^{m-1}$ rectangles with marked borders, as in 
 figure~\ref{fig:oneper:skeleton}. Furthermore, $Y_k$ admits as 1-periods any $(m,n)$ such that $0<n<m$.
\end{lemma}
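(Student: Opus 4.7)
The plan is to adapt the three-layer construction of Lemma~\ref{lem:hori:structure}. That construction forces horizontally periodic points into a strictly axis-aligned grid of $m \times k^{m-1}$ rectangles, but its synchronization layer $T$ rigidly copies the aperiodic $W$-background across every \breaker column, making any $(m,n)$-periodic point automatically $(m,0)$-periodic too. I would therefore keep the layers $A$ and $C_k$ but replace $T$ by a looser synchronization $T'$ that lets the aperiodic background and the counter phase \emph{slide} by a common vertical amount from one vertical strip to the next, so that the emerging period becomes a genuinely diagonal $(m,n)$ instead of a purely horizontal $(m,0)$.

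Concretely, I would set $Y_k = A \times C_k \times T'$ with $A$ unchanged (aperiodic East-deterministic SFT $W$ plus \breaker symbols, giving vertical lines of \breaker separating strips of $W$-content); with $C_k$ identical to that of Lemma~\ref{lem:hori:structure} but with the rule forbidding \hempty next to \horline dropped, so that the counter phase may differ from one strip to the next; and with $T'$ a new layer that, at each \breaker column, encodes a vertical shift $s$ and forces both (i) the $W$-symbol on the right of the \breaker at height $y$ to coincide with the $W$-symbol on the left of the \breaker at height $y-s$, and (ii) the vertical offset between the \horline positions of the two adjacent strips to equal $s$. The value $s$ is locally propagated along the \breaker column by a finite carry mechanism analogous to the one used for the counter in $C_k$.

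With this setup, any $1$-periodic point of period $(m,n)$ in $Y_k$ is forced by $A$ to have vertical \breaker lines at horizontal spacing $m$, by $C_k$ to have \horline markers cycling vertically with period $k^{m-1}$ inside each strip, and by $T'$ to have a common shift $s=n$ between consecutive strips, giving exactly the advertised brick-like $m \times k^{m-1}$ rectangle structure with marked borders. Conversely, for any $0 < n < m$ one builds a $1$-periodic point of period $(m,n)$: choose an aperiodic $W$-coloring of one strip, extend it to adjacent strips by the vertical shift $n$, initialize the counter in strip $i$ with phase $-in \bmod k^{m-1}$, and complete the remaining layers compatibly; the aperiodicity and East-determinism of $W$ then exclude every strictly smaller period. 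The main obstacle I expect is encoding the shift $s=n$ by finite local rules when $n$ is not bounded a~priori; this is resolved by exploiting the fact that the required shifts satisfy $n < m$, hence fit within one horizontal strip, so $s$ can be transmitted along a \breaker column by a finite-state device and checked cell by cell against the offset of the \horline markers it must match.
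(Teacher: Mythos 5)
Your overall strategy (relax the horizontal-period construction so that adjacent vertical strips are shifted copies of one another) is the paper's strategy too, but the proposal has concrete gaps in both the structure and the realization. First, by dropping the rule of $C_k$ that forbids the blank symbol next to the horizontal-line symbol, you destroy the only mechanism in Lemma~\ref{lem:hori:structure} that forces neighbouring columns to have the \emph{same width}: that adjacency rule is what locks the counters of adjacent strips in phase, and phase-locking is how equal widths are enforced. Once phases may differ you must restore width equality by other means; the paper introduces a component $R$ that projects every horizontal line left and right onto the neighbouring vertical lines and runs a second counter between consecutive projections. Second, your $T'$ is not implementable as stated: the constraint you want compares the symbol at height $y$ on one side of a vertical line with the symbol at height $y-s$ on the other side, where $s$ is unbounded, and this is not a local rule. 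The bound $n<m$ bounds the \emph{value} of the offset, not the vertical distance over which information must be carried, and the counter-style ``finite carry mechanism'' only compares data at equal heights. What is needed is a genuinely two-dimensional signal of the right slope; the paper uses slope-$1$ signals reflected on horizontal lines (in $R$) to pin the offset down, plus a dedicated layer $S$ with diagonal transmission arrows to copy the leftmost $W$-column of each strip into the next strip shifted by that offset, relying on East-determinism to propagate the rest.

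The most serious gap is that nothing in your construction guarantees the existence of points that are $1$-periodic rather than fully periodic. Excluding ``every strictly smaller period'' is not the issue; you must exclude a second, independent period such as $(0,p)$. Your existence argument starts from ``an aperiodic $W$-coloring of one strip'', but aperiodicity of $W$ is a property of full-plane configurations: its restriction to a strip of width $m$ can perfectly well be vertically periodic (and for Robinson/Kari-type tilesets this is exactly what happens), in which case your configuration has a rank-$2$ period lattice and $(m,n)$ is not a $1$-period. The paper handles this with a final layer: a two-colouring of the rectangles whose colour is transmitted along the direction $(m,n)$, together with the witness point in which exactly one diagonal line of rectangles carries the second colour. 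This is what breaks the vertical period and makes the second assertion of the lemma true; without some such device your construction does not prove it.
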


\begin{figure}[h]
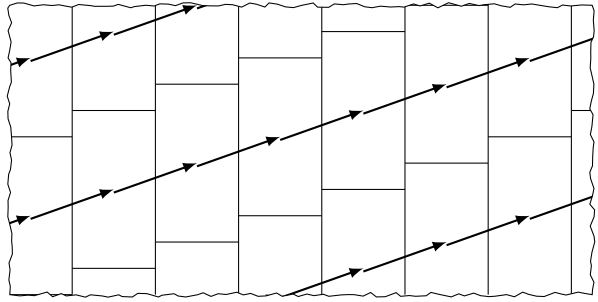

 \centering
 \includepicture{oneperrectangles}
 \caption{\label{fig:oneper:skeleton} The marked rectangles of the 1-periodic configurations.}
\end{figure}

\begin{proof}
 As in the preceeding proofs, the construction will be done in successive steps, by superimposition 
 of several layers $A,C_c,R,S$ :

\begin{itemize}
  \item Again, the first component $A$ is based on an aperiodic East-deterministic SFT $W$. The 
  alphabet of $A$ is   $\Sigma_A=\left(\Sigma_W\times\{\cdot,\tnoir\}\right)\cup
  \{\leftmost,\rightmost,\betweenrl,\betweenlr\}$, we call the symbols of $\Sigma_W$ 
  the \emph{white} symbols. Again, the other symbols allow to break periodicity, the rules are the
  following:
\begin{itemize}
 \item The rules between the symbols of $\Sigma_W$ remain unchanged.
 \item White symbols may have a \tnoir or another white symbol above, but two \tnoir may not be above/below
 each other.
 \item The constraints on whites ``are transmitted over the \tnoir symbols: removing a line of \tnoir and 
 gluing the white symbols above and below must not produce any forbidden pattern.
 \item The rules between the symbols $\{\tnoir,\leftmost,\rightmost,\betweenrl,\betweenlr\}$ are
 Wang rules.
 \item Only the white sides of symbols $\{\tnoir,\leftmost,\rightmost,\betweenrl,\betweenlr\}$ 
 may touch a white symbol.
\end{itemize}

\begin{figure}
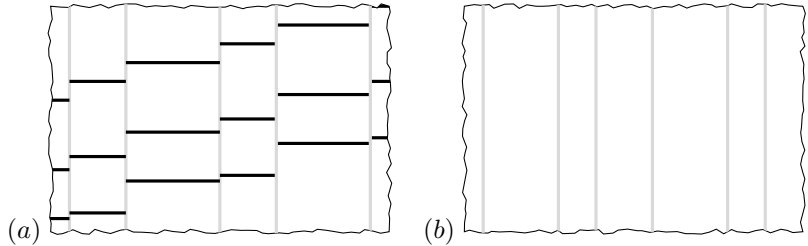

 \begin{center}
 \includepicture{onepercompA}
 \end{center}
 \caption{\label{fig:oneper:compoA}
 There are several possibilities for 1-periodic points of $A$: an infinity of vertical lines $(b)$, or an infinity of vertical lines separated by finite horizontal 
 lines $(a)$.}
\end{figure}

At this stage, the configurations with a periodicity vector necessarily have an infinity of 
vertical lines. Vertical lines may eventually be met by extremities of finite horizontal
lines, see figure~\ref{fig:oneper:compoA}. An infinity of horizontal lines leads to an aperiodic
configuration.

\item Component $C_k$ is a $k$-ary counter, exactly as in lemma~\ref{lem:hori:structure}. Now that this
counter has been added, the points having a periodicity vector necessarily contain an infinity of 
vertical lines, necessarily joined by horizontal lines at distance $k^{n-1}$ when they are distant by $n$.

\item In points with a periodicity vector, component $R$ forces columns formed by the nearest vertical 
lines to all be of the same width, and the offset between horizontal lines of two neighboring columns to always
be the same.
The first is done by first projecting each horizontal line to the left and to
the right until it reaches the next vertical line. Between projections on the same side, we again put a 
a counter $C_k$, this forces the sizes of the two columns to be identical.

To make the offset between horizontal lines constant for all columns, we project two signals of slope 1 on 
both the left and the right of the horizontal lines. These signals propagate normally on white symbols and
cannot touch a vertical line directly: they have to first pass on a horizontal line. As they do this, their 
direction changes and their slope becomes $-1$.  They must then touch the next vertical line at the exact same time as
the projection from two columns to the left/right, at which time the signal stops, see figure~\ref{fig:oneper:proj}.

\begin{figure}[h]
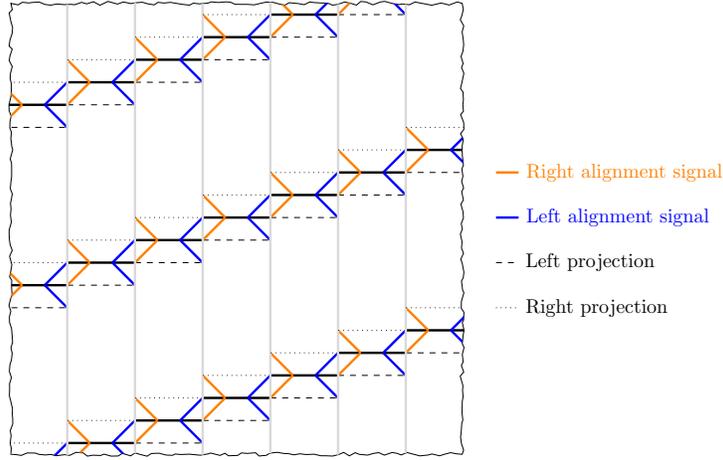

 \centering
 \scalebox{0.75}{\includepicture{onepercompoR}}
 
 \caption{\label{fig:oneper:proj}How component $R$ operates. Each horizontal line sends a projection to
 the next vertical line on each side. On each side, between projections, there is a counter $C_k$. 
 Another signal is send diagonally, with slope 1, it can only touch a horizontal line, at which point 
 it changes its direction and needs to rereach the same vertical line at the same time as the projection 
 from two columns to the left/right.
}
\end{figure}

At this stage, the points with a periodicity vector are necessarily constituted of rectangles of 
identical size, ``translated'' by a vector $(m',n')$ from one column to another. $(m',n')$ is not 
necessarily a periodicity vector, however there must exist $k$ such that $k(m',n')$ is.

\item We now arrange for the width $m$ of the rectangles and their offset $n$ to form the 
periodicity vector: we set the aperiodic background in the columns to be identical.
In order to do that, since $W$ is east-deterministic, we synchronize the first vertical line of white symbols of
a column with the first vertical line of white symbols of the next column, shifting it by the offset.
This is what the last component $S$ does. $S$ is constituted of two sublayers $W'$ whose alphabet is the same
as $W$'s and a second one $S'$ which allows to know which neighboring symbols of $W'$ have to be equal, 
figure~\ref{fig:oneper:transmidecalee} explains $S'$ in sufficient detail to infer the rules. As before, the 
symbols on $W'$ and $A$ to the right of a vertical line need to be identical on both layers.

\begin{figure}[h]
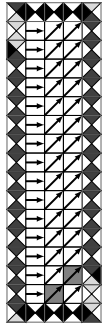

 \centering
 \scalebox{0.25}{\includepicture{transmidecalee}}
 \caption{\label{fig:oneper:transmidecalee} The symbols \protect\trsdiagg are superimposed only 
 to the left alignment signals, and on the bottom line, . The rules for $W'$ are determined by the arrows : a symbol on $W'$ 
 must be equal the symbol on $W'$ to which the arrow points on $S'$.
}
\end{figure}

\item The last step now forces the existence of 1-periodic points, all points with a vector of periodicity at
the end of the previous step were also periodic. To do this, it suffices to add two colors, yellow and blue, to the 
rectangles formed by the vertical lines and horizontal lines. The color of a rectangle is transmitted to its $(m,n)$ 
translate, where $m$ is the width of the rectangle and $n$ the offset between columns.
\end{itemize}

Let us now check that the 1-periodic points are necessarily of the expected shape. Let $x$ be 
a 1-periodic point of 1-period $(m,n)$ with $0<n<m$. Then $x$ necessarily has an infinity of vertical lines due
to $A$. Thanks to $C_k$, two nearest vertical lines at distance $m'$ are joined by horizontal
lines at distance $k^{m'-1}$. $R$ forces all nearest lines to always be spaced by the same distance $m'$. It also
forces the offset between horizontal lines of any two neighboring columns to be some $n'$. Furthermore, $S$ 
forces the aperiodic background of each column to be identical, up to the offset $n'$. So $x$ is periodic 
along $(m',n')$, which is the smallest vector of periodicity, and since $x$ is 1-periodic $(m',n')=(m,n)$.

Conversely, given $(m,n)$ it is easy to exhibit a point with 1-period $(m,n)$: a well formed configuration 
of rectangles of size $m\times k^{m-1}$ where only one of the alignments of rectangles of direction $(m,n)$
is blue and all others are yellow.
\end{proof}

\begin{lemma}\label{Per:lem:oneper:realisation}
 Let $L\in\NN\times\ZZ$ be a  language such that $\unaire{L}\in\NSPACE(n)$, then there exists an SFT $X$ such that
 $L=\oneper{X}$.
\end{lemma}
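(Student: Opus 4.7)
The plan is to follow the same template as lemmas~\ref{lem:hori:reciproque} and~\ref{lem:strong:reciproque}: take the skeleton SFT $Y_k$ of lemma~\ref{lem:oneper:struct} as a substrate, and add extra layers that embed, inside every rectangle, an accepting run of a nondeterministic Turing machine $M$ that recognizes $\unaire{L}$ in linear space. I first normalize $M$ so that on input $\unaire{(p,q)}$ of length $\ell=|p|+q+O(1)$ it uses exactly $\ell$ cells and runs in time at most $c^{\ell}$ for a fixed constant $c$, and choose $k\ge c^{2}$ so that the $m\times k^{m-1}$ rectangles of $Y_k$ have enough room for any such computation whenever $|n|<m$. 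By symmetry of the problem (horizontal reflection of the alphabet to handle negative first coordinate, and transposition to handle $q>p$), it suffices to realize those $(p,q)\in L$ satisfying $0<q<p$, which is exactly the regime covered by lemma~\ref{lem:oneper:struct}; the horizontal case $q=0$ is already handled by theorem~\ref{thm:hori}, and the final SFT is obtained as the disjoint union of these realizations.

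On top of $Y_k$ I superimpose three layers. The \emph{input layer} makes the pair $(m,n)$ locally readable at the bottom row of each rectangle: $m$ is already the horizontal distance between consecutive \breaker columns, and $n$ is extracted from the vertical offset between horizontal lines of adjacent columns by riding a unary counter along the diagonal slope-$1$ signal already present in component $R$ of lemma~\ref{lem:oneper:struct}. This produces a tape word of the form $a^{m}b^{n}$ of length at most $2m$ at the bottom of each rectangle. The \emph{computation layer} then encodes an accepting run of $M$ on this input, folding the space-time diagram into the rectangle as in lemma~\ref{lem:hori:reciproque} (using the reflected Gray-code folding from lemma~\ref{lem:strong:reciproque} if the aspect ratio demands it). Finally, a \emph{synchronization layer}, built as in lemma~\ref{lem:strong:reciproque}, propagates each nondeterministic transition along the vector $(m,n)$, so that the computations of any two rectangles related by translation by $(m,n)$ are forced to coincide and the whole configuration remains $(m,n)$-periodic.

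Correctness follows the now-standard pattern. If $(m,n)\in L$, start from a 1-periodic skeleton of period $(m,n)$ furnished by lemma~\ref{lem:oneper:struct}, fill every rectangle with one and the same accepting computation of $M$ on $\unaire{(m,n)}$, and use the colouring freedom of lemma~\ref{lem:oneper:struct} to distinguish a single alignment of rectangles, certifying the minimality of $(m,n)$; the resulting point has 1-period exactly $(m,n)$. Conversely, any 1-periodic point of period $(m,n)$ forces, by lemma~\ref{lem:oneper:struct}, the expected rectangle structure, and its computation layer witnesses an accepting run of $M$ on $\unaire{(m,n)}$, so $(m,n)\in L$. The main obstacle, compared to the horizontal and strong-period cases, is that the input of $M$ is now a pair only one of whose components (the width $m$) is geometrically explicit; encoding the second coordinate $n$ requires reading the slant of the skeleton in a purely local manner, and the fact that the resulting input still fits in space $O(m)$ hinges crucially on the cone condition $|n|<m$, which is why the other regimes must be realized by separate but symmetric constructions.
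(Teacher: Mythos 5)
Your proposal follows essentially the same route as the paper: the paper's own proof is four sentences long and does exactly what you describe --- take $M$ recognizing $\unaire{L}$ in space $O(n)$, hence time $c^n$, encode its runs in the $m\times k^{m-1}$ rectangles of the skeleton $Y_k$ from lemma~\ref{lem:oneper:struct}, synchronize the nondeterministic transitions along the translation $(m,n)$, and cover the remaining sign/order regimes ($0<m<n$, $m=0$, $n=0$, $n=m$, $n<0<m$) by symmetric constructions glued together as a disjoint union. You actually supply more detail than the paper on how the input word is made locally available at the bottom of each rectangle, which the paper glosses over entirely; also note that no Gray-code folding is needed here, since the rectangles of $Y_k$ are already $k^{m-1}$ tall and the computation fits unfolded exactly as in lemma~\ref{lem:hori:reciproque}.

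One small inaccuracy: you dispatch the case $q=0$ by appealing to theorem~\ref{thm:hori}, but horizontal periodicity and $1$-periodicity with period $(m,0)$ are different notions --- the points built in lemma~\ref{lem:hori:reciproque} are \emph{fully} periodic (they repeat vertically with period $k^{m-1}$ as well), so their period lattice strictly contains $(m,0)\ZZ$ and they are not $1$-periodic. This case needs the same treatment as the others, namely an extra device (such as the blue/yellow marking of a single alignment of rectangles from lemma~\ref{lem:oneper:struct}) that breaks the vertical periodicity and pins $\Gamma_x$ down to exactly $(m,0)\ZZ$. The paper itself only says ``almost identical constructions'' for these degenerate cases, so this is a minor repair rather than a flaw in your overall argument.
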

\begin{proof}
 Let $M$ be a Turing machine recognizing $L$ in time $c^{n}$ for some constant $c$, $n$ being the size of 
 the input. With lemma~\ref{lem:oneper:struct}, it suffices to encode $M$ in the rectangles of $Y_c$ 
 while taking care of synchronizing the nondeterministic transitions between the shifted rectangles, as 
 usual: this allows to realize the subset of $L$ of vectors $(m,n)$ with $0<n,m$. Almost identical 
 constructions allow to do the remaining cases: $0<m<n$, $m=0$,
$n=0$, $n=m$ and $n<0<m$ (there are then two subcases $\card n <\card m$ and $\card m < \card
n$). The disjoint union of all these SFTs form the end SFT realizing $L$ as its 1-periods.
\end{proof}

There are two ways to generalize theorem~\ref{thm:one} to higher dimensions, since in dimension 2 
1-periodicity is both ``$(d-1)$-periodicity'' and 1-periodicity. To obtain a similar characterization,
one would have to consider $(d-1)$-periodicity in higher dimensions because it would again 
come to the problem of whether some graph has two mutually accessible cycles. 1-periodicity 
on the other hand would this time not admit a characterization in terms of some complexity 
class but rather some computability class, as it would be undecidable to decide whether some 
vector is a 1-period or even a periodicity vector.

\section{Periodicity in sofic and effective subshifts}

For periodic and sofic subshifts it is not decidable anymore whether some pattern is admissible, this 
means in particular that it is not decidable anymore whether some $n$ is a period/strong period.

The first thing that we may see however is the following lemma:

\begin{lemma}\label{lem:eff:coRE}
Given an effective/sofic subshift $X$ of dimension $d$
and $d$ vectors $\vect{v_1},\dots,\vect{v_d}$ it is co-recursively enumerable to decide whether
there exists $x\in X$ such that $\Gamma_x = \vect{v_1}\ZZ\oplus\dots\oplus\vect{v_d}\ZZ$.
\end{lemma}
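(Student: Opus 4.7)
The plan is to reduce the existence question to a finite search over patterns on a fundamental domain, and then observe that membership of an explicit periodic configuration in an effective subshift is a co-r.e.\ property.

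First, because $\Lambda := \vect{v_1}\ZZ \oplus \dots \oplus \vect{v_d}\ZZ$ is a direct sum of $d$ rank-one sublattices of $\ZZ^d$, the vectors $\vect{v_i}$ are linearly independent, so $\Lambda$ has full rank $d$ and its index $N = [\ZZ^d : \Lambda] = |\det(\vect{v_1}\mid\dots\mid\vect{v_d})|$ is finite and computable. Choose a computable fundamental domain $F \subseteq \ZZ^d$ with $|F| = N$. Any configuration $c$ with $\Lambda \subseteq \Gamma_c$ is entirely determined by its restriction $p = c\!\restriction_F$, and conversely every pattern $p : F \to \Sigma$ extends uniquely to a $\Lambda$-periodic configuration $\tilde p$. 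Moreover $\Gamma_{\tilde p}$ is a sublattice of $\ZZ^d$ containing $\Lambda$, hence one of the finitely many lattices $\Lambda'$ with $\Lambda \subseteq \Lambda' \subseteq \ZZ^d$ (there are at most $2^N$ of them, corresponding to subgroups of $\ZZ^d/\Lambda$), and testing $\Gamma_{\tilde p} = \Lambda$ reduces to a finite combinatorial check on $p$. Let $\mathcal P$ be the resulting finite, computable set of patterns $p : F \to \Sigma$ with $\Gamma_{\tilde p} = \Lambda$. Then there exists $x \in X$ with $\Gamma_x = \Lambda$ if and only if $\tilde p \in X$ for some $p \in \mathcal P$.

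Second, assume $X = X_{\mathcal F}$ is effective, so $\mathcal F$ is recursively enumerable. For a fixed periodic configuration $\tilde p$, the condition "$\tilde p \notin X$" is r.e.: enumerate $\mathcal F$, and for each forbidden pattern $q$ decide whether $q$ appears in $\tilde p$, which requires inspecting only finitely many positions thanks to the $\Lambda$-periodicity of $\tilde p$; halt as soon as some $q$ is found to appear. Hence "$\tilde p \in X$" is co-r.e., and a finite disjunction of co-r.e.\ predicates over $p \in \mathcal P$ is again co-r.e. This settles the effective case.

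Third, for the sofic case, I would invoke the standard fact that every sofic subshift is effective. If $X = \pi(Y)$ with $Y$ an SFT and $\pi$ a block code, a pattern $p$ is forbidden in $X$ if and only if there is some $N$ such that no extension of $p$ to the cube of radius $N$ admits any valid $\pi$-preimage in $Y$, a decidable condition for each $N$; by compactness this captures exactly the forbidden patterns of $X$. Enumerating pairs $(p,N)$ yields an r.e.\ forbidding set, so the sofic case reduces to the effective one.

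The main (mild) obstacle is precisely the sofic case: one might be tempted to lift $x \in X$ to a preimage $y \in Y$ and argue about $\Gamma_y$, but $y$ need not be periodic even when $x$ is, since the factor map can collapse distinct orbits. Routing through the effectiveness of sofic subshifts sidesteps this, and the whole argument then rests on the essentially finite nature of the parameter $p$ on a fundamental domain of $\Lambda$.
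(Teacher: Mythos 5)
Your proof is correct and follows essentially the same route as the paper's: reduce to the finitely many fillings of a fundamental domain of the lattice, then observe that non-membership of each resulting periodic configuration is r.e.\ by enumerating the forbidden patterns. You are somewhat more careful than the paper in two places --- the finite check that $\Gamma_{\tilde p}$ equals $\Lambda$ exactly (rather than merely containing it), and the explicit reduction of the sofic case to the effective case via the standard fact that sofic subshifts are effective --- both of which the paper leaves implicit.
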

\begin{proof}
 It suffices to guess all fillings of the volume defined by the vectors $\vect{v_1},\dots,\vect{v_d}$,
 we need then to  check if the generated configurations contain some forbidden pattern. For this, we 
 enumerate the forbidden patterns and check whether they appear in every guessed filling (eventually 
 repeated). The machine stops iff all fillings generate a point containing a forbidden pattern.
\end{proof}

The realization counterpart of this lemma for effective subshifts is quite straightforward:

\begin{lemma}\label{lem:eff:realisation}
Let $L$ be a co-recursively enumerable set of rank $d$ sublattices of $\ZZ^d$, then there exists an 
effective subshift $X$ such that $L=\left\{\Gamma_x\mid x\in X\right\}$.
\end{lemma}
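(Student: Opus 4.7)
The plan is to realize $L$ via an effective subshift $X$ in three steps: build a base effective subshift $Y$ whose configurations are exactly the full-rank strongly periodic configurations of $\ZZ^d$ carrying an explicit local encoding of their period lattice; enumerate the lattices \emph{outside} $L$ using its co-r.e. description; and forbid the encodings of those lattices.

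For the base subshift $Y$ I would adapt the structural constructions of Lemmas~\ref{lem:strong:struct} and~\ref{lem:oneper:struct}. The goal is an effective subshift on a suitable alphabet such that: every $y \in Y$ is strongly periodic with $\Gamma_y$ of full rank $d$; for every rank-$d$ sublattice $\Lambda \subseteq \ZZ^d$ there exists $y_\Lambda \in Y$ with $\Gamma_{y_\Lambda} = \Lambda$; and the local rules of $Y$ force the geometric period of each $y$ to coincide with a canonical reduced basis written in binary inside every fundamental domain of $y$. In particular there is, for each $\Lambda$, a canonical finite pattern $\pi_\Lambda$ (the reduced-basis encoding placed inside its enclosing fundamental-domain markers) such that, for every $y \in Y$, $\pi_\Lambda$ appears in $y$ if and only if $\Gamma_y = \Lambda$. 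The ingredients are the standard ones: an aperiodic SFT backbone, counter layers marking the geometric period, a synchronization layer equalizing the aperiodic backgrounds between adjacent fundamental domains under the shifts of the basis vectors, and an additional binary-encoding layer for the reduced basis.

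Now let $M$ be a Turing machine enumerating the rank-$d$ sublattices $\Lambda_1, \Lambda_2, \dots$ outside $L$. Each time $M$ outputs some $\Lambda_i$, I compute the pattern $\pi_{\Lambda_i}$ and add it to a forbidden-pattern set $\mathcal{F}$; this $\mathcal{F}$ is recursively enumerable. Set $X = Y \cap X_{\mathcal{F}}$, which is effective, since effective subshifts are closed under intersection (take the union of the two r.e. forbidden-pattern sets). For $\Lambda \in L$, the pattern $\pi_\Lambda$ is never placed in $\mathcal{F}$, so $y_\Lambda$ survives in $X$ and contributes $\Lambda$ to $\{\Gamma_x : x \in X\}$. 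Conversely, any $x \in X \subseteq Y$ has full-rank $\Gamma_x$ and must contain $\pi_{\Gamma_x}$; if $\Gamma_x$ were outside $L$, then $M$ would eventually enumerate it, $\pi_{\Gamma_x}$ would enter $\mathcal{F}$, and $x$ would be forbidden. Hence $\{\Gamma_x : x \in X\} = L$.

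The hard part is the base subshift $Y$. The prior structural lemmas handle only the scalar periods $n\ZZ^d$ or $1$-periods in dimension~$2$; here one needs a genuinely $d$-dimensional skeleton that pins down a parallelotope fundamental domain generated by \emph{arbitrary} $d$ linearly independent integer vectors, synchronizes the aperiodic backgrounds between fundamental domains under the shifts induced by any such basis, and enforces that the binarily encoded basis always matches the geometric one. One further small subtlety is that for lattices of tiny index the encoding does not fit inside a single fundamental domain, but since there are only finitely many such exceptional lattices they can be handled as a finite union of fixed orbits added by hand. Once the base machinery is in place, the effective-subshift half of the argument (adding the $\pi_\Lambda$ with $\Lambda \notin L$ and intersecting) is essentially bookkeeping.
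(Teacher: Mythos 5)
Your proposal follows essentially the same route as the paper's (very terse) proof: an aperiodic base broken by marker symbols into fundamental domains with synchronized aperiodic backgrounds, together with a recursively enumerable family of forbidden patterns obtained by enumerating the lattices outside $L$ from its co-r.e.\ description. The only cosmetic difference is that you forbid an explicit basis-encoding pattern $\pi_\Lambda$ rather than the marked volumes themselves; note also that, like the paper, you gloss over the fact that by compactness $Y$ necessarily contains non-periodic limit configurations, so the stated goal that \emph{every} point of $Y$ be strongly periodic of full rank is unattainable and the equality $L=\{\Gamma_x\mid x\in X\}$ must be read as restricted to rank-$d$ period lattices.
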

\begin{proof}
 It is easy to construct such an SFT with the same methods as before: taking an aperiodic SFT and 
 breaking it with new symbols. The forbidden patterns are all volumes generated by the $d$-uples of 
 vectors of $L$, they also make sure that all such volumes have the same aperiodic background.
\end{proof}

For sofic subshifts we did not manage to obtain a construction allowing us to realize all co-r.e. sets
of rank $d$ lattices. But we managed to realize certain types of lattices, the ``rectangular'' 
lattices, that is to say the lattices of the form $\Gamma=n_1\ZZ\times\dots\times n_d\ZZ$ with 
$n_1,\dots,n_d\in\NN^*$. We will note $\recper{X}$ the sets of $d$-uples such that there 
exists $x\in X$ with $\Gamma_x=n_1\ZZ\times\dots\times n_d\ZZ$. 

\begin{lemma}\label{lem:sofic:realisation}
 Let $L\subseteq {\NN^*}^d$ be a co-recursively enumerable set, there exists a $d$-dimensional sofic 
 subshift $X$ such that $\recper{X}=L$.
\end{lemma}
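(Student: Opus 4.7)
The plan is to adapt the construction of lemma~\ref{lem:eff:realisation} by replacing the infinite set of forbidden patterns (which only produces an effective subshift) with a finite collection of local rules plus a hidden computational layer simulating the enumeration of $L^c$. Since $L^c$ is recursively enumerable, fix a Turing machine $M$ that halts on input $(n_1,\dots,n_d)$ if and only if $(n_1,\dots,n_d)\in L^c$.

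First I would build an SFT $Y$ in dimension $d$ whose \emph{visible} sublayer mimics the skeleton of the previous constructions: Kari's NW-deterministic aperiodic background broken by breakers, counters and synchronisation layers, so that any configuration whose lattice of periods is of the form $n_1\ZZ\times\dots\times n_d\ZZ$ necessarily decomposes into identical adjacent bricks of size $n_1\times\dots\times n_d$ with marked boundaries and synchronised aperiodic interiors. This is exactly the adaptation of lemma~\ref{lem:strong:struct} from cubes of side $p$ to rectangular bricks of dimensions $(n_1,\dots,n_d)$, obtained by using $d$ independent counters (one per coordinate) rather than a single one. On top of this skeleton I would superimpose a \emph{hidden} layer that simulates $M$ on input $(n_1,\dots,n_d)$, the input being readable locally from the sizes of the surrounding bricks, together with local rules that forbid $M$ from ever reaching a halting state. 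The key point, which is specific to the sofic (as opposed to SFT) setting, is that the hidden SFT pre-image of a periodic visible configuration need not itself be periodic: we only need a valid pre-image to exist. Using the now-classical multi-scale self-similar simulation technique in the spirit of Hochman-Meyerovitch~\cite{HochMey} and Aubrun-Sablik~\cite{AubrunS09} one can pack an arbitrarily long computation of $M$ into the hidden layer while the visible layer remains periodic with the prescribed rectangular period, provided $d\geq 2$.

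Setting $X$ to be the factor of $Y$ onto the visible layer then yields the desired $d$-dimensional sofic subshift: if $(n_1,\dots,n_d)\in L$ the hidden simulation of $M$ on that input never halts, so it extends to a valid global configuration of $Y$ whose projection has lattice of periods exactly $n_1\ZZ\times\dots\times n_d\ZZ$; conversely if $(n_1,\dots,n_d)\notin L$, $M$ eventually halts on this input, triggering a forbidden pattern in every pre-image and ruling out the corresponding visible configuration. Hence $\recper{X}=L$. The main obstacle is the design of the hidden multi-scale computation layer, so that an unbounded nondeterministic computation on input $(n_1,\dots,n_d)$ is realizable by finitely many local rules while the visible layer stays strictly periodic with the prescribed rectangular period; this is the same technical device that underlies every known proof that effective subshifts arise as factors of SFTs, and constitutes by far the most delicate part of the argument, the visible skeleton and the interface between the two parts being routine adaptations of constructions already developed in the paper.
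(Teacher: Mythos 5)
Your overall strategy --- a periodic visible skeleton of $n_1\times\dots\times n_d$ bricks, a non-periodic hidden layer running the co-enumeration of $L$ forever, and the observation that a sofic preimage need not inherit the periodicity of its image --- is the right idea, and it is indeed the idea behind the paper's proof. But the way you propose to realize it leaves the central step as a genuine gap. You defer the hidden layer to ``the multi-scale self-similar simulation technique,'' acknowledging it is ``by far the most delicate part''; as stated, however, there is no result you can cite that directly gives a hidden unbounded computation of $M$ on an input read off from the geometry of the surrounding bricks, over a visible layer constrained to be periodic with exactly that brick lattice. The paper packages exactly this difficulty into a citable black box: Theorem~\ref{thm:AS} (Aubrun--Sablik, Durand--Romashchenko--Shen), which says that a one-dimensional effective subshift lifted constantly along a new direction becomes sofic. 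The brick dimensions are then encoded not geometrically but as a one-dimensional effective subshift of concatenations $\dots b^m a^{n-m} b^m a^{n-m}\dots$ with $(n,m)\in L$ (effective precisely because $L$ is co-r.e.), whose two-dimensional lift is sofic; a Wang-tile component $R$ converts the letters $a,b$ into the heights of the rectangles, and two further applications of Theorem~\ref{thm:AS} (components $V$ and $H$) synchronize the aperiodic interiors of the bricks. Reformulating your hidden computation as the realization of explicit one-dimensional effective subshifts is not a cosmetic difference: it is what turns ``use the standard technique'' into a proof.

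Two secondary points. First, your skeleton cannot be obtained from Lemma~\ref{lem:strong:struct} by ``$d$ independent counters'': the counters of Lemma~\ref{lem:hori:structure} enforce a functional dependence (height equal to $k^{\mathrm{width}-1}$), and the square-forcing layer of Lemma~\ref{lem:strong:struct} enforces equal side lengths, so neither mechanism yields bricks with independently chosen dimensions $n_1,\dots,n_d$; in the paper both dimensions of a rectangle are read off from the single word $b^m a^{n-m}$. Second, the skeleton admits degenerate periodic configurations (for instance infinitely many horizontal marker lines with no vertical \breaker at all), which would contribute spurious period lattices to $\recper{X}$; the paper kills these by embedding an aperiodic one-dimensional effective subshift in the component $H$ and, crucially, including $H$ among the projected layers. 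Your proposal projects only the visible skeleton and does not address this.
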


In order to prove the lemma, we will need the following result:

\begin{theorem}[\citet{AS2010,DRS2010}]\label{thm:AS}
 Let $X$ be a $d$ dimensional subshift, and $X'$ be the $d+1$ dimension subshift 
 obtained by adding a dimension to $X$ and keeping symbols identical on it. $X$ 
is an effective subshift if and only if $X'$ is sofic.
\end{theorem}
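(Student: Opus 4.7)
The plan is to realize $L$ as the rectangular periods of a sofic subshift by adapting the construction of Lemma~\ref{lem:eff:realisation} and promoting its effective verification layer to a sofic one via Theorem~\ref{thm:AS}. The rectangularity of the lattices is essential: each element of $L$ is a $d$-tuple $(n_1, \dots, n_d)$, which can be handled via an auxiliary $(d-1)$-dimensional effective subshift, whereas arbitrary rank-$d$ lattices would not admit such a reduction (this is precisely why the authors leave the general case open).

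First, since $L$ is co-recursively enumerable, I would build a $(d-1)$-dimensional effective subshift $Y$ whose valid configurations locally encode a $d$-tuple $(n_1, \dots, n_d)$ and such that a valid configuration with a given tuple exists if and only if $(n_1, \dots, n_d) \in L$. This is done by embedding inside $Y$ a simulation of a Turing machine enumerating $\overline{L}$, together with local consistency rules that invalidate any configuration whose encoded tuple ever gets enumerated. By Theorem~\ref{thm:AS}, the $d$-dimensional lift $Y'$ of $Y$ (symbols constant along $\vec e_d$) is sofic.

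Second, I would construct a $d$-dimensional SFT $B$ enforcing the rectangular block structure of size $n_1 \times \dots \times n_d$ with marked boundaries, in the style of Lemmas~\ref{lem:hori:structure}, \ref{lem:strong:struct}, and \ref{lem:oneper:struct}, with the additional constraint that the block sizes read locally by $B$ match the tuple encoded in $Y'$. Setting $X = Y' \cap B$, the subshift $X$ is sofic (sofic shifts being closed under intersection with SFTs). For any $(n_1, \dots, n_d) \in L$, starting from a valid $Y$-configuration encoding this tuple and combining its lift to $Y'$ with an appropriate $B$-pattern yields a point of $X$ with $\Gamma_x = n_1\ZZ \times \dots \times n_d\ZZ$. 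Conversely, any rectangular-periodic point of $X$ of period $(n_1, \dots, n_d)$ forces the $Y'$-layer to contain a valid encoding of that tuple, so $(n_1, \dots, n_d) \in L$.

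The main obstacle is the handling of the direction $\vec e_d$. Because $Y'$ is constant along $\vec e_d$, the value $n_d$ cannot be ``read off'' from how $Y'$ extends in that direction, and must therefore be encoded within the $(d-1)$-dimensional pattern of $Y$ itself (for example via a dedicated counter layer in one of the remaining directions). The SFT $B$ then synchronizes its block period in direction $\vec e_d$ with this encoded value, rejecting configurations where the two disagree. A secondary technical issue is forcing the rectangular period to be \emph{exactly} $(n_1, \dots, n_d)$ and not a strict divisor thereof; this is handled in the standard way used throughout the paper, by planting an aperiodic pattern inside each block so that any smaller putative period would contradict aperiodicity.
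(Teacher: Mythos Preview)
Your proposal does not address the stated theorem. Theorem~\ref{thm:AS} is a result quoted from \cite{AS2010,DRS2010} and is stated in the paper \emph{without proof}; it is used purely as a black box. What you have written is instead a proof sketch for Lemma~\ref{lem:sofic:realisation}, which \emph{invokes} Theorem~\ref{thm:AS}. There is therefore nothing in the paper to compare your argument against for the statement actually given.

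If one reads your proposal as an attempt at Lemma~\ref{lem:sofic:realisation}, the high-level strategy is indeed the paper's: lift a $(d{-}1)$-dimensional effective layer certifying $(n_1,\dots,n_d)\in L$ to a $d$-dimensional sofic layer via Theorem~\ref{thm:AS}, and couple it with an SFT enforcing the rectangular block structure and the aperiodic background. The paper, however, does not simply take $X=Y'\cap B$. It builds several explicit components ($A$, $P$, $R$, $V$, $H$) and then \emph{projects} onto only two of them ($A$ and $H$) to obtain the final sofic subshift. This projection is essential: the lifted effective layer $P$ is constant along the added direction, so keeping it visible would force unwanted small periods in that direction, and the auxiliary synchronization layers would likewise pollute the period lattice. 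Your sketch keeps $Y'$ as a visible layer of $X$, which creates exactly this problem; you note that $n_d$ must be encoded inside the $(d{-}1)$-dimensional pattern and read by $B$, but you do not explain why the resulting point nonetheless has $\Gamma_x$ equal to $n_1\ZZ\times\cdots\times n_d\ZZ$ rather than a strictly finer lattice. The paper also handles a degenerate family of configurations (those with only horizontal break lines and no vertical ones) by inserting a one-dimensional aperiodic effective subshift into component $H$ and keeping $H$ in the projection; your sketch has no counterpart to this step.
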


This result will be used to generate some sofic subshift by only giving the 
lower dimensional effective subshift corresponding to it.

\begin{proof}[Proof of lemma~\ref{lem:sofic:realisation}]
 We give the proof for dimension $2$, one can easily generalize it for higher dimensions. We have
 a co-r.e. language $L\subseteq {\NN^*}^2$ and we want to construct a sofic subshift $X$ such that 
 $\recper{X}=L$. We now detail the construction realizing the subset $\left\{(n_1,n_2)\mid(n_1,n_2)
 \in L\text{ and } n_1>n_2\right\}$ of $L$. The construction will once again use an aperiodic base
 that will be broken by new symbols, except this time the shape will be determined by layers that
 we put on top and that will not be projected to obtain the sofic subshift.
 
 We describe now the SFT that will be projected:
\begin{itemize} 
 \item The first component $A$ consists of an aperiodic NW-deterministic SFT 
$W$, the white symbols, to which
 we add three symbols \horline, \corner and \breaker.

 The rules are almost the same as for $A'$ of component $A$ of lemma~\ref{lem:strong:struct}: above
 a \breaker or a \corner there may only be a \breaker or a \corner. On the left of a \horline or a 
 \corner, there may only be a \corner or a \horline. For this layer to be periodic, a point needs 
 to have an infinity of lines of \breaker, \horline or \corner. If there is an infinity of lines
 of \breaker and \corner, then they cross with the \corner symbols and do not necessarily form
 rectangles of the same size.

\item The second component $P$ will give the information on the dimensions 
that the rectangles can have and force their width to be identical. To do this, 
we use theorem~\ref{thm:AS} in order to obtain identical lines composed of 
concatenations of words $b^ma^{n-m}$, with $n>m$ and $(n,m)\in L$:
\[
 \dots b^ma^{n-m}b^ma^{n-m}b^ma^{n-m}\dots
\]

These lines can be generated by an effective subshift of dimension $1$, which 
by compactness will contain the uniform lines $\cdots aaa\cdots$ and $\cdots 
bbb\cdots$. So using theorem~\ref{thm:AS} we have an SFT that projects onto the 
shift with identical lines, we may assume that it has a component equal to 
these lines.

We now give the superimposition rules of the symbols: a $b$ appearing after an 
$a$ needs to be superimposed to a \breaker or a \corner, and conversely.

At this stage, component $A$ can only be periodic in two cases: either 
component $P$ is uniform and contains only $a$'s or $b$'s and it has 
an infinity of horizontal lines of \horline, or it contains an infinity of 
vertical lines of \breaker distant by $n$, with $n$ such that there is some 
$m<n$ with $(n,m)\in L$.

\item Component $R$ forces the vertical lines of \horline to appear whenever 
there is an infinity of vertical lines of \breaker. $R$ will be formed of the
following Wang tiles: 
$\left\{\rectdiag,\rectvertleft,\rectvertright,\rectcornerleft, 
\rectcornerright,
\recthoriz,\rectoutside,\rectleft,\rectright\right\}$. 
The superimposition rules are the following:
 \begin{itemize}
  \item \rectvertleft can only be superimposed to \breaker.
  \item \rectcornerleft can only be superimposed to \corner.
  \item \rectcornerright and \rectvertright can only be superimposed to an $a$ 
on the right of a $b$.
  \item \recthoriz can only be superimposed to \horline.
  \item \rectleft,\rectright,\rectdiag can only be superimposed to $b$.
  \item \rectoutside can only be superimposed to $a$.
 \end{itemize}
\begin{figure}
 \centering
 \includepicture{effectifToRect}
 \caption{\label{fig:effectifToRect}
 How components $A$, $P$ and $R$ are superimposed: component $P$ is on the top, 
component $R$ is below and on the bottom sits component $A$. The vertical lines 
of \protect\breaker are superimposed to the first symbols $b$ appearing after 
an $a$. Component $R$ allows to adjust the height of the rectangles according 
to the number of $b$'s.}
\end{figure}

Figure~\ref{fig:effectifToRect} shows how component $R$ is superimposed to 
the first components and shapes the rectangles.

The possible cases now when component $A$ is periodic are the following:
there are rectangles, all of the same size $n\times m$, with $n>m$ and 
$(n,m)\in L$, or there is an infinity of horizontal lines of \horline.
In the case when it is formed of rectangles, the vectors $(n,0)$ and $(0,m)$
do not necessarily generate $\Gamma_x$, as the aperiodic background of the
rectangles are not necessarily the same.

\item Component $V$ forces the first column of white symbols of the rectangles 
on component $A$ to be identical. The one-dimensional subshift on the alphabet 
$\Sigma_W\cup\{\#\}$ whose points are of the form:
\[ 
  \cdots \#w_1\dots w_k\#w_1\dots w_k\# w_1\dots w_k\#\cdots
\]
that is to say the periodic points formed by the repetition of some word on
$\Sigma_W$ where each occurence is separated by a $\#$. Again, using 
theorem~\ref{thm:AS} we may use the subshift formed of these lines vertically.
Now, $\#$ can only be superimposed to a \corner or a \horline and no other 
symbol can be superimposed to them. We also force the symbols of $W$ exactly on 
the right of a \breaker to be identical to the ones on $V$. 

\item $H$ is the last component, and is similar to $V$. Consider the effective 
subshift formed by the following points:
\[
  \cdots \#(w_1,t_1)\dots (w_k,t_k)\#(w_1,t_1)\dots (w_k,t_k)\# (w_1,t_1)\dots
(w_k,t_k)\#\cdots 
\]
Where the $w_i$'s are symbols of $\Sigma_W$ and the $t_i$'s are symbols of some 
aperiodic effective subshift $M$ of dimension 1: for example the subshift 
where we forbid the words $awawa$ with $a$ a letter and $w$ a word, which is 
aperiodic and non-empty, see~\citet{Thue,Morse}. 
We keep the rules of $M$ for the words $t_1\dots t_k$ that are not cut by some 
$\#$. So if no $\#$ appears in a point, then the point cannot be periodic. We 
can now use theorem~\ref{thm:AS} and use the subshift formed of these lines 
horizontally.
 
We force the $\#$ symbols to be superimposed to \corner's or \breaker's only
and the symbols of component $A$ appearing on just below a \horline to be 
identical as the ones on $H$.

At this point, if we look the possible periodic points for the restriction to 
component $A$, we have the following:
\begin{itemize}
 \item The points $x$ formed of $n\times m$ rectangles with $n>m$ and $(n,m)\in 
L$, they are periodic and $\Gamma_x=n\ZZ\times m\ZZ$ 
 \item The points formed by an infinity of horizontal lines of \horline, 
without any vertical line of \breaker.
\end{itemize}
\end{itemize}

The end subshift is obtained by projecting component $A$ and $H$ only: 
projecting $H$ allows to get rid of the second class of points, as they not 
have any periodicity vector of the form $(0,m)$, because of the aperiodic 
one-dimensional subshift.
\end{proof}

So we have the following characterizations for sofic and effective subshifts:

\begin{theorem}\label{thm:eff}
 The sets of rank $d$ periodicity lattices of effective subshifts of dimension 
$d$ are exactly the co-r.e. sets of sublattices of $\ZZ^d$.
\end{theorem}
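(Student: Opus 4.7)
The plan is to obtain Theorem~\ref{thm:eff} as an immediate consequence of the two preceding lemmas, after fixing a convenient encoding of full-rank sublattices of $\ZZ^d$. Every such lattice has a unique basis in Hermite Normal Form, namely a $d \times d$ upper-triangular integer matrix with positive diagonal entries, so the family of rank-$d$ sublattices of $\ZZ^d$ can be effectively enumerated. This is what I will mean by ``(co-)r.e.\ set of sublattices'' in both directions of the proof.

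For the necessity direction, I would fix an effective subshift $X$ of dimension $d$ and consider $\mathcal{L}_d(X) = \{\Gamma_x \mid x \in X,\ \mathrm{rank}(\Gamma_x) = d\}$. Using the Hermite enumeration I list all candidate lattices $\Lambda = \vect{v_1}\ZZ \oplus \dots \oplus \vect{v_d}\ZZ$ and, for each one, launch in parallel the semi-algorithm provided by Lemma~\ref{lem:eff:coRE}, which halts exactly when no $x \in X$ satisfies $\Gamma_x = \Lambda$. Dovetailing these procedures enumerates the complement of $\mathcal{L}_d(X)$ inside the set of full-rank lattices, which shows that $\mathcal{L}_d(X)$ is co-r.e.

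For the sufficiency direction, given any co-r.e.\ set $L$ of rank-$d$ sublattices of $\ZZ^d$, Lemma~\ref{lem:eff:realisation} directly provides an effective subshift $X$ with $\{\Gamma_x \mid x \in X\} = L$. Since every element of $L$ already has rank $d$, the construction in that lemma yields only points whose stabilizer is itself a rank-$d$ lattice, so the set of rank-$d$ periodicity lattices of $X$ coincides with $L$, as required.

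The theorem is thus essentially a packaging result: all the real content is already carried by Lemma~\ref{lem:eff:coRE} and Lemma~\ref{lem:eff:realisation}. The only point requiring care, and the mildest of obstacles, is to reconcile the representation of lattices used in the two lemmas so that the ``co-r.e.''\ notion agrees on both sides; once the Hermite Normal Form convention above is fixed, nothing beyond citing the two lemmas is needed.
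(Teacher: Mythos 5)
Your proof is correct and follows exactly the route the paper intends: Theorem~\ref{thm:eff} is stated as an immediate combination of Lemma~\ref{lem:eff:coRE} (for the co-r.e.\ upper bound, via dovetailing over an effective enumeration of full-rank lattices) and Lemma~\ref{lem:eff:realisation} (for the realization). The Hermite Normal Form bookkeeping you add is a harmless and reasonable way to make the encoding of lattices precise, which the paper leaves implicit.
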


\begin{theorem}\label{thm:sofic}
 For any $L\subseteq {\NN^*}^d$, there exists a $d$-dimensional sofic subshift $X$ such 
that $\recper{X}=L$ if and only if $L$ is co-r.e.
\end{theorem}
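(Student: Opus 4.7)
The plan is to derive Theorem~\ref{thm:sofic} as a direct corollary of the two lemmas already established in this section. The structure is entirely analogous to the SFT case (Theorems~\ref{thm:strong}--\ref{thm:hori}): one lemma handles the "upper bound" in the complexity/computability hierarchy, the other handles realization.

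For the forward direction (necessity), I would take a sofic $X$ with $\recper{X}=L$ and specialize Lemma~\ref{lem:eff:coRE} to the rectangular case. Concretely, given an input $(n_1,\dots,n_d) \in {\NN^*}^d$, set $\vect{v_i} = n_i \vect{e_i}$; then
\[
  \vect{v_1}\ZZ \oplus \dots \oplus \vect{v_d}\ZZ = n_1\ZZ \times \dots \times n_d\ZZ,
\]
so Lemma~\ref{lem:eff:coRE} furnishes a co-recursive enumeration procedure that halts on input $(n_1,\dots,n_d)$ precisely when no $x \in X$ realizes this lattice as $\Gamma_x$. Since the procedure given in Lemma~\ref{lem:eff:coRE} is uniform in the input vectors, this shows $L$ is co-r.e.

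The backward direction (sufficiency) is exactly the content of Lemma~\ref{lem:sofic:realisation}: given any co-r.e. set $L \subseteq {\NN^*}^d$, that lemma explicitly constructs a $d$-dimensional sofic subshift $X$ with $\recper{X}=L$.

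There is essentially no obstacle to overcome here, since both halves of the equivalence have already been proved. The only point worth checking carefully is that the application of Lemma~\ref{lem:eff:coRE} genuinely yields a co-r.e. set (not merely pointwise co-r.e.), but this is immediate because the lemma's enumeration procedure takes the lattice generators as input in a uniform way. Thus no further construction is required, and the theorem follows by combining the two lemmas.
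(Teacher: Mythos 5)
Your proposal is correct and matches the paper exactly: the paper states Theorem~\ref{thm:sofic} as an immediate consequence of Lemma~\ref{lem:eff:coRE} (specialized to rectangular lattices, giving the co-r.e.\ upper bound) and Lemma~\ref{lem:sofic:realisation} (giving the realization), with no further argument. Your observation about uniformity of the procedure in Lemma~\ref{lem:eff:coRE} is the right point to check, and it holds.
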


\printbibliography
\end{document}